\documentclass[conference]{IEEEtran}

\usepackage{amsmath,amssymb,amsthm,mathtools}
\usepackage{booktabs}
\usepackage{array}
\usepackage{enumitem}
\usepackage{xspace}
\usepackage{microtype}
\usepackage{url}
\usepackage[hidelinks]{hyperref}
\usepackage[nameinlink,noabbrev]{cleveref}
\usepackage{tikz}
\usetikzlibrary{arrows.meta,positioning,fit,backgrounds}

\newif\ifarxiv
\arxivfalse
\arxivtrue

\newtheorem{definition}{Definition}
\newtheorem{theorem}{Theorem}
\newtheorem{lemma}{Lemma}
\newtheorem{proposition}{Proposition}
\newtheorem{corollary}{Corollary}

\newcommand{\Fresh}{\ensuremath{\mathsf{new}}\xspace}
\newcommand{\Alias}{\ensuremath{\mathsf{reuse}}\xspace}
\newcommand{\Use}{\ensuremath{\mathsf{Use}}\xspace}
\newcommand{\Dispatch}{\ensuremath{\mathsf{Dispatch}}\xspace}
\newcommand{\Retry}{\ensuremath{\mathsf{Retry}}\xspace}
\newcommand{\Resolve}{\ensuremath{\mathsf{Resolve}}\xspace}

\newcommand{\Lin}{\ensuremath{\mathsf{Lin}}\xspace}
\newcommand{\Partial}{\ensuremath{\mathord{\downarrow}}\xspace}
\newcommand{\sem}[1]{\mathopen{[\![}#1\mathclose{]\!]}}

\title{When Can Agents Safely Checkpoint, Fork, Restore, and Merge?\\
Exact Checking for Execution Edits}

\ifarxiv
\author{\IEEEauthorblockN{Yusheng Zheng\IEEEauthorrefmark{1},
Xiaoyu Song\IEEEauthorrefmark{2},
Yanpeng Hu\IEEEauthorrefmark{3},
Lebin Cheng\IEEEauthorrefmark{4},
Yuxi Huang\IEEEauthorrefmark{5}, and
Wei Zhang\IEEEauthorrefmark{6}}
\IEEEauthorblockA{\IEEEauthorrefmark{1}University of California, Santa Cruz\\
\IEEEauthorrefmark{2}University of Texas at Dallas\\
\IEEEauthorrefmark{3}ShanghaiTech University\\
\IEEEauthorrefmark{4}RailXia \quad
\IEEEauthorrefmark{5}Eunomia Labs \quad
\IEEEauthorrefmark{6}University of Connecticut}}
\else
\author{}
\fi

\begin{document}
\maketitle

\begin{abstract}
Agent runtimes can Checkpoint an execution, Fork it, Restore a checkpoint, or Merge branches without restarting a task.
We call these operations \emph{execution edits}, with Checkpoint recording the current execution for later use and Fork, Restore, and Merge changing what the Agent will do next.
An execution edit cannot undo an earlier authorization or a tool request already sent.
An unsafe edit can therefore authorize the same tool action twice, discard a result the task still requires, or conflict with a call that began before the edit.
The Agent is untrusted, so the runtime uses its execution record to determine which past actions an edit must account for and which required results it must preserve to keep the subsequent execution safe.
Yet existing Agent systems support such operations without deriving what each edit must preserve from the running execution, whereas prior methods for computing safe behavior take that requirement as input.
We give an algorithm that decides exactly whether an edit is safe.
It returns all safe ways to continue, or proves that none exists.
To make this decision, the algorithm lists every way the task can finish without violating policy.
It removes any way that could make a still-required result impossible to finish later.
If none remain, it returns a checkable proof that no safe implementation exists.
Otherwise, the remaining ways describe exactly what the runtime may allow.
Our formal results cover Checkpoint and the six forms of Fork, Restore, and Merge, together with extensions, atomic enforcement, and the information every exact checker needs.
Lean mechanizes the finite checker and runtime invariant, and tests validate all six edit forms.
\ifarxiv
The source code, Lean proofs, and executable tests are available in the public GitHub repository at
\url{https://github.com/eunomia-bpf/agent-check-restore-safety}.
\fi
\end{abstract}

\begin{figure*}[t!]
  \centering
  \begin{tikzpicture}[
      font=\scriptsize,
      box/.style={draw,rounded corners=1pt,minimum height=5.5mm,
        align=center,inner sep=2pt},
      node/.style={draw,circle,minimum size=5.4mm,inner sep=0pt},
      arr/.style={-{Latex[length=1.5mm]},thick},
      stale/.style={-{Latex[length=1.5mm]},thick,dashed},
      lab/.style={font=\scriptsize\bfseries,anchor=east}
    ]
    \node[lab] at (0,2.15) {current workflow};
    \node[node] (b0) at (0.65,2.15) {$b_0$};
    \node[node] (al) at (1.85,2.15) {$x_a$};
    \node[box] (f) at (3.1,2.15) {choice\\Fork};
    \node[node] (ba) at (4.45,2.62) {$R$};
    \node[node] (bb) at (4.45,1.68) {$L$};
    \node[box] (ms) at (5.95,2.62) {select\\Merge};
    \node[node] (bl) at (7.25,2.62) {$b_R$};
    \node[box] (rl) at (8.35,1.68) {Restore $k$\\and continue};
    \node[node] (cl) at (9.75,1.68) {$x_a'$};
    \node[box] (jn) at (11.15,2.15) {join\\Merge};
    \draw[arr] (b0) -- node[above] {$k$} (al);
    \draw[arr] (al) -- (f);
    \draw[arr] (f) -- (ba);
    \draw[arr] (f) -- (bb);
    \draw[arr] (ba) -- (ms);
    \draw[arr] (ms) -- (bl);
    \draw[stale] (b0.north) -- (0.65,3.18) --
      node[above] {checkpoint copy: new call, same action}
      (8.35,3.18) -- (rl.north);
    \draw[arr] (rl) -- (cl);
    \draw[arr] (bl) -- (jn);
    \draw[arr] (cl) -- (jn);
    \draw[stale] (bb) to[bend left=17]
      node[below] {other branch disabled} (ms);

    \node[lab] at (0,0.65) {recorded calls and authorization order};
    \node[node] (r) at (0.65,0.65) {$r$};
    \node[node] (a) at (3.05,0.65) {$a$};
    \node[node] (pa) at (6.05,0.65) {$p_R$};
    \node[node] (s) at (10.2,0.65) {$s$};
    \draw[arr] (r) -- node[above] {append only} (a);
    \draw[arr] (a) -- (pa);
    \draw[arr] (pa) -- (s);
    \node[align=center] at (9.75,1.12)
      {\(\Alias(x_a',d_a)\)\\reuses earlier authorization};

    \node[lab] at (0,-0.75) {rules in force};
    \node[box] (e0) at (0.85,-0.75) {$\eta_0$};
    \node[box] (ef) at (3.7,-0.75) {$\eta_F:\ L,R$};
    \node[box] (ea) at (6.8,-0.75) {$\eta_R$};
    \node[box] (ej) at (11.15,-0.75) {$\eta_J$};
    \draw[arr] (e0) -- node[above] {atomic swap} (ef);
    \draw[arr] (ef) -- node[above] {atomic swap} (ea);
    \draw[arr] (ea) -- node[above] {atomic swap} (ej);
  \end{tikzpicture}
  \caption{Fork, Restore, and Merge change the current workflow but do not erase recorded calls.
  Restore creates a new call that refers to the same action as the original call.
  The atomic rule update orders authorization creation and reuse against the rule change.}
  \label{fig:two-histories}
\end{figure*}

\section{Introduction}
\label{sec:introduction}

Tool-using Agent runtimes can change an execution while it is in progress.
They let an Agent Checkpoint an execution, Fork it into alternatives, Restore an earlier checkpoint, and Merge branches~\cite{codexappserver,claudesessions,claudecheckpoint,wang2026fork,wu2026crab,zhang2026agentlibos}.
These operations support exploration, recovery, and reuse without restarting the task.
We call all four operations \emph{execution edits}.
Checkpoint records the current execution for later use, while Fork, Restore, and Merge change what the Agent will do next.

Consider an Agent buying one laptop for a school.
Before asking the school to approve the purchase, it saves checkpoint \(k\).
The school approves one purchase, and the Agent later chooses supplier \(R\) and authorizes payment.
The Agent then restores \(k\) to compare supplier \(L\) and merges the \(L\) branch into the current execution.
If the runtime sees only the restored workspace, it may request the same approval again or continue with \(L\) even though payment to \(R\) is already authorized~\cite{zheng2026acrfence}.
Either behavior breaks the school's purchasing rules because the first repeats a one-time approval and the second combines incompatible supplier actions.

This example shows that the runtime cannot rely on the Agent's account of the execution.
The Agent may request an edit that omits an earlier action or a result the task still requires.
The trusted runtime instead determines what may happen next from an \emph{execution record} that the Agent cannot modify; in this example, the record remembers the one-time approval and the authorized payment to \(R\).
A \emph{tool action} is an external operation governed by policy, such as approval, payment, or shipment.
The runtime checks every tool action against policy.
The runtime makes its security decision when authorizing a tool request.
Sending the request later and recording its return cannot undo that decision.
The runtime stores that decision in an \emph{authorization record}, which later calls may reuse.
Our security goal is therefore to prevent an edit from authorizing a tool action forbidden by policy or discarding a result that the task still requires.
Meeting this goal requires four facts: (1) the task structure and its required results, (2) which calls refer to the same action, (3) earlier authorizations and call progress, and (4) which rules govern a call that overlaps the edit.
Each fact can change the correct answer.
The restored workspace, an execution trace alone, an authorization log alone, and the Agent's own description each omit information the checker may need.
Before an edit takes effect, the runtime records the finite set of relevant tool calls, their possible orders, and results.
The Agent's internal reasoning remains unchanged.

Existing approaches cover parts of this problem but do not derive its complete safety condition.
Agent systems provide branching, staged effects, transactions, or recovery checks~\cite{yu2026shepherd,li2025agentgit,langchain2026timetravel,patil2024goex,chang2025sagallm,mohammadi2026atomix,chen2026cordon,yang2026dart}.
Control and synthesis choose behavior after a model of possible executions and desired final states has been supplied~\cite{dequeiroz2005multitasking,malik2008generalised,finkbeiner2022live,aceto2018suppressions}.
Neither line of work derives from a recorded execution everything that a particular edit must preserve before it takes effect.

Our key idea is to derive from the execution record what each edit must preserve: earlier authorizations and every result that remains required.
Every result required before the edit must remain required, already be satisfied by the recorded execution, or be explicitly removed by the same authority that required it.
The Agent requests an edit under a registered edit rule.
The trusted runtime checks the recorded task structure and determines from that record, rather than from Agent text, what may happen next and which requirements carry over.
The Agent therefore cannot obtain a favorable answer by describing an execution that omits earlier actions or required results.

This idea creates three technical problems.
First, calls copied by Fork or Restore may refer to the same underlying action, so the runtime must distinguish a new authorization from reuse of an earlier authorization and return value.
Second, throughout every allowed execution, every result that is still required and still possible must retain a safe way to finish.
The checker must enforce this condition without forbidding additional safe behavior.
Third, the rules for an accepted edit must take effect atomically with calls that overlap the edit and remain valid if the runtime or machine stops unexpectedly and later recovers.

We give an exact checker that computes every safe way for a requested edit to proceed or proves that every possible way is unsafe.
For a requested edit, the runtime constructs every allowed complete execution from the current record.
It removes an execution if allowing its calls one at a time could leave a still-required result without a safe completion.
If none remains, the runtime returns an independently checkable proof that no runtime rule set can enforce the edit safely.
Otherwise, the remaining executions contain every safe way for the edit to proceed.
Before the edit takes effect, the runtime checks the current execution record again and atomically switches to the newly computed rules.
Each overlapping call is checked entirely under either the old or new rules.
\Cref{fig:two-histories} illustrates this process for an approval task containing Fork, Merge, Restore, and calls that overlap the atomic rule update.
After Restore, the repeated approval call reuses the earlier authorization record instead of authorizing the approval again.

\paragraph{Contributions}
\begin{enumerate}[leftmargin=*,itemsep=2pt,topsep=2pt]
\item \textbf{Safety model.}
We define Checkpoint, Fork, Restore, and Merge safety from the execution record rather than letting the Agent choose the edited execution.
The model preserves past actions and every still-required result and proves that exact checkers must distinguish answer-changing cases from all four record parts.

\item \textbf{Exact checking and atomic enforcement.}
We construct an executable checker that returns every safe way to proceed or a proof that none exists.
Checkpoint has an exact check, and the checker accepts each of the six forms of Fork, Restore, and Merge exactly when the safety condition is enforceable.
Our formal runtime protocol rechecks the record, makes accepted rules take effect atomically, preserves safety across repeated edits and restarts, and has the same Agent-visible behavior as an ideal atomic machine.

\item \textbf{Mechanization and executable validation.}
Six Lean modules mechanize registration, all six edit forms, the atomic rule update, concurrent calls, and preservation across finite executions.
Nineteen paper-specific tests and measurements over 2--128 results exercise the executable checker.
\end{enumerate}

\section{Execution-Edit Safety}
\label{sec:lifecycle}

Imagine an Agent buying one laptop for a school.
It reserves the budget, saves checkpoint \(k\), receives one purchase approval, and then compares suppliers \(L\) and \(R\).
A careless Restore or Merge could request approval twice, pay both suppliers, or permit shipment before the selected supplier is paid.
A safe execution approves the purchase once, pays exactly one supplier, and permits shipment only after that payment.

The resulting \emph{workflow} records which calls may occur, in what order, and which results count as completion.
In this example, it authorizes at most one purchase.
The Agent reserves the budget and saves checkpoint \(k\) before approval.
The approval is then authorized once, creating an authorization record that Restore does not delete.
The Agent next requests a Fork between suppliers \(L\) and \(R\).
The workflow allows two results: order from \(L\) or order from \(R\).
In either result, payment authorization must precede shipment authorization:
\begin{equation}
\label{eq:running-results}
\begin{split}
 o_L&:\ x_L^p\prec x_L^s,\\
 o_R&:\ x_R^p\prec x_R^s ,
\end{split}
\end{equation}
where \(x_i^p\) and \(x_i^s\) are payment and shipment calls.
A \emph{workflow call} \(x\) is one place in this structure where the runtime can invoke a tool.
The function \(q_{\rm act}(x)=d\) records which tool action the call refers to.
Calls that refer to the same action share one authorization record.

In this example, both shipment calls authorize the same one-order shipment:
\[
 q_{\rm act}(x_L^s)=q_{\rm act}(x_R^s)=d_s.
\]
The two payment calls instead refer to different actions \(d_L^p\) and \(d_R^p\).

The authorization log \(\Delta\) records newly authorized actions in order.
The notation \(\mathsf{lab}(\Delta)\) keeps only their authorization labels.
Budget reservation contributes \(r\), approval contributes \(a\), the two payments contribute \(p_L\) and \(p_R\), and shipment contributes \(s\).
At the Fork request, \(\mathsf{lab}(\Delta)=ra\).
The policy allows every partial authorization sequence of \(rap_Ls\) and \(rap_Rs\), but never both payments or a payment before approval.
Although the supplier results are exclusive, both must still have a safe way to finish when the Fork is checked.

\subsection{Accepted and Rejected Forks}
The Agent sends \(u=\mathsf{ForkChoice}(b,\sigma)\), which names a current branch and a registered edit rule.
Here, the rule identified by \(\sigma\) turns branch \(b\) into the exclusive supplier-\(L\) and supplier-\(R\) copies shown below.
The trusted runtime constructs the edited workflow from the execution record.
It also derives the results that remain required and the action referred to by every workflow call.

Write \(M_o\) for the ways to complete result \(o\) while obeying the policy.
Each result has one allowed call order:
\[
\begin{aligned}
 z_L&=\Fresh(x_L^p,d_L^p)\Fresh(x_L^s,d_s),\\
 z_R&=\Fresh(x_R^p,d_R^p)\Fresh(x_R^s,d_s).
\end{aligned}
\]
Thus \(M_{o_L}=\{z_L\}\) and \(M_{o_R}=\{z_R\}\).
These sequences contain the calls after the edit, while the existing log contributes the earlier authorization sequence \(ra\).
The verdict is \(\mathsf{Accepted}\): both supplier results remain possible, the calls performed so far always obey policy, and each result can be completed.
If policy excludes \(rap_Rs\), then \(M_{o_R}=\varnothing\), and the Fork is rejected because pruning supplier \(R\) would change its meaning.
Rejection returns a proof listing the order in which completions were removed and the reason for each removal, rather than letting the Agent weaken the requested edit.

\section{Implications of Execution-Edit Safety}
\label{sec:implications}

This section explains six consequences of our formal results for runtime design.
Here, a \emph{result} means one way the task can finish, such as completing the laptop purchase through supplier \(L\) or \(R\).
It is not a value returned by a tool.
A result remains required until the execution record shows that it has finished or that the party requiring it allowed its removal.

\paragraph{A smaller workflow can be less safe}
In the laptop example, the current workflow allows the school to buy through supplier \(L\) or supplier \(R\).
Suppose the Agent edits it to keep only \(L\).
Every remaining call may obey the purchase policy, but the edit removes the still-required \(R\) result.
The edit is safe only if the record shows that the \(R\) purchase has finished or that the school allowed its removal.
The runtime must therefore derive what the edit must preserve from the execution record, not from the proposed workflow.

\paragraph{Two safe plans may have no safe way to run together}
Suppose a release workflow must support both normal release and emergency release.
The normal plan obtains approval \(y\) before deployment \(x\), producing \(yx\).
The emergency plan deploys first and then files an emergency report \(z\), producing \(xz\).
The policy \(\mathcal A=\{\epsilon,x,y,xz,yx\}\) permits both orders, so either plan is safe when checked alone.
In the combined workflow \(X\otimes(Y\oplus Z)\), both plans contain \(x\).
Allowing \(x\) first leaves the normal plan able to finish only in the forbidden order \(xy\), while blocking \(x\) makes the emergency plan impossible.
The runtime cannot make either choice while keeping both plans safe, so the exact checker returns \(\mathsf{Reject}\).

\paragraph{An edit can change from Accepted to Reject without a new authorization}
An edit verdict applies only to the execution record that the checker examined.
In \cref{fig:two-histories}, Restore copies an approval call as \(x_a'\).
The original call and \(x_a'\) refer to the same authorized action \(d_a\).
Processing \(x_a'\) therefore adds no authorization-log entry, but it still moves the restored branch past approval.
A concurrent Merge or Restore that was Accepted before this call may be rejected afterward.
If the new verdict is Reject, the edit must not take effect in that execution state.
Requesting authorization for \(d_a\) again cannot undo this progress because the runtime reuses \(d_a\)'s existing authorization record.
The runtime must discard the earlier verdict and check the edit against the new record.
A later check can change its answer after a relevant fact changes, such as completing a required result or authorizing its removal, but not by resetting recorded history.

\paragraph{Saving every call and authorization is still not enough}
Suppose the runtime retains all workflow calls, actions, and authorization records but loses the links among them.
It can no longer tell whether a restored ``approve purchase 42'' call reuses purchase 42's authorization or needs a new one.
The tool name and call text cannot recover this information.
\Cref{thm:necessary-state} proves that two records containing the same calls, actions, and authorizations can require opposite verdicts when these links differ.
An exact runtime must therefore record which action each workflow call and authorization refers to.

\paragraph{There is one largest safe set of call orders}
Safety could in principle admit several incomparable rule sets, each allowing call orders that the others block.
\Cref{sec:model} instead proves that the checker returns one safe execution set containing every other safe execution set.
The generated runtime rules therefore combine all call orders that can coexist without violating policy or preventing a required result from finishing.

\paragraph{Exactness survives later calls, edits, and restarts}
The exact guarantee is not limited to the first accepted edit.
After allowed calls have run, checking the remaining workflow again gives the same safe ways to finish as the installed rules.
Later Fork, Restore, and Merge requests, rule updates, stops, and restarts preserve the same guarantee.
Every later edit is therefore checked under the same exact safety condition as the first.

\Cref{sec:model} formalizes the execution record and safety goal.
\Cref{sec:exact-check} defines the checker, and \cref{sec:enforcement,sec:results} show how the runtime installs its rules atomically and preserves the guarantees above.

\section{Model and Security Objective}
\label{sec:model}

The model leaves the Agent's internal reasoning unrestricted and requires every tool call to pass through the trusted runtime.
Each call identifies its registered place in the workflow, presents an authorization token, and submits a concrete tool request.
The runtime either creates a new authorization, reuses an authorization already recorded in the log, or rejects the call.
We analyze calls in independently enforceable groups.
Calls governed by one policy, one authorization log, and one atomic rule change form a \emph{policy domain}.
For example, the laptop approval, payment, and shipment calls form one policy domain when they share the school's purchase policy and authorization log.
Policy domains with disjoint actions, logs, and policies compose independently.
Every individual state is finite, while the theory covers executions of arbitrary finite length.
The \emph{execution record} stores the workflow and checkpoints, which calls refer to the same action, earlier authorizations and call progress, queued requests, and the rules currently in force.
The trusted runtime protects this record from Agent modification.
The model has one security objective.
Every admitted execution must obey policy, remain completable, and preserve a safe completion for every still-required result.
The remainder of this section defines the execution facts needed to state that objective precisely.

\subsection{Workflow results and call order}

\paragraph{Basic objects}
We capitalize \emph{Agent} for the untrusted principal requesting execution edits.
A \emph{workflow call} \(x\) is one place where the workflow can invoke a tool.
The function \(q_{\rm act}(x)=d\) records the tool action referred to by that call.
Several workflow calls may refer to the same action \(d\), in which case they share one authorization record.
The first authorized call creates that record, and calls copied by Fork or Restore later reuse it.
In this paper, a workflow call is processed when the runtime either authorizes its tool action for the first time or reuses that action's existing authorization record.
Processing does not mean that the remote action has completed: sending a newly authorized request and recording its returned value happen later, and neither adds another processed workflow call.
The execution record also stores where every copied workflow region came from.
\(T,\Delta,\chi\) denote the current workflow, authorization log, and ordered record of processed workflow calls.

A pomset \(p=(X_p,\prec_p)\) is a finite set of workflow calls with a strict partial order.
Its linearizations \(\Lin(p)\) are the words containing each member of \(X_p\) once and respecting \(\prec_p\).
A \emph{workflow contract} \(\mathcal C=(O,\{p_o\}_{o\in O})\) assigns one pomset to each member of a finite nonempty result set \(O\).
For the laptop, the contract has results \(o_L,o_R\), and \(p_{o_L}\) orders the \(L\)-payment call before the \(L\)-shipment call.
Each index \(o\) names one allowed result.
A result is complete once every workflow call in its pomset has been processed in an allowed order.
Two results remain distinct even if the calls still required for them later become identical.
Write \(X_{\mathcal C}=\bigcup_{o\in O}X_{p_o}\) and \(\mathsf{CL}(\mathcal C)=\bigcup_{o\in O}\Lin(p_o)\).
We require the runtime to distinguish completed work from work that has more calls.
For sequences \(v,w\), write \(v\preceq w\) when \(w=vs\) for some sequence \(s\).
\[
 v,w\in\mathsf{CL}(\mathcal C)\ \land\
 v\preceq w \quad\Longrightarrow\quad v=w .
\]
This condition permits incomparable traces and different results with the same trace, so it does not require deterministic results.
It requires a registered workflow call whenever more work can follow the same sequence of processed calls.
The interface represents a choice between stopping and continuing by a registered terminal workflow call checked by the trusted runtime.

Contracts use the three partial constructors below.
Each constructor requires operands with disjoint workflow calls and preserves the condition that no complete call sequence can be properly extended into another.
\begin{equation}
\label{eq:contract-algebra}
\begin{aligned}
 O_{\mathcal C\oplus\mathcal D}
  &= (\{L\}\mathbin{\times}O_{\mathcal C})
     \uplus(\{R\}\mathbin{\times}O_{\mathcal D}),\\
 p_{(L,o)}&=p_o,\qquad p_{(R,v)}=p_v,\\
 O_{\mathcal C\otimes\mathcal D}
  &=O_{\mathcal C}\mathbin{\times}O_{\mathcal D},\\
 p_{(o,v)}&=p_o\uplus p_v,\\
 O_{\mathcal C\triangleright\mathcal D}
  &=O_{\mathcal C}\mathbin{\times}O_{\mathcal D},\\
 p_{(o,v)}&=p_o\triangleright p_v.
\end{aligned}
\end{equation}
Before adding a registered or copied operand, the trusted runtime deterministically assigns fresh workflow-call IDs and extends \(\rho\) while retaining which tool action each call refers to.
Well-formedness already makes carried operands disjoint.
In \cref{eq:contract-algebra}, \(\uplus\) adds no cross-order and \(\triangleright\) orders every event of \(p\) before every event of \(q\), while choice uses a tagged union of result indices and parallel and sequence use Cartesian products.
Consequently, choice retains each arm's full indexed family, parallel pairs results while permitting every causal linearization, and sequence adds a barrier.
After call sequence \(w\), \(\mathcal C/w\) retains each result whose pomset has a linearization beginning with \(w\), then removes the workflow calls in \(w\) and their order edges.
The expression is undefined if no result remains.
This operational ``remaining contract'' preserves result and workflow call identity.
For example, after the \(L\)-payment call, the remaining contract for \(o_L\) contains only the \(L\)-shipment call.

\begin{lemma}[Completion remains unambiguous]
\label{lem:observable-completion}
For every defined \(\mathcal C/w\), no complete sequence can be properly extended into another, and either every retained pomset is empty or none is empty.
\end{lemma}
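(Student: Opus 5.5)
The plan is to unfold the definition of \(\mathcal C/w\) and reduce both claims to the prefix-freeness condition on \(\mathsf{CL}(\mathcal C)\). Fix \(w\) with \(\mathcal C/w\) defined, and let \(O_w\subseteq O\) be the retained results. For each \(o\in O_w\), the pomset \(p_o/w\) is \(p_o\) restricted to \(X_{p_o}\setminus w\), keeping the induced order. The first step is a lemma on linearizations.

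\emph{Linearization lemma.} If some linearization of \(p_o\) begins with \(w\), then
\[
 \Lin(p_o/w)=\{\,s : ws\in\Lin(p_o)\,\}.
\]
For \(\supseteq\): if \(ws\) respects \(\prec_{p_o}\), then \(s\) respects the induced order on the remaining calls. For \(\subseteq\): take \(s\in\Lin(p_o/w)\). Because \(w\) is a prefix of some linearization, \(w\) is a down-closed set of \(p_o\) listed in an order-respecting way. Hence no remaining call precedes a call of \(w\), and \(ws\) respects \(\prec_{p_o}\). This down-closure observation is the only nontrivial point, and it follows directly from the definition of linearization.

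From the lemma, \(\mathsf{CL}(\mathcal C/w)=\{s : ws\in\mathsf{CL}(\mathcal C)\text{ via some }o\in O_w\}\).

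\textbf{First claim.} Suppose \(s,t\in\mathsf{CL}(\mathcal C/w)\) with \(s\preceq t\). Then \(ws\) and \(wt\) both lie in \(\mathsf{CL}(\mathcal C)\), and \(ws\preceq wt\). Prefix-freeness of \(\mathcal C\) gives \(ws=wt\), so \(s=t\).

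\textbf{Second claim.} Suppose \(p_o/w\) is empty for some \(o\in O_w\). Then \(w\in\Lin(p_o)\subseteq\mathsf{CL}(\mathcal C)\). Now take any other \(o'\in O_w\). By definition of retention, \(p_{o'}\) has a linearization \(wt\). Both \(w\) and \(wt\) lie in \(\mathsf{CL}(\mathcal C)\) and \(w\preceq wt\), so prefix-freeness forces \(t=\epsilon\). Thus \(p_{o'}/w\) has no calls and is empty. So either every retained pomset is empty or none is.

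\textbf{Main obstacle.} The only delicate step is the \(\subseteq\) direction of the linearization lemma, namely that a prefix of a linearization is down-closed. A secondary point is that result identity is preserved, so distinct indices with equal traces cause no trouble: the argument works per index \(o\).

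Finally, prefix-freeness is assumed for \(\mathcal C\) itself. For contracts built with \(\oplus,\otimes,\triangleright\), the statement says the constructors preserve it, so no induction over the constructors is needed here.
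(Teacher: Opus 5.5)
Your proposal is correct and matches the paper's proof: both claims reduce to prefix-freeness of \(\mathsf{CL}(\mathcal C)\) by prepending \(w\), and the second claim uses \(w\preceq wt\) for an empty and a nonempty retained pomset. Your down-closure linearization lemma spells out the step the paper takes implicitly when it asserts that \(wv\in\mathsf{CL}(\mathcal C)\) for every linearization \(v\) of a retained remaining pomset.
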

\begin{proof}
If a remaining sequence \(v\) can be properly extended into \(v'\), then \(wv\) can be properly extended into \(wv'\) in \(\mathsf{CL}(\mathcal C)\).
If one retained pomset is empty and another is nonempty, choose a nonempty linearization \(v\) of the latter.
Then \(w\) can be properly extended into \(wv\).
Both contradict the condition that no complete sequence can be properly extended into another.
\end{proof}

\subsection{Execution Record}

\subsubsection{Recorded structure and current workflow}
The execution record separates immutable past events, the current workflow, registered edit rules, call identities, progress, and the active rule version.
The workflow part of the execution record is
\begin{equation}
\label{eq:history-state}
 H=(\omega,G,T,K,\Sigma_\zeta,\rho,\beta,\chi,\mathsf{ver},\eta).
\end{equation}
\(\omega\) identifies the policy domain, and \(\eta\) identifies the version of the rules currently in force.
\(G\) records the finite sequence of workflow versions.
Its node IDs, registered contracts, signed result requirements, and Fork, Restore, and Merge records do not change.
Each branch node \(b\) carries an immutable registered contract \(\Gamma_b\).
Its ordered progress record \(\chi_b\) lists the workflow calls already processed on that branch.
The immutable records form a directed acyclic graph (DAG) whose edges record typed Fork, Restore, and Merge operations.
\(T\) is the current workflow.
\[
 T ::= b{:}\mathcal C
     \mid T\oplus_g^s T
     \mid T\parallel_g T
     \mid \mathsf{join}(T,T)
     \mid T;T .
\]
\(s\in\{\mathsf{open},L,R\}\) records an unresolved choice or the arm that made recorded progress, and \(g\) groups exclusive or jointly active siblings.
\(\sem{T}\) uses both arms of an open choice and only the selected arm thereafter.
It interprets parallel and \(\mathsf{join}\) with \(\otimes\), and sequence with \(\triangleright\).
The \(\mathsf{join}\) marker records a completed Merge and prevents the same group from being merged again.

\(T\) is \(\mathsf{complete}\) when every pomset in \(\sem{T}\) is empty.
\Cref{lem:observable-completion} ensures that the remaining workflow cannot contain both empty and nonempty pomsets.
Set \(\mathsf{heads}(b{:}\mathcal C)=\{b\}\).
An open choice makes both arms current, a selected choice makes its winner current, parallel and \(\mathsf{join}\) make both arms current, and sequence makes its left operand current until completion and then its right.
Processing a workflow call updates the work remaining at its leaf without deleting the recorded structure.
Because sequence checks \(\mathsf{complete}\), it retains completed results from its left operand, so an isolated completed leaf remains available for a later Fork or Restore.
Joining removes \(g\) immediately but keeps its remaining work behind a barrier until both arms complete.
A workflow call is enabled when it is minimal in some pomset of \(\sem{T}\).
Disjoint workflow calls and unique context decomposition give every enabled \(x\) a unique current leaf \(\mathsf{leaf}_T(x)=b\).
For \(a\in\{\Fresh(x,d),\Alias(x,d)\}\), define the paired execution update
\[
 \mathsf{HStep}((G,T),a)=
 \bigl(G\mathbin{\|}(b,a),\,T/(x,a)\bigr),
 \qquad b=\mathsf{leaf}_T(x),
\]
where \(G\mathbin{\|}(b,a)\) appends progress record \((b,a)\), and \(T/(x,a)\) removes the processed call from the current leaf and records any selected choice.
Write \(\mathsf{HRes}((G,T),r)\) for iterating this update over a resolved word \(r\).

\(\mathsf{addr}(T)\) is the set of branch and group IDs that an edit may currently name.
It includes each current group and the current part of a sequence.

\(K\) maps checkpoint IDs to immutable earlier branch records.
Write \(\mathcal C_k=\mathsf{contract}(K(k))\) for the remaining contract stored by checkpoint \(k\).
\(\Sigma_\zeta\) is an immutable, versioned registry of typed edit rules.
\(\rho\) records, for each workflow call, its action, authorization token, source region, scope, canonical tool request, and request digest.
The source region identifies the registered workflow region from which the call originates, while the scope limits where its authorization token may be used.
For example, a copied payment call retains the source region of the registered payment step and its purchase-domain scope.
Write \(q^{\rho}_{\rm act}(x)\) for the action field of \(\rho(x)\).
For \(d\in\operatorname{im}(q^{\rho}_{\rm act})\), let \(\mathsf{inv}_\rho(d)\) be the common canonical-request field of calls \(x\) with \(q^{\rho}_{\rm act}(x)=d\).
Let \(\mathcal U\) be the authorization-label alphabet.
Independently, \(\ell(d)\in\mathcal U\) is the authorization label certified for tool action \(d\) by the registered call model.
For each current workflow call \(x\), \(\beta(x)=(j,\eta)\) records the rule entry \(j\) and version \(\eta\) used to check it.
\(\chi\) records processed workflow calls in order, including whether each created or reused an authorization record, and \(\mathsf{ver}\) is the state version.
\(\chi\) records authorization and workflow progress, while the remote service records completion and the returned value.

\(\mathsf{Checkpoint}(k,b)\) creates a checkpoint for a current branch \(b\) under a fresh checkpoint ID \(k\).
A trusted transaction stores a signed copy of branch \(b\)'s remaining contract \(\mathcal C_b\), ordered progress record \(\chi_b\), relevant calls, and result requirements.
The checkpoint check requires \(\mathcal C_b=\Gamma_b/\mathsf{raw}(\chi_b)\), and the transaction advances \(\mathsf{ver}\) without changing the current workflow, authorization records, \(\beta\), the active rule version, or the authorization sequence.
This internal transition changes \(H\), so a rule update computed from the earlier record will fail its final check.
Restore accepts only records created by this rule.

\subsubsection{Well-formedness}

\begin{definition}[Well-formed execution record]
\label{def:well-formed-history}
\(H\) is well formed when the following groups hold.
\begin{enumerate}[leftmargin=*,itemsep=0pt,topsep=1pt]
\item \emph{Structure.}
\(G\) is acyclic, and branch and group IDs are globally unique across the recorded DAG and current workflow.
Every ID that an edit may name has a unique context decomposition, and every node in \(\mathsf{heads}(T)\) is currently maximal.
Every checkpoint names an immutable node and stores copies of its remaining contract and ordered progress record.
Every original or edit-introduced result names exactly one policy authority in a signed record.
The trusted runtime preserves that record when it carries or copies the result or stores it in a checkpoint.
\item \emph{Identity and progress.}
For every recorded workflow call \(x\), \(q^{\rho}_{\rm act}(x)=q_{\rm act}(x)\).
Workflow-call IDs and the \(j\) identifiers in \(\beta\) are unique, while workflow calls that refer to the same action agree on which trusted runtime authorized them, their scope, canonical request, digest, and source region.
Processed workflow calls are downward closed in their pomset, consistent with every choice status, and absent from the remaining workflow.
\item \emph{Current workflow.}
\(\operatorname{dom}(\beta)=X_{\sem T}\), and \(\beta\) maps every current workflow call \(x\) to exactly one entry \(j_x\) under the domain's current rule version \(\eta\).
Every current leaf \(b{:}\mathcal C\) equals its immutable registered contract \(\Gamma_b\) after removing the workflow calls recorded in \(\chi_b\); the caller cannot replace it with a smaller result set.
Every constructor in \(T\) is defined, so no complete sequence in any subtree can be properly extended into another.
\item \emph{Edit rules and domain.}
Every edge and introduced contract follows one immutable edit rule at version \(\zeta\).
Every current, checkpoint, or rule-introduced workflow call, action, source region, and scope belongs to \(\omega\).
Tool actions, authorization logs, and policies of distinct domains are disjoint, and an edit never crosses a domain.
\end{enumerate}
The policy domain has exactly one active rule version, namely the version recorded in \(H\).
\end{definition}

\subsubsection{Authorization and completion}

The remaining parts of the execution record are policy version \(\kappa\), a regular authorization policy \(\mathcal A\subseteq\mathcal U^*\) closed under truncation, authorization log \(\Delta\), and queue \(\mathsf{out}\) of requests waiting to be sent.
Authorization records can only be appended.
A completion update advances an authorization record's status and stores the returned value.
Each authorization record binds its action, immutable canonical tool request, creator workflow call, and authorization label.
Each entry in \(\chi\) points either to the authorization record it created or to the earlier record it reused, so \((\chi,\Delta)\) determines retry and return lookup.
\(\mathsf{lab}(\Delta)\) is the sequence of authorization labels on calls that create authorization records.
Restore may reconstruct workspace data but leaves \(\Delta\) unchanged.
A returned value may lead the Agent to request another registered edit, which the runtime checks from the updated execution record.

The execution record \((\kappa,H,\Delta,\mathsf{out},\mathcal A)\) is \emph{well formed} when \(H\) is well formed, \(\mathsf{lab}(\Delta)\in\mathcal A\), each action has at most one authorization record, each processed call names the record it created or reused, the authorization log, request queue, and policy belong to \(\omega\), and the queue preserves the order of new authorizations.
All definitions and results below quantify over such well-formed execution records.

\subsection{Adversary and trusted boundary}

The adversary controls Agent output, edit requests, tool arguments, copied workspace, retries, stale authorization tokens, and scheduling, including races between calls and edits.
It cannot forge registries, the checkpoint directed acyclic graph (DAG), the record of which calls refer to the same action, the policy, or the authorization log.
The policy authority signs policies, result requirements, and authorized removals.
In the laptop example, the school is the policy authority: it signs the one-purchase requirement and any permission to remove a supplier result.
The trusted runtime registers tool calls, actions, and authorization tokens, and it derives and checks each edit.
The trusted runtime checks the answer again before atomically changing the rules in force.
The policy authority, trusted runtime, registries, authorization log, proof verifier, and transaction mechanism form the trusted computing base (TCB).
Within one policy domain, the TCB checks every tool call and commits every state change and edit answer returned to the Agent through a linearizable transaction.
After an unexpected stop and restart, each transaction appears entirely before or entirely after its linearization point.
Before the checker reads the execution record, the trusted runtime may instantiate finitely many registered templates for tool calls.
For tool request \(m\), it computes canonical request \(m^\circ=\mathsf{canon}_\kappa(m)\), reuses an action certified for \(m^\circ\) or allocates a fresh one, and signs \((x,j,d,\mathsf{scope},\mathsf{digest}(m^\circ),\mathsf{origin})\) into the authenticated registry.
The resulting finite model \(\mathcal R\) contains the workflow's tool calls and remains unchanged while the checker evaluates that execution record.
The runtime rejects any call that does not match the registered model.
One execution record outside the Agent's control contains workflow \(\mathcal W\), finite model \(\mathcal R\), edit rules, request-normalization rules, policy, checkpoints, authorization log, and queued requests.
From that record and an Agent request naming an edit and its objects, the trusted runtime derives the edited workflow, required results, authorization decision for each call, safe execution set or rejection proof, and the rule entry and authorization token for each remaining call.
Every derived object therefore comes from the execution record rather than being chosen by the Agent.

Registration turns the tool-call behavior of a typed Agent workflow into the finite model \(\mathcal R\).
\(\mathsf{CheckReg}\) verifies exact correspondence of results, call order, actions, and first-time authorization records.
The model stores these traces as runs labeled by result, written \(\mathsf{BRuns}(\mathcal R)\), together with a map \(\lambda\) from source traces.
Write \(\mathcal W\sqsubseteq_\lambda\mathcal R\) when \(\lambda\) is a bijection preserving results, which calls refer to the same action, call order, result authority, and authorization-log updates.
\Cref{def:registration-refinement,lem:registration-reflection} give the exact finite check and proof.
Registered contracts enter through \(\mathsf{CheckReg}\), and an idempotent or queryable remote service supplies signed completion records for its actions.

\subsection{Security Objective}

The security objective is fixed independently of the checker that will implement it.
For an edit derived from the execution record, a safe execution set satisfies four requirements.
First, it preserves every source result that is neither already satisfied nor explicitly authorized for removal, and every allowed execution belongs to the derived target workflow.
Second, every allowed partial execution obeys the policy after accounting for authorization records already present in the append-only log.
Third, each allowed partial execution extends to a complete allowed execution.
Fourth, every result still compatible with an allowed partial execution retains such a completion.
For example, at the laptop Fork, the set must retain both supplier results unless the school authorizes removing one, admit only approval--payment--shipment orders allowed by policy, and preserve a completion from every admitted partial execution for each supplier result still compatible with that execution.
Thus the checker cannot call the edit safe merely because the surviving \(L\) execution obeys policy after all completions of \(R\) have been pruned.
The last requirement rejects the two-result example in \cref{sec:implications}, where separately safe results cannot remain safe under one set of runtime rules.
\Cref{def:safe-execution-set} states these requirements over executions labeled by result.
\Cref{lem:largest-safe-family} then proves that the checker computes the largest safe execution set.

\section{Exact Checking of Execution Edits}
\label{sec:exact-check}

The checker first derives the edited workflow prescribed by the execution record and the requested edit.
It then decides whether each workflow call creates or reuses an authorization record and removes precisely those complete executions that violate the security objective after some partial execution.

\subsection{Constructing the Edited Workflow}

\subsubsection{Checking an edit rule}

Fork, Restore, and Merge have the six disjoint constructors of \(\mathsf{Edit}_6\) shown in \cref{tab:history-rules}.
\(\mathsf{Checkpoint}\), the initial rule update, later rule registration, and tool calls use their own transitions.

An Agent request \(u\) names an edit kind, current object IDs, and edit-rule ID \(\sigma\).
Registry \(\Sigma_\zeta\) records the expected source and checkpoint patterns for \(\sigma\), any workflow steps the edit adds, the source region for each copy, and the authority that signs each introduced result.

The checker constructs explicit preservation maps for every region that the rule carries or copies.
Consider a source contract \(\mathcal C\), target region \(\mathcal D\), and parent map \(\mu:X_{\mathcal D}\to X_{\mathcal C}\).
Write \(\mathcal C\preccurlyeq_\mu\mathcal D\) when \(\mu\) is a global bijection and preserves actions and source regions.
The relation also requires a surjection \(\pi:O_{\mathcal D}\twoheadrightarrow O_{\mathcal C}\) that satisfies the conditions below for every target result \(v\).
\[
\begin{aligned}
 x\in X_{p_v}
   &\Longleftrightarrow \mu(x)\in X_{p_{\pi(v)}},\\
 x\prec_v y
   &\Longleftrightarrow \mu(x)\prec_{\pi(v)}\mu(y).
\end{aligned}
\]
Because \(\mu\) is one global bijection and preserves which results contain each call, it cannot split a source call shared by several results into separate target calls.

For candidate \(T'\), let \(R_u\) index the disjoint regions carried or copied inside the replaced subtree.
\(\mathcal P_u=\{(\mathcal C_r,\mathcal D_r,\mu_r,\pi_r)\}_{r\in R_u}\) collects these preservation maps.
A separate identity map verifies that the surrounding workflow preserves syntax, results, workflow calls, actions, source regions, and order.
Writing \(X_{\rm ctx}\) for its workflow calls and \(X_{\rm added(u)}\) for workflow calls added by the rule, the checker verifies the following disjoint cover.
\[
 X_{\sem{T'}}=X_{\rm ctx}\uplus
   \biguplus_{r\in R_u}X_{\mathcal D_r}\uplus X_{\rm added(u)} .
\]
The edit rule and surrounding workflow induce \(\mathsf{pr}^u_r:O_{\sem{T'}}\rightharpoonup O_{\mathcal D_r}\).
For choice, this map is defined only in the arm containing \(r\), while product and sequence select the corresponding indexed component.
Define
\[
\begin{aligned}
 \mathsf{EditedRequired}_u&=\{(r,o)\mid r\in R_u,\ o\in O_{\mathcal C_r}\},\\
 \mathsf{Covers}_u(t,r,o)&\Longleftrightarrow
   \exists v\in O_{\mathcal D_r}.\quad
   \mathsf{pr}^u_r(t)=v\ \land\ \pi_r(v)=o .
\end{aligned}
\]
Let \(O_{\rm unchanged}\) be the results outside the edited region, and let \(\iota^u_{\rm ctx}:O_{\rm unchanged}\hookrightarrow O_{\sem{T'}}\) be their unchanged embedding into the target.
Define the results that remain required and the target results that cover them by
\begin{equation}
\label{eq:target-required}
\begin{aligned}
 \mathsf{StillRequired}_u&=\mathsf{EditedRequired}_u\\[-3pt]
   &\quad\uplus(\{\mathsf{ctx}\}\times O_{\rm unchanged}),\\
 \mathsf{Covers}_u(t,(\mathsf{ctx},o))
   &\Longleftrightarrow\\[-3pt]
   &\quad t=\iota^u_{\rm ctx}(o).
\end{aligned}
\end{equation}
For \(a=(r,o)\in\mathsf{EditedRequired}_u\), \(\mathsf{Covers}_u(t,a)\) denotes the relation above.
The checker also requires \(\forall a\in\mathsf{StillRequired}_u.\ \exists t\in O_{\sem{T'}}.\ \mathsf{Covers}_u(t,a)\): every required source result has a complete target result, shared workflow calls remain shared, and inherited calls remain distinct from steps added by the edit rule.

The edit-rule check uses the following judgment.
\[
\Sigma_\zeta;H\vdash_{\rm edit}u\Rightarrow(T',\mathcal P_u)
\]
It holds exactly when every named object belongs to \(\mathsf{addr}(T)\), every introduced object belongs to \(\omega\), the source pattern matches, the preservation maps and disjoint cover hold, every still-required result is covered, every removal carries the proper authority, and \(\sem{T'}\) is defined.
The policy authority may authorize removal of the unselected arm of a typed \(\mathsf{MergeSelect}\) or the current branch of \(\mathsf{RestoreReplace}\), using the signed sets defined below.
\subsubsection{Preserving Required Results}

The registered edit rule determines every result that the edit must account for.
Fork accounts for both copies, Restore for the current and checkpoint branches, \(\mathsf{MergeSelect}\) for the selected and unselected arms, and \(\mathsf{MergeJoin}\) for both arms.
Results outside the edited region remain unchanged.
\Cref{def:source-results} defines the results required before the edit, those already satisfied, and those whose removal the policy authority signed.
The checker accepts the edit rule when the target-derived set from \cref{eq:target-required} satisfies
\[
\begin{aligned}
\mathsf{StillRequired}_u
 &=\mathsf{RequiredBefore}(H,u)\\[-2pt]
 &\quad\setminus\bigl(
   \mathsf{Satisfied}_{H,u}\uplus
   \mathsf{AuthorizedRemoval}_{H,u}\bigr),
\end{aligned}
\]
and hence the exact disjoint partition
\[
\begin{aligned}
\mathsf{RequiredBefore}(H,u)
 &=\mathsf{StillRequired}_u\\[-2pt]
 &\quad\uplus\mathsf{Satisfied}_{H,u}\\[-2pt]
 &\quad\uplus\mathsf{AuthorizedRemoval}_{H,u}.
\end{aligned}
\]
This partition accounts for every source result and prevents an Agent request from dropping one without authorization.
The preservation maps retain actions, source regions, and order, while the rule update retains the records for satisfied results and authorized removals.

Let \(\mathcal E[-]\) be the current workflow with one editable region replaced by a hole.
\(\mathsf{carry}\) preserves branch, result, and workflow-call IDs, pomset order, actions, signed result authorities, base contracts, ordered progress records, and nested choices.
\(\mathsf{clone}_r\) assigns fresh branch, result, and workflow-call IDs while preserving pomset order, actions, scope, digest, source regions, and the authority for each result.
Each copied branch starts from its registered remaining contract with an empty progress record.
Branches added by an edit rule likewise start from their registered contract and an empty progress record.
The contracts \(E_\sigma^L,E_\sigma^R,E_\sigma^J\) are additional workflow steps derived from the rule registry.
Define
\[
 L_\sigma=\mathsf{clone}_L(\mathcal C)\triangleright E_\sigma^L,\qquad
 R_\sigma=\mathsf{clone}_R(\mathcal C)\triangleright E_\sigma^R .
\]
New identifiers and the target rule version \(\eta^+\) are deterministically allocated from \((\omega,\zeta,\mathsf{ver},u,\mathsf{role})\).
\(\mathsf{role}\) identifies which operation argument receives each new ID.
Write \(\mathsf{Copy}_r(\mathcal C)\) for \((\mathcal C,\mathsf{clone}_r(\mathcal C),\mu_r,\pi_r)\), and \(\mathsf{Keep}(\mathcal C)\) for \((\mathcal C,\mathsf{carry}(\mathcal C),\mu,\pi)\).
Each abbreviation asserts the corresponding \(\preccurlyeq_\mu\) relation.
For a current workflow \(T_0\), \(\mathsf{Keep}(T_0)\) abbreviates \(\mathsf{Keep}(\sem{T_0})\) and has target contract \(\sem{\mathsf{carry}(T_0)}\).
Every \(b\) or \(g\) named below must lie in \(\mathsf{addr}(T)\).

\begin{table*}[t]
\centering
\caption{The six forms of Fork, Restore, and Merge.
\(E_\sigma^\bullet\) is derived from the registered edit rule in \(\Sigma_\zeta\).}
\label{tab:history-rules}
\scriptsize
\setlength{\tabcolsep}{4pt}
\begin{tabular}{@{}p{0.18\textwidth}p{0.22\textwidth}
                    p{0.37\textwidth}p{0.15\textwidth}@{}}
\toprule
Operation \(u\) & Required current shape & Derived target shape &
Checked preservation\\
\midrule
\(\mathsf{ForkChoice}(b,\sigma)\) &
\(\mathcal E[b{:}\mathcal C]\) &
\(\mathcal E[b_L{:}L_\sigma\oplus_g^{\mathsf{open}}
 b_R{:}R_\sigma]\) &
\(\{\mathsf{Copy}_L(\mathcal C),\mathsf{Copy}_R(\mathcal C)\}\)\\
\(\mathsf{ForkParallel}(b,\sigma)\) &
\(\mathcal E[b{:}\mathcal C]\) &
\(\mathcal E[b_L{:}L_\sigma\parallel_g b_R{:}R_\sigma]\) &
\(\{\mathsf{Copy}_L(\mathcal C),\mathsf{Copy}_R(\mathcal C)\}\)\\
\(\mathsf{RestoreReplace}(b,k,\sigma)\) &
\(\mathcal E[b{:}\mathcal C],\ k\in\operatorname{dom}K\) &
\(\mathcal E[b_k{:}\mathsf{clone}_k(\mathcal C_k)]\) &
\(\{\mathsf{Copy}_k(\mathcal C_k)\}\);
current branch removed\\
\(\mathsf{RestoreLive}(b,k,\sigma)\) &
\(\mathcal E[b{:}\mathcal C],\ k\in\operatorname{dom}K\) &
\(\mathcal E[b{:}\mathsf{carry}(\mathcal C)
 \parallel_g b_k{:}\mathsf{clone}_k(\mathcal C_k)]\) &
\(\{\mathsf{Keep}(\mathcal C),\mathsf{Copy}_k(\mathcal C_k)\}\)\\
\(\mathsf{MergeSelect}(g,w,\sigma)\) &
\(\mathcal E[T_L\oplus_g^sT_R],\
 s\in\{\mathsf{open},w\}\) &
\(\mathcal E[\mathsf{carry}(T_w);b_J{:}E_\sigma^J]\) &
\(\{\mathsf{Keep}(T_w)\}\);
other arm removed\\
\(\mathsf{MergeJoin}(g,\sigma)\) &
\(\mathcal E[T_L\parallel_gT_R]\) &
\(\mathcal E[\mathsf{join}(\mathsf{carry}(T_L),
 \mathsf{carry}(T_R));b_J{:}E_\sigma^J]\) &
\(\{\mathsf{Keep}(T_L),\mathsf{Keep}(T_R)\}\)\\
\bottomrule
\end{tabular}
\end{table*}

In an open choice, the first processed workflow call atomically selects its arm and disables later calls from the other arm, whether that first call creates or reuses an authorization record.
Consequently, \(\mathsf{MergeSelect}(g,w,\sigma)\) is valid precisely while the unselected arm has made no recorded progress.
\(\mathsf{MergeJoin}\) replaces the current \(\parallel_g\) node with \(\mathsf{join}\), removes \(g\) from the set of IDs an edit may name, and prevents the same pair from being joined again.

\subsubsection{Preparing calls for the new rules}

Every Fork, Restore, or Merge edit prepares replacements for all runtime rules in its policy domain.
Before the edit takes effect, \(\beta(x)\) uses \(\eta^-=\eta\) for every current call \(x\).
The judgment for these six forms consists exactly of one row of \cref{tab:history-rules} applied inside the well-typed workflow context \(\mathcal E\), followed by the state update below.
Let \(e_u\) contain the edit record and exactly the copied, added, and group nodes prescribed by \(T'\); carried node IDs remain unchanged.
For every current target workflow call \(x\), the trusted runtime deterministically creates a rule entry \(j_x^+\) and authorization token \(h_x^+\).
The updated record \(\rho'\) preserves the action, source region, scope, digest, and retry information.
Define
\[
\begin{aligned}
 \mathcal H_u^+
   &=\{(x,j_x^+,h_x^+)\mid x\in X_{\sem{T'}}\},\\
 \beta'(x)&=(j_x^+,\eta^+).
\end{aligned}
\]
The new authorization tokens are inactive and inaccessible to the caller before the rule update takes effect.
The complete judgment is shown below.
\begin{equation}
\label{eq:derived-cut}
\begin{aligned}
 \Sigma_\zeta;H\vdash u\Downarrow
   &(H',\mathcal C_{H,u},\eta^-,\eta^+),\\
 H'&=(\omega,G\mathbin{\cdot}e_u,T',K,\Sigma_\zeta,
      \rho',\beta',\chi,\mathsf{ver}+1,\eta^+),\\
 \mathcal C_{H,u}&=\sem{T'},\qquad \eta^-=\eta .
\end{aligned}
\end{equation}
Here \(H'\) is the derived target record.
It becomes current and activates \(\eta^+\) only through the atomic rule update.
In this record, \(G\mathbin{\cdot}e_u\) retains earlier progress, the authority record for each result, and existing nodes.
It initializes each copied or added branch from its remaining registered contract, gives each result an inherited or rule-signed authority record, and records no progress for the new branch.
Thus the derivation preserves \(K,\Sigma_\zeta,\chi\), appends one record of the edit and its target nodes, and prepares a new rule entry and authorization token for every target call under \(\eta^+\).
Table constructors determine new group modes, while carried nested modes remain unchanged.
A nonexistent checkpoint, wrong group mode, nonmember winner, mismatched rule pattern, unauthorized removal, or undefined composition produces \(\mathsf{Invalid}\).
Independent domains step by asynchronous product, and each domain replaces all of its runtime rules at once.

\begin{lemma}[Exact edit derivation]
\label{lem:history-derivation}
For well-formed \(H\), the judgment in \cref{eq:derived-cut} is deterministic.
If it derives \(H'\), then \(H'\) is well formed.
It preserves required results and their signed authorities, retaining a record for every satisfied result and every still-required source result except a signed authorized removal.
The surrounding-workflow identity map and family \(\mathcal P_u\) preserve which calls refer to the same action, source regions, the authority for each result, and causal order.
For every \(a\in\mathsf{StillRequired}_u\), some complete target result \(t\) satisfies \(\mathsf{Covers}_u(t,a)\), including every unchanged result outside the edited region.
Every current target workflow call \(x\) has \(\beta'(x)=(j_x^+,\eta^+)\) and an authorization token for \(j_x^+\).
\end{lemma}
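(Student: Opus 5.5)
We prove the lemma by structural case analysis on the six constructors of \(\mathsf{Edit}_6\) in \cref{tab:history-rules}, inside a fixed workflow context \(\mathcal E\). The claims are discharged in the order they are stated: determinism, well-formedness, result preservation, coverage, and rule entries.

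\textbf{Determinism.} The request \(u\) names its edit kind, object IDs, and \(\sigma\). The constructors are disjoint, so at most one row of the table applies. Well-formedness (\cref{def:well-formed-history}, Structure) gives every ID in \(\mathsf{addr}(T)\) a unique context decomposition, which fixes \(\mathcal E\) and the replaced subtree. The registry \(\Sigma_\zeta\) is immutable and versioned, so \(\sigma\) fixes \(E_\sigma^\bullet\), source regions, and signing authorities. Fresh IDs and \(\eta^+\) are allocated as a deterministic function of \((\omega,\zeta,\mathsf{ver},u,\mathsf{role})\). The maps \(\mathsf{carry}\) and \(\mathsf{clone}_r\) are deterministic, so \(T'\), \(\mathcal P_u\), \(e_u\), \(\rho'\), \(\beta'\), and \(\mathcal H_u^+\) are all functions of \((\Sigma_\zeta,H,u)\). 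The failure conditions (missing checkpoint, wrong group mode, and the rest) are decidable predicates of the same data, so \(\mathsf{Invalid}\) is also determined.

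\textbf{Well-formedness of \(H'\).} We check the four groups of \cref{def:well-formed-history} in turn.

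\emph{Structure.} \(G\cdot e_u\) only appends a node set with fresh IDs and edges pointing to existing nodes, so acyclicity and global uniqueness are preserved. The new heads are exactly the roots prescribed by \(T'\). \(K\) is unchanged.

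\emph{Identity and progress.} \(\mathsf{clone}\) and \(\mathsf{carry}\) preserve \(q_{\rm act}\), scope, digest, and source region, so \(q^{\rho'}_{\rm act}=q_{\rm act}\). Calls sharing an action still agree on these fields. Copied and added branches have empty progress records, which are trivially downward closed. Carried progress is unchanged, and choice statuses are carried as well.

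\emph{Current workflow.} \(\operatorname{dom}\beta'=X_{\sem{T'}}\) holds by construction. Each leaf equals \(\Gamma_b/\chi_b\): carried leaves keep both, and copied leaves start from a registered remaining contract, which for checkpoints is the checked equality \(\mathcal C_b=\Gamma_b/\mathsf{raw}(\chi_b)\). Definedness of \(\sem{T'}\) is part of the judgment. The constructors preserve the no-proper-extension condition, and \cref{lem:observable-completion} handles remainders.

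\emph{Edit rules and domain.} These hold because every introduced object is checked to lie in \(\omega\), and the rule is recorded at version \(\zeta\). The main obstacle here is the claim that \(\mathsf{MergeJoin}\) and \(\mathsf{MergeSelect}\) keep IDs in \(\mathsf{addr}\) uniquely decomposable after \(g\) is removed. We handle it by noting that \(\mathsf{join}\) introduces no nameable group, and that \(\mathsf{MergeSelect}\) requires \(s\in\{\mathsf{open},w\}\), so the removed arm has no recorded progress to orphan.

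\textbf{Result preservation and coverage.} These follow from the side conditions of \(\vdash_{\rm edit}\). The acceptance equation, together with the disjoint partition, shows that every element of \(\mathsf{RequiredBefore}(H,u)\) is retained in exactly one of three places: \(\mathsf{StillRequired}_u\), the satisfied records, or the signed removal records kept in \(G\cdot e_u\). Signed authorities propagate because \(\mathsf{carry}\) and \(\mathsf{clone}\) preserve them.

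Preservation of shared actions, source regions, and order comes from the \(\preccurlyeq_{\mu_r}\) relations in \(\mathcal P_u\) and from the context identity map. The bijection conditions on \(\mu_r\) transfer membership and \(\prec\) in both directions.

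For coverage, we treat the two kinds of required result separately. For \((\mathsf{ctx},o)\) we take \(t=\iota^u_{\rm ctx}(o)\). For \((r,o)\) we use surjectivity of \(\pi_r\) and the required covering clause. That a suitable \(t\) is complete follows from \(\sem{T'}\) being defined and the target containing full indexed families: choice retains both arms, and products pair results.

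\textbf{Rule entries.} The final claim is immediate from the definitions of \(\mathcal H_u^+\) and \(\beta'\). Each \(x\in X_{\sem{T'}}\) receives \((j_x^+,\eta^+)\) and a token \(h_x^+\). Uniqueness of the \(j\) identifiers follows from the deterministic fresh allocation indexed by role.
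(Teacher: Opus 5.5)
Your proposal is correct and takes essentially the same route as the paper's proof. Unique identifiers and the immutable registry select a single row of \cref{tab:history-rules}. The clone/carry bijections and the context identity map preserve action sharing, source regions, result membership, and order. The common state update, fresh identifiers, the partial-constructor checks, and \cref{lem:observable-completion} give well-formedness and coverage, and \(\beta'\) holds by construction.

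One small point needs an explicit statement. Results introduced through \(E_\sigma^\bullet\) must be given their rule-signed authority record from \(\Sigma_\zeta\), because the Structure clause of \cref{def:well-formed-history} requires every edit-introduced result to name exactly one authority. Carry and clone only account for inherited results, so they do not cover this case.
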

\begin{proof}[Proof sketch]
The six cases use the copy/keep bijections and surrounding-workflow identity map to preserve action sharing, source regions, and order.
They also construct a target result satisfying each required \(\mathsf{Covers}_u\) relation.
The common state update establishes well-formedness and sets \(\beta'(x)=(j_x^+,\eta^+)\) for every current target workflow call.
\Cref{app:schema-fidelity-proof} gives the full argument.
\end{proof}

This section defines the checker semantics in two steps.
It first classifies each workflow call as a new authorization or reuse and then computes the largest set of executions that remains safe throughout execution.

Recall the authorization-label alphabet \(\mathcal U\) and regular policy \(\mathcal A\subseteq\mathcal U^*\), which is closed under truncation.
For authorization log \(\Delta\), \(\mathsf{lab}(\Delta)\in\mathcal A\) is its authorization sequence and \(D_\Delta\) is the set of actions that already have authorization records.
Fix \(\Theta=(\kappa,H,\Delta,\mathsf{out},\mathcal A,u)\) and write \(\mathsf{state}(\Theta)=(\kappa,H,\Delta,\mathsf{out},\mathcal A)\).
\(\mathsf{Derive}\) is a partial function.
For a valid initial rule update, a Fork, Restore, or Merge edit, or a registered extension, it produces a unique derived record.
Otherwise, it is undefined.
When defined, write \(\mathsf{Derive}(\Theta)=(\kappa_u,H_u,\mathcal A_u,\mathcal C_u,\eta^-,\eta^+)\).
Root derivation additionally requires that \(\mathsf{CheckReg}(\mathcal W,\mathcal R)\) accept.
Each Fork, Restore, or Merge edit has \((\kappa_u,\mathcal A_u)=(\kappa,\mathcal A)\) and uses the unique edit judgment.
\(\mathsf{Extend}(\xi)\) adds signed entries for edit rules, calls, the authority for each result, and policies while preserving the current workflow, its progress, and all earlier authorizations.
\Cref{app:registered-extension} gives the judgment.
Let \(\rho_u\) be the target registry in \(H_u\), let \(O_{H,u}=O_{\mathcal C_u}\), and use \(H'=H_u\) and \(\rho'=\rho_u\) where earlier notation is convenient.
The execution sets below use \(\Theta\), the edit-rule registry, \(\rho_u\), and target policy \(\mathcal A_u\).
Before the rules change, the trusted runtime checks the complete record again, including queued requests \(\mathsf{out}\), so sending a request concurrently makes an earlier answer stale.
We abbreviate the fixed parameters by writing \(H,u\).

\subsection{New and Reused Authorizations}
The same tool action may appear at several workflow-call positions after Fork or Restore, but it must be authorized at most once.
Function \(\Resolve^{\rho_u}_\Delta\) keeps every workflow call and annotates it as creating a new authorization or reusing an earlier one.
Starting with \(D_0=D_\Delta\), it maps a call sequence \(w=x_1\cdots x_n\) to annotated sequence \(z_1\cdots z_n\).
For \(d_i=q^{\rho_u}_{\rm act}(x_i)\),
\begin{equation}
\label{eq:resolve}
 z_i=
 \begin{cases}
  \Fresh(x_i,d_i), & d_i\notin D_{i-1},
       \quad D_i=D_{i-1}\cup\{d_i\};\\
  \Alias(x_i,d_i), & d_i\in D_{i-1},
       \quad D_i=D_{i-1}.
 \end{cases}
\end{equation}
\(\mathsf{auth}\) keeps the authorization label of each new authorization and contributes nothing for reuse.
\[
 \mathsf{auth}(\Fresh(x,d))=\ell(d),\qquad
 \mathsf{auth}(\Alias(x,d))=\epsilon .
\]
For annotated \(z\), \(\mathsf{raw}(z)\) drops the new/reuse annotation and action while preserving the ordered workflow-call IDs.
A \Retry repeats the same canonical request and adds no workflow call.
Restored workflow calls that share an action remain distinct.
The first call creates the authorization record, and later calls reuse it.

\begin{lemma}[New/reuse classification]
\label{lem:resolution}
\(\Resolve^{\rho_u}_\Delta\) is deterministic, preserves length, and gives the same annotations for the calls shared at the start of two sequences.
Among calls annotated as new, each action occurs at most once outside \(D_\Delta\), and dropping the annotations gives exactly the input sequence.
Moreover, if \(\Resolve^{\rho_u}_\Delta(rv)=zv'\), then \(z=\Resolve^{\rho_u}_\Delta(r)\) and \(v'=\Resolve^{\rho_u}_{\Delta_z}(v)\), where \(\Delta_z\) extends \(\Delta\) by the new authorizations in \(z\).
\end{lemma}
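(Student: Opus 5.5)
The plan is to prove every clause by induction on the length of the input word \(w=x_1\cdots x_n\), carrying the invariant that after processing \(x_1\cdots x_i\) the auxiliary set satisfies \(D_i=D_\Delta\cup\{d_j\mid j\le i,\ z_j=\Fresh(x_j,d_j)\}\). Each step of \cref{eq:resolve} is determined by \(x_i\), the fixed map \(q^{\rho_u}_{\rm act}\), and \(D_{i-1}\). Since the two cases \(d_i\notin D_{i-1}\) and \(d_i\in D_{i-1}\) are exhaustive and mutually exclusive, each step produces exactly one annotated letter, and induction gives both determinism and length preservation. Dropping annotations also returns the input: every \(z_i\) has workflow-call component \(x_i\) in either case, and \(\mathsf{raw}\) keeps exactly that component in order.

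For the common-prefix clause, I will show that \(z_i\) depends only on \(x_1\cdots x_i\). By induction, \(D_{i-1}\) is a function of \(x_1\cdots x_{i-1}\) alone, so two sequences that agree on their first \(m\) calls pass through identical \(D_0,\dots,D_m\). They therefore receive identical annotations \(z_1,\dots,z_m\).

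The at-most-once clause for new actions uses the invariant that \(D_i\) is monotone and that \(z_i=\Fresh(x_i,d_i)\) forces \(d_i\notin D_{i-1}\) and then inserts \(d_i\) into \(D_i\). If \(z_i\) and \(z_j\) with \(i<j\) are both new with \(d_i=d_j\), then \(d_j\in D_i\subseteq D_{j-1}\), which contradicts the \(\Fresh\) case at \(j\). The same argument shows that no new action lies in \(D_\Delta=D_0\).

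The decomposition clause is the only step with any substance, and the main care it needs is identifying \(D_{\Delta_z}\). Let \(|r|=m\). By the prefix clause applied to \(r\) and \(rv\), the first \(m\) letters of \(\Resolve^{\rho_u}_\Delta(rv)\) equal \(\Resolve^{\rho_u}_\Delta(r)\), so \(z=\Resolve^{\rho_u}_\Delta(r)\). By length preservation, \(v'\) has length \(|v|\). Next, \(\Delta_z\) extends \(\Delta\) by one authorization record per \(\Fresh\) letter of \(z\); these actions are distinct and outside \(D_\Delta\) by the previous clause. Hence \(D_{\Delta_z}=D_\Delta\cup\{d\mid \Fresh(x,d)\in z\}=D_m\) by the invariant. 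The run on \(v\) started from \(D_{\Delta_z}\) therefore has the same initial set as the suffix of the run on \(rv\) from step \(m+1\). Applying \cref{eq:resolve} letter by letter, with a second short induction on \(|v|\), gives \(v'=\Resolve^{\rho_u}_{\Delta_z}(v)\). Here I will rely on the well-formedness condition that each action has at most one authorization record. That condition ensures \(\Delta_z\) is itself a well-formed log whose set of authorized actions is exactly \(D_m\); it does not require \(\mathsf{lab}(\Delta_z)\in\mathcal A\).
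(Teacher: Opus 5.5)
Your proposal is correct and follows essentially the same route as the paper's proof. Like the paper, you use induction on input length, with \(D_i\) changing exactly at \Fresh steps, set membership to get the at-most-once property, and a split after \(r\) (reading \(|z|=|r|\), as intended) for the composition equation; you simply give more detail, and the appeal to one record per action is harmless but unnecessary, since the actions recorded in \(\Delta_z\) are \(D_m\) regardless.
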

\begin{proof}
Induction on input length shows that each annotation is determined by the calls seen so far and that \(D_i\) changes exactly when the function creates a new authorization.
Set membership ensures at most one new authorization per action.
Both cases retain \(x_i\).
Splitting after \(r\) gives the stated composition equation.
\end{proof}
Thus a retry changes neither the processed-call record nor the authorization sequence, although its successful API reply is observable as \(\mathsf{Return}(t,[v]_{\cong})\).
A copied workflow call that reuses an authorization record advances the processed-call record without changing the authorization sequence.

\subsection{Safe Completion Throughout Execution}

\subsubsection{Computing the largest safe execution set}

For each derived result \(o\in O_{H,u}\), define all of its complete executions and the subset allowed by policy.
\begin{equation}
\label{eq:safe-completions}
\begin{aligned}
 R_o(H,u)&=\{\Resolve^{\rho_u}_\Delta(w)\mid w\in\Lin(p_o)\},\\
 M_o(H,u)&=\{z\in R_o(H,u)\mid
   \mathsf{lab}(\Delta)\mathsf{auth}(z)\in\mathcal A_u\}.
\end{aligned}
\end{equation}
Here an execution is complete when every workflow call for its result has been processed, and requests and returns add no \Fresh or \Alias events.
For an annotated partial execution \(z\), let
\[
 \mathsf{Compat}(z)=
 \{o\mid \mathsf{raw}(z)\preceq w
		          \text{ for some }w\in\Lin(p_o)\}.
\]
For a set \(B\) of finite sequences, write
\(\Partial B=\{z\mid \exists y\in B.\ z\preceq y\}\).
\(\mathsf{PreservesRequired}(H,u,\widehat B)\) means that every result in \(\mathsf{StillRequired}_u\) is covered by some target result represented in \(\widehat B\):
\[
 \forall a\in\mathsf{StillRequired}_u.\quad
 \exists (t,y)\in\widehat B.\ \mathsf{Covers}_u(t,a).
\]

\begin{definition}[Safe execution set]
\label{def:safe-execution-set}
The checker tracks both the partial executions that the runtime may currently allow and the complete executions that justify allowing them.
A safe execution set is a pair \((L,\widehat B)\), where \(L\) contains allowed partial executions and \(\widehat B\) contains allowed complete executions labeled by result, satisfying the conditions below.
At the laptop Fork, for example, \(\widehat B\) contains the \(L\) and \(R\) payment--shipment executions labeled by \(o_L\) and \(o_R\), while \(L\) contains their partial executions.
\begin{enumerate}[leftmargin=*,itemsep=1pt]
 \item \(\widehat B\ne\varnothing\);
 \item \emph{workflow and results:} \(\mathsf{PreservesRequired}(H,u,\widehat B)\), \(\widehat B\subseteq \biguplus_{o\in O_{H,u}}\{o\}\times R_o(H,u)\), and \(L\subseteq\Partial(\bigcup_{o\in O_{H,u}}R_o(H,u))\);
 \item \emph{policy:} \(\mathsf{lab}(\Delta)\,\mathsf{auth}(z)\in\mathcal A_u\) for every \(z\in L\);
 \item \emph{completion from every partial execution:} \(L=\Partial(\pi_2\widehat B)\); and
 \item \emph{every compatible result remains completable:} for every \(z\in L\) and \(o\in\mathsf{Compat}(z)\), some \((o,y)\in\widehat B\) extends \(z\).
\end{enumerate}
\end{definition}

Checking each result separately is insufficient.
After a partial execution shared by several results, every result still compatible with that partial execution must retain a policy-safe completion.
The operator below removes exactly the complete executions that violate this condition.
Removing one completion may make another partial execution unsafe, so the checker repeats this pruning until no further execution must be removed.
Put
\[
 \widehat B_0=\biguplus_{o\in O_{H,u}}\{o\}\times M_o .
\]
For \(\widehat B\subseteq\widehat B_0\), define the monotone operator
\begin{equation}
\label{eq:robust-operator}
\widehat\Phi(\widehat B)=
\left\{(o,y)\in\widehat B_0\ \middle|\
\begin{array}{l}
\forall z\preceq y.\ \forall o'\in\mathsf{Compat}(z).\\[-2pt]
\exists (o',y')\in\widehat B.\ z\preceq y'
\end{array}\right\}.
\end{equation}
Set \(\widehat B_{i+1}=\widehat\Phi(\widehat B_i)\).
Since \(\widehat B_1\subseteq\widehat B_0\), repeated application removes executions until the finite set stops changing:
\[
\begin{aligned}
 \widehat B^\dagger_{H,u}
   &=\nu\widehat B.\widehat\Phi(\widehat B)
     =\bigcap_{i\geq 0}\widehat B_i,\\
 B^\dagger_{H,u}&=\pi_2\widehat B^\dagger_{H,u},
 &W^\dagger_{H,u}&=\Partial(B^\dagger_{H,u}).
\end{aligned}
\]
\begin{definition}[Safe Fork, Restore, or Merge edit]
\label{def:safe-execution-edit}
\begin{equation}
\label{eq:largest-safe-set}
\mathsf{SafeEdit}(\Theta)
 \quad\Longleftrightarrow\quad
 \begin{cases}
 \widehat B^\dagger_{H,u}\ne\varnothing,
   & \mathsf{Derive}(\Theta)\text{ is defined};\\
 \mathsf{false},&\text{otherwise}.
 \end{cases}
\end{equation}
\end{definition}
At \(z=\epsilon\), the fixed-point condition retains a completion for every target result, and the checked coverage maps cover every required source result.
The existential quantifier chooses a call order, and the universal condition keeps every compatible result completable after every partial execution.

\begin{lemma}[Largest safe execution set]
\label{lem:largest-safe-family}
A safe execution set exists iff \(\mathsf{SafeEdit}(\Theta)\).
When the edit is safe, \((W^\dagger_{H,u},\widehat B^\dagger_{H,u})\) is a safe execution set, and every safe execution set \((L,\widehat B)\) satisfies \(\widehat B\subseteq\widehat B^\dagger_{H,u}\) and \(L\subseteq W^\dagger_{H,u}\).
\end{lemma}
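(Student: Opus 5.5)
The plan is a standard greatest-fixed-point argument over the finite set \(\widehat B_0\), in two halves. First, the fixed point itself is a safe execution set whenever it is nonempty. Second, every safe execution set is a post-fixed point of \(\widehat\Phi\) inside \(\widehat B_0\), so it lies below \(\widehat B^\dagger_{H,u}\).

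I start by recording the basic facts. \(\widehat\Phi\) is monotone, since enlarging \(\widehat B\) only weakens the existential. The chain \(\widehat B_i\) decreases, because \(\widehat B_1\subseteq\widehat B_0\) and monotonicity give the induction. Since \(\widehat B_0\) is finite (finitely many results, each with finitely many linearizations), the chain stabilizes. Hence \(\bigcap_i\widehat B_i\) is a fixed point, and it is the greatest one: any \(\widehat B\subseteq\widehat B_0\) with \(\widehat B\subseteq\widehat\Phi(\widehat B)\) satisfies \(\widehat B\subseteq\widehat B_i\) for all \(i\), again by induction and monotonicity.

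\textbf{Soundness.} Assume \(\mathsf{SafeEdit}(\Theta)\), so \(\mathsf{Derive}\) is defined and \(\widehat B^\dagger\neq\varnothing\). I check the five conditions of \cref{def:safe-execution-set} for \((W^\dagger,\widehat B^\dagger)\).
\begin{enumerate}[leftmargin=*,itemsep=1pt]
\item Nonemptiness is the hypothesis.
\item Containment holds because \(\widehat B^\dagger\subseteq\widehat B_0\subseteq\biguplus_o\{o\}\times R_o\), and \(W^\dagger\) is a prefix closure of a subset of \(\bigcup_o R_o\).
\item For policy, every \(z\in W^\dagger\) is a prefix of some \(y\in M_o\). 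Since \(\mathsf{auth}\) is a homomorphism on annotated words and \(\mathsf{lab}(\Delta)\mathsf{auth}(y)\in\mathcal A_u\), we get \(\mathsf{lab}(\Delta)\mathsf{auth}(z)\preceq\mathsf{lab}(\Delta)\mathsf{auth}(y)\). Truncation closure of \(\mathcal A_u\) then gives the claim.
\item Completion from every partial execution holds by definition of \(W^\dagger\).
\item Compatible completability is exactly the fixed-point equation. Given \(z\in W^\dagger\), choose \((o,y)\in\widehat B^\dagger\) with \(z\preceq y\). Since \((o,y)\in\widehat\Phi(\widehat B^\dagger)\), every \(o'\in\mathsf{Compat}(z)\) has some \((o',y')\in\widehat B^\dagger\) extending \(z\).
\end{enumerate}
The remaining piece is \(\mathsf{PreservesRequired}\). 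I take \(z=\epsilon\): every target \(o'\in O_{H,u}\) lies in \(\mathsf{Compat}(\epsilon)\), because each pomset has a linearization. So the fixed point retains some completion labeled \(o'\) for every target result. \Cref{lem:history-derivation} provides, for each \(a\in\mathsf{StillRequired}_u\), a target result \(t\) with \(\mathsf{Covers}_u(t,a)\), and \(\widehat B^\dagger\) contains some pair \((t,y)\).

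\textbf{Maximality and converse.} Let \((L,\widehat B)\) be any safe execution set. By condition 2, \(\widehat B\subseteq\biguplus\{o\}\times R_o\), and condition 3 with condition 4 places every complete \(y\) in \(L\). So the policy holds on \(y\), giving \(\widehat B\subseteq\widehat B_0\). For \((o,y)\in\widehat B\) and \(z\preceq y\), we have \(z\in L\) by condition 4, and condition 5 yields the witnesses \(\widehat\Phi\) requires. Hence \(\widehat B\subseteq\widehat\Phi(\widehat B)\), so \(\widehat B\subseteq\widehat B^\dagger\). Then \(L=\Partial(\pi_2\widehat B)\subseteq W^\dagger\). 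Nonemptiness of \(\widehat B\) gives \(\widehat B^\dagger\neq\varnothing\).

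For the ``only if'' direction I also need \(\mathsf{Derive}\) to be defined. I would argue that the notions \(R_o(H,u)\) and \(\mathsf{StillRequired}_u\) only make sense when it is, treating the undefined case as vacuous or false by convention.

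\textbf{Main obstacle.} The step I expect to be delicate is the policy and completeness bookkeeping on annotated words. The annotations on a prefix \(z\) must coincide with those it carries inside \(y\), so that the same \(\mathsf{auth}(z)\) appears in both conditions. This is exactly the prefix-consistency clause of \cref{lem:resolution}, which I would invoke explicitly. A second subtlety is that \(\mathsf{Compat}\) is defined via \(\mathsf{raw}(z)\). Witnesses \(y'\) with \(z\preceq y'\) are annotated words, and matching them against raw linearizations again needs \(\Resolve\) to be deterministic on shared prefixes.
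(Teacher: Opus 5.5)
Your proposal is correct and follows the paper's proof essentially step for step. Both arguments place any safe execution set inside \(\widehat B_0\) as a post-fixed point of \(\widehat\Phi\), so it lies below the greatest fixed point by finite Knaster--Tarski. Both then verify the conditions of \cref{def:safe-execution-set} for \((W^\dagger_{H,u},\widehat B^\dagger_{H,u})\), using the fixed-point equation at \(z=\epsilon\) with the coverage witnesses of \cref{lem:history-derivation} for \(\mathsf{PreservesRequired}\), and truncation closure of \(\mathcal A_u\) for the policy condition.

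The obstacle you anticipate does not arise in this lemma. Condition 5 of \cref{def:safe-execution-set} and \(\widehat\Phi\) use the same \(\mathsf{Compat}\) and the same prefix relation on annotated words, and the policy step only needs \(\mathsf{auth}\) to act letter by letter, so you do not need \cref{lem:resolution} here.
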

\begin{proof}
The workflow, result, and policy conditions place every safe execution set's complete executions inside \(\widehat B_0\).
The final condition makes this set post-fixed, so monotonicity and finite Knaster--Tarski place it inside \(\widehat B^\dagger\), and the projected partial executions give \(L\subseteq W^\dagger\).
Conversely, \(\widehat B^\dagger=\widehat\Phi(\widehat B^\dagger)\) keeps every compatible result completable.
At \(z=\epsilon\), this condition and the checked coverage witnesses from \cref{lem:history-derivation} establish \(\mathsf{PreservesRequired}\).
The inclusion \(\widehat B^\dagger\subseteq\widehat B_0\) ensures that complete executions follow the edited workflow and policy, while closure of \(\mathcal A_u\) under truncation makes every allowed partial execution safe.
Finally, \(W^\dagger=\Partial(\pi_2\widehat B^\dagger)\) gives a completion after every allowed partial execution.
Hence the fixed point is a safe execution set exactly when nonempty.
\end{proof}

\(\mathsf{MaxSafe}(\Theta;L,\widehat B)\) means that \((L,\widehat B)\) is safe and contains every other safe execution set.
When such a pair exists, write its unique value as \(\mathsf{SafeBehavior}(\Theta)\).
\Cref{lem:largest-safe-family} proves that this witness exists exactly when \(\widehat B^\dagger_{H,u}\ne\varnothing\) and then equals \((W^\dagger_{H,u},\widehat B^\dagger_{H,u})\).

\subsubsection{Why Result-by-Result Checking Fails}

\Cref{sec:implications} introduced the contract \(X\otimes(Y\oplus Z)\) under policy \(\{\epsilon,x,y,xz,yx\}\).
For its annotated calls, write \(\bar x=\Fresh(x_o,d_x)\), \(\bar y=\Fresh(y_o,d_y)\), and \(\bar z=\Fresh(z_o,d_z)\).
The candidate family initially contains \(\bar y\bar x\) and \(\bar x\bar z\).
The first fixed-point iteration removes \(\bar x\bar z\).
The second iteration removes \(\bar y\bar x\), because after the first removal \(z=\epsilon\) no longer has a completion for the \(X\otimes Z\) result.
The empty fixed point is the formal reason the exact checker rejects.

\begin{proposition}[Exact correspondence with generalized nonblocking]
\label{prop:result-necessary}
For nonempty \(\widehat B\subseteq\widehat B_0\), let \(\mathcal T_{\widehat B}\) be the trie for \(L_{\widehat B}=\Partial(\pi_2\widehat B)\).
Mark state \(z\) by \(\mathsf{possible}_o\) exactly when \(o\in\mathsf{Compat}(z)\), and mark state \(y\) by \(\mathsf{done}_o\) exactly when \((o,y)\in\widehat B\).
Then
\[
\begin{gathered}
 \widehat B\subseteq\widehat\Phi(\widehat B)\\[-2pt]
 \Longleftrightarrow\
 \forall o\in O_{H,u}.\;
 \mathcal T_{\widehat B}\text{ is }
 (\mathsf{possible}_o,\mathsf{done}_o)\text{-nonblocking}.
\end{gathered}
\]
Consequently, if \(\widehat B^\dagger\) is nonempty, it is the largest nonempty completion subfamily of \(\widehat B_0\) satisfying the nonblocking condition for every result.
If it is empty, no nonempty subfamily satisfies the conditions.
Ordinary marker nonblocking of the unaugmented projected language is strictly weaker.
\end{proposition}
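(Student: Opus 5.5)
The plan is to unfold both sides of the equivalence into statements about the trie \(\mathcal T_{\widehat B}\) and see that they quantify over the same pairs \((z,o)\). I take \((\mathsf{possible}_o,\mathsf{done}_o)\)-nonblocking in the usual generalized sense: from every reachable trie state marked \(\mathsf{possible}_o\), some path inside the trie reaches a state marked \(\mathsf{done}_o\). The states of \(\mathcal T_{\widehat B}\) are exactly the members of \(L_{\widehat B}=\Partial(\pi_2\widehat B)\). Paths in the trie from \(z\) are exactly the extensions \(y'\succeq z\) with \(y'\in L_{\widehat B}\). Hence nonblocking for \(o\) reads: for every \(z\in L_{\widehat B}\) with \(o\in\mathsf{Compat}(z)\), some \((o,y')\in\widehat B\) has \(z\preceq y'\).

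For the forward direction, assume \(\widehat B\subseteq\widehat\Phi(\widehat B)\), and fix \(o\) and a state \(z\in L_{\widehat B}\) marked \(\mathsf{possible}_o\). By definition of \(L_{\widehat B}\), there is \((o'',y)\in\widehat B\) with \(z\preceq y\). Since \((o'',y)\in\widehat\Phi(\widehat B)\), the universal clause of \cref{eq:robust-operator} applies to this prefix \(z\) and to \(o\in\mathsf{Compat}(z)\). It yields \((o,y')\in\widehat B\) with \(z\preceq y'\), so the state \(y'\) is marked \(\mathsf{done}_o\) and is reachable from \(z\).

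For the converse, take \((o'',y)\in\widehat B\). We have \((o'',y)\in\widehat B_0\) because \(\widehat B\subseteq\widehat B_0\) is assumed. Every \(z\preceq y\) lies in \(L_{\widehat B}\), and for each \(o'\in\mathsf{Compat}(z)\) nonblocking for \(o'\) supplies the required witness. Thus \((o'',y)\in\widehat\Phi(\widehat B)\).

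The \emph{consequently} part then follows from finite Knaster--Tarski, as in \cref{lem:largest-safe-family}. By the equivalence, any nonempty subfamily of \(\widehat B_0\) satisfying nonblocking for every result is a post-fixed point. By monotonicity it is therefore contained in \(\widehat B^\dagger\). Conversely, \(\widehat B^\dagger\) is a fixed point, so when nonempty it satisfies the condition and is the largest such subfamily. When \(\widehat B^\dagger\) is empty, every candidate is contained in \(\varnothing\), so none is nonempty.

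The main obstacle is the last sentence, strictness against ordinary marker nonblocking of the projected language \(\pi_2\widehat B\). Ordinary nonblocking only requires that every prefix reach some marked complete word, and it forgets result labels. I would use the example \(X\otimes(Y\oplus Z)\) from above with \(\widehat B=\{(X\otimes Y,\bar y\bar x)\}\), the family left after the first pruning step. Its trie \(\{\epsilon,\bar y,\bar y\bar x\}\) is ordinarily nonblocking, since every prefix extends to the marked word \(\bar y\bar x\). However, \(\epsilon\) is marked \(\mathsf{possible}\) for the \(X\otimes Z\) result and has no \(\mathsf{done}\) state for it. So the augmented condition fails, and this family is not post-fixed, consistent with the second iteration removing \(\bar y\bar x\). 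One care point is that the unaugmented projected language cannot tell apart two results sharing a trace, which is why the result labels must be kept in the marking; this is what makes the separation strict.
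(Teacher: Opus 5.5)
Your proof is correct and matches the paper's: you unfold \((\mathsf{possible}_o,\mathsf{done}_o)\)-nonblocking into ``every \(z\in L_{\widehat B}\) with \(o\in\mathsf{Compat}(z)\) is extended by some \((o,y')\in\widehat B\),'' which is exactly the universal clause of \(\widehat\Phi\), and greatestness then follows from the greatest-fixed-point construction. Your strictness witness, the once-pruned family \(\{(X\otimes Y,\bar y\bar x)\}\), differs from the paper's full candidate pair \((\Partial\{\bar y\bar x,\bar x\bar z\},\{\bar y\bar x,\bar x\bar z\})\), whose indexed fixed point is empty, but both work because ordinary marker nonblocking of \((\Partial(\pi_2\widehat B),\pi_2\widehat B)\) holds automatically for any nonempty \(\widehat B\); note also that in your example the separation comes from a contract-determined \(\mathsf{possible}_{X\otimes Z}\) mark at \(\epsilon\) with no surviving \(\mathsf{done}_{X\otimes Z}\) state, not from two results sharing a trace as your closing remark suggests.
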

\begin{proof}
From reachable partial execution \(z\), a \(\mathsf{done}_o\)-marked state is reachable exactly when some \((o,y)\in\widehat B\) extends \(z\).
Quantifying over exactly the states marked \(\mathsf{possible}_o\) is therefore the post-fixed-point condition.
Greatestness follows from the greatest-fixed-point construction.
For strictness, the preceding instance has the marker-nonblocking projected pair \((\Partial\{\bar y\bar x,\bar x\bar z\}, \{\bar y\bar x,\bar x\bar z\})\), but its indexed fixed point is empty.
\end{proof}

\Cref{prop:result-necessary} relates the fixed point to generalized nonblocking after the trusted runtime has derived the completion condition for each result~\cite{dequeiroz2005multitasking,malik2008generalised}.
The trusted runtime derives those conditions from the requested edit, authorized removals, actions, authorization records, and the execution record.
It also decides whether each tool call creates or reuses an authorization, computes the largest safe execution set or a rejection proof, and puts an accepted edit into effect atomically.

\subsubsection{Rejection Proofs}

Each removed \((o_{\rm rem},y)\) records its rank \(i\), a partial execution \(z\preceq y\), and uncovered \(o_{\rm miss}\in\mathsf{Compat}(z)\).
Induction shows that every post-fixed \(P\) is contained in every \(\widehat B_i\).
An empty final family therefore proves that no safe execution set exists.
Before the rule update, the trusted runtime recomputes both the removal order and the remaining completions.

\subsubsection{Checking Again After a Permitted Partial Execution}

To show that the checker can be applied again after each permitted partial execution, fix \(H,u,\mathcal A_u\) after deriving \(\mathcal C\), and write \(W^\dagger(\mathcal C,\Delta)\) and \(\widehat B^\dagger(\mathcal C,\Delta)\) with those parameters suppressed.
\begin{lemma}[Rechecking after a partial execution]
\label{lem:recheck-coherence}
If \(z\in W^\dagger(\mathcal C,\Delta)\), then
\(\Resolve^{\rho'}_\Delta(\mathsf{raw}(z))=z\),
\(\mathsf{lab}(\Delta_z)=
 \mathsf{lab}(\Delta)\mathsf{auth}(z)\), and
\[
\begin{aligned}
 W^\dagger(\mathcal C,\Delta)/z
  &=W^\dagger(\mathcal C/\mathsf{raw}(z),\Delta_z),\\
 \widehat B^\dagger(\mathcal C,\Delta)/z
  &=\widehat B^\dagger(
      \mathcal C/\mathsf{raw}(z),\Delta_z),
\end{aligned}
\]
where
\(\widehat B/z=\{(o,v)\mid(o,zv)\in\widehat B\}\).
\end{lemma}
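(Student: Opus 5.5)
The proof has four steps: the two scalar identities first, then a correspondence of the candidate families, then the fixed-point iteration.

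\textbf{Scalar identities.} Since \(z\in W^\dagger(\mathcal C,\Delta)=\Partial(B^\dagger)\), there is some \((o,y)\in\widehat B^\dagger\subseteq\widehat B_0\) with \(z\preceq y\). By definition \(y=\Resolve^{\rho'}_\Delta(w)\) for some \(w\in\Lin(p_o)\). Write \(w=rv\) with \(|r|=|z|\). The prefix clause of \cref{lem:resolution} gives \(z=\Resolve^{\rho'}_\Delta(r)\), and dropping annotations gives \(r=\mathsf{raw}(z)\). This proves \(\Resolve^{\rho'}_\Delta(\mathsf{raw}(z))=z\). The identity \(\mathsf{lab}(\Delta_z)=\mathsf{lab}(\Delta)\mathsf{auth}(z)\) holds because \(\Delta_z\) appends exactly the \Fresh entries of \(z\) in order, and each contributes \(\ell(d)\).

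\textbf{Bijection between candidate families.} I would show that \(v\mapsto zv\) is a bijection from \(\widehat B_0(\mathcal C/\mathsf{raw}(z),\Delta_z)\) onto \(\{(o,y)\in\widehat B_0(\mathcal C,\Delta)\mid z\preceq y\}\), and that it preserves result labels.

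The result indices of \(\mathcal C/\mathsf{raw}(z)\) are exactly the \(o\in\mathsf{Compat}(z)\). For each such \(o\), \(\Lin(p_o/\mathsf{raw}(z))\) consists of the \(v\) with \(\mathsf{raw}(z)v\in\Lin(p_o)\). The composition clause of \cref{lem:resolution} gives \(\Resolve_\Delta(\mathsf{raw}(z)v)=z\,\Resolve_{\Delta_z}(v)\). The policy test transports as well: \(\mathsf{lab}(\Delta)\mathsf{auth}(zv')=\mathsf{lab}(\Delta_z)\mathsf{auth}(v')\). Compatibility transports too: \(\mathsf{Compat}_{\mathcal C/\mathsf{raw}(z)}(v)=\mathsf{Compat}_{\mathcal C}(zv)\). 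Result identity is preserved because \(\mathcal C/w\) keeps indices.

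\textbf{Commuting with the operator.} Next I would show that, for families restricted to extensions of \(z\), the quotient commutes with the operator: \(\widehat\Phi'(\widehat B/z)=\widehat\Phi(\widehat B)/z\).

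The main obstacle is here. The condition in \cref{eq:robust-operator} quantifies over all prefixes \(z'\preceq zv\), including the proper prefixes of \(z\) itself. The quotient family only sees prefixes of \(v\). So \(\widehat B_i/z\) can agree with the recomputation from \(\Delta_z\) only if the constraints at prefixes of \(z\) never remove an extension of \(z\) that the local condition keeps.

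I would avoid iterating in lockstep and argue at the level of greatest fixed points.

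\emph{Inclusion \(\widehat B^\dagger/z\subseteq\widehat B^\dagger(\mathcal C/\mathsf{raw}(z),\Delta_z)\).} The family \(\widehat B^\dagger/z\) is post-fixed for \(\widehat\Phi'\). Each prefix \(v_0\) of \(v\) corresponds to \(zv_0\), where \(\widehat B^\dagger\)'s own condition supplies witnesses \((o',zv_0y')\). These witnesses extend \(z\), so they lie in the quotient. Knaster--Tarski, as in \cref{lem:largest-safe-family}, then gives the inclusion.

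\emph{Reverse inclusion.} Let \(P'=\widehat B^\dagger(\mathcal C/\mathsf{raw}(z),\Delta_z)\) and form \(Q=\widehat B^\dagger\cup\{(o,zv)\mid(o,v)\in P'\}\). I need \(Q\) to be post-fixed for \(\widehat\Phi\).
\begin{itemize}
\item For new members and prefixes \(z'\preceq z\): the witnesses already exist in \(\widehat B^\dagger\), since \(z'\preceq z\in W^\dagger\) and \(\widehat B^\dagger\) is a fixed point, and adding members only helps.
\item For longer prefixes \(zv_0\): the witnesses come from \(P'\) through the bijection.
\item For old members, adding elements cannot break the condition.
\end{itemize}
Hence \(Q\subseteq\widehat B^\dagger\), which gives \(P'\subseteq\widehat B^\dagger/z\).

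\textbf{Partial-execution identity.} The \(W^\dagger\) equation follows by applying \(\Partial\) and \(\pi_2\) to both sides, using \(\Partial(\pi_2\widehat B)/z=\Partial(\pi_2(\widehat B/z))\).
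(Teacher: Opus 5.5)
Your proposal is correct and follows essentially the same route as the paper. Both arguments obtain the scalar identities from the prefix and composition clauses of \cref{lem:resolution}, and both transport \(\mathsf{Compat}\), policy, and result labels to get \(\widehat B_0(\mathcal C,\Delta)/z=\widehat B_0(\mathcal C/\mathsf{raw}(z),\Delta_z)\). They then show that \(\widehat B^\dagger/z\) is post-fixed for the next-check operator and that \(Q=\widehat B^\dagger\cup zP'\) is post-fixed for \(\widehat\Phi\), with prefixes \(t\preceq z\) witnessed in \(\widehat B^\dagger\) and prefixes \(zs\) witnessed in \(P'\), and finish with \(\Partial(B)/z=\Partial(B/z)\). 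You were right to drop the lockstep claim \(\widehat\Phi'(\widehat B/z)=\widehat\Phi(\widehat B)/z\), which fails in general for the reason you identify.
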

\begin{proof}[Proof sketch]
Applying \(\Resolve\) incrementally recovers the original calls, gives the authorization-log equality, and makes the candidate executions after \(z\) identical on both sides.
Removing the executed sequence \(z\) from \(\widehat B^\dagger\) gives a post-fixed family for the next check, while adding \(z\) to any such family gives one for the original check.
Greatestness yields the fixed-point equality, and \(\Partial(B)/z=\Partial(B/z)\) yields the generated-language equality.
The full proof is in \cref{app:recheck-coherence-proof}.
\end{proof}
Thus every accepted partial execution retains a safe completion for every result that remains compatible with that execution.

\subsubsection{Checker Outputs}

The checker returns \(\mathsf{Invalid}\) when derivation fails, \(\mathsf{Reject}\) with the ordered reasons for removing every execution when the fixed point is empty, and \(\mathsf{Accepted}\) with the largest safe execution set otherwise.
Write \(\mathsf{Check}(\Theta)\) for this deterministic output.
Each answer includes a digest of the execution record that was checked.
\Cref{app:returned-data} gives the exact returned data and verification rules.
\Cref{prop:app-output-complexity} proves an output-sensitive \(O(D+n\Lambda+|\mathcal G_{\rm cov}|+V_B\log(2+V_B))\)-time, \(O(D+\Lambda+|\mathcal G_{\rm cov}|)\)-space bound.
Exponential growth occurs only in explicitly emitted linearizations and pairs linking partial executions to required results.

\section{Runtime Enforcement}
\label{sec:enforcement}

The checker produces either a rejection proof or the largest safe execution set.
The trusted runtime turns a nonempty safe execution set into rules checked before each tool call.
It builds a finite automaton for the allowed partial executions, records which version of the execution record was checked, and atomically puts the new rules into effect.
The protocol orders every overlapping tool call either before or after the atomic rule update.
The proof compares this runtime protocol with an ideal machine that puts an accepted edit into effect in one atomic step and allows exactly its safe partial executions.
The ideal transition system and the detailed runtime rules appear in \cref{sec:ideal,tab:app-event-matrix}.
This section gives the automaton construction, the invariant after the rules change, and the atomic update used by the formal guarantees in the next section.

\subsection{Turning Safe Executions into Runtime Rules}

The runtime enforces an accepted safe execution set by tracking exactly which safe partial executions remain possible after each processed call.
Removing result labels from the largest safe execution set gives the finite pair \((W,B)=(W^\dagger_{H,u},B^\dagger_{H,u})\).
For \(L/z=\{v\mid zv\in L\}\), let \(q_z=(W/z,B/z)\) be the automaton state after executing \(z\).
\begin{equation}
\label{eq:canonical-automaton}
\begin{aligned}
 Q_{W,B} &= \{q_z\mid z\in W\},\\
 q^0&=q_\epsilon=(W,B),\\
 q_z\xrightarrow{a}q_{za}
 &\quad\Longleftrightarrow\quad za\in W .
\end{aligned}
\end{equation}
Call this deterministic automaton \(\mathsf{Can}(W,B)\).
For the laptop example, its initial state enables either supplier's payment; after the \(L\)-payment transition, it enables the \(L\)-shipment transition but not the \(R\)-payment transition.
Every state is reachable and is marked exactly when \(\epsilon\in B^\dagger/z\).
Thus a marked state represents a completed execution, while an unmarked state records safe work that may still continue.

Each transition symbol \(a\) contains workflow call \(x\), action \(d\), and its \Fresh or \Alias mode, so one workflow call cannot use another call's rule entry.

\begin{lemma}[Exact finite enforcement]
\label{lem:runtime-language}
The generated and marked languages of the automaton in \cref{eq:canonical-automaton} are exactly \((W^\dagger_{H,u},B^\dagger_{H,u})\).
The automaton is the smallest deterministic marked partial automaton for this pair up to state renaming.
\end{lemma}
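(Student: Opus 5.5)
The plan has two parts. First show that the automaton $\mathsf{Can}(W,B)$ generates $W$ and marks $B$. Then run the Myhill--Nerode argument for marked languages to get minimality. Throughout, write $(W,B)=(W^\dagger_{H,u},B^\dagger_{H,u})$. We use two facts: $W=\Partial(B)$ by \cref{lem:largest-safe-family}, and $W$ is prefix closed and contains $\epsilon$ whenever the edit is safe.

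\textbf{Step 1: the automaton is well defined and deterministic.} The delicate point is the transition rule $q_z\xrightarrow{a}q_{za}$, which is stated on representatives $z$. Suppose $q_z=q_{z'}$, that is, $W/z=W/z'$ and $B/z=B/z'$. Then $za\in W$ iff $a\in W/z$ iff $z'a\in W$. Also $W/(za)=(W/z)/a$, so the target state $q_{za}=q_{z'a}$ does not depend on the representative. Hence the transition relation is a partial function on states. By the same reasoning, the marking predicate $\epsilon\in B/z$ depends only on $q_z$.

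\textbf{Step 2: the languages are exactly $(W,B)$.} By induction on $|w|$, show that $q^0$ reads $w$ iff $w\in W$, and that it then ends in $q_w$. The inductive step uses the transition rule together with prefix closure of $W$: if $wa\in W$ then $w\in W$. This gives generated language $W$ and proves every state reachable. For the marked language, $w$ is accepted iff $w\in W$ and $\epsilon\in B/w$, that is, iff $w\in B$. The condition $w\in W$ is automatic because $B\subseteq\Partial(B)=W$.

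\textbf{Step 3: minimality.} Let $\mathcal M$ be any deterministic marked partial automaton with the pair $(W,B)$ as its generated and marked languages. The map $q^{\mathcal M}(z)\mapsto q_z$, defined for $z\in W$, is well defined. If two words $z,z'$ reach the same state of $\mathcal M$, their futures coincide, so $W/z=W/z'$ and $B/z=B/z'$. This map is surjective onto $Q_{W,B}$ and commutes with transitions and marking. Hence $|Q_{W,B}|$ is at most the number of reachable states of $\mathcal M$.

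Now suppose $\mathcal M$ also has the minimum number of states. Then it is trim, and the map above is a bijection that preserves initial state, transitions and marking, so it is an isomorphism up to renaming. For the direction showing $\mathsf{Can}$ is itself reduced, use that $W=\Partial(B)$ makes a state's pair determined by its future. Distinct states $q_z\ne q_{z'}$ differ in $W/z$ or $B/z$, so they are Nerode-inequivalent and cannot be merged.

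\textbf{Main obstacle.} The main subtlety is the notion of minimality for partial automata that carry both a generated language and a marked language. We plan to define the right congruence as $z\sim z'$ iff $(W/z,B/z)=(W/z',B/z')$, rather than using $B$ alone. Since $W=\Partial(B)$, the two components agree: $W/z=\Partial(B/z)$, so the congruence reduces to equality of $B/z$ on $W$. We would state this explicitly, so the result matches the standard minimal DFA for the marked language restricted to $W$, with the dead state omitted.

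Finiteness of $Q_{W,B}$ follows because $B$ is a finite set. Its elements are linearizations of finite pomsets, and $O_{H,u}$ is finite.
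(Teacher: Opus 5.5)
Your proposal is correct and follows the paper's argument: induction on $z$ shows that the automaton reaches $(W^\dagger/z,B^\dagger/z)$, enables $a$ exactly when $za\in W^\dagger$, and marks exactly $B^\dagger$, and a Nerode (Brzozowski-derivative) argument shows that distinct states differ on some remaining generated or marked sequence. You also supply details the paper leaves implicit: that transitions and marking do not depend on the representative $z$, and the explicit surjection from any recognizing automaton onto $Q_{W,B}$, which gives uniqueness up to state renaming.
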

\begin{proof}
Induction on \(z\) shows that the automaton reaches \((W^\dagger/z,B^\dagger/z)\), enables \(a\) iff \(za\in W^\dagger\), and marks the reached state iff \(z\in B^\dagger\).
Distinct states differ on a remaining generated or marked sequence and must remain distinct.
The automaton is the paired-language derivative construction~\cite{brzozowski1964derivatives}.
\end{proof}

\subsection{Atomic Rule Update}

Before the rules change, the trusted runtime re-derives the edit and recomputes the fixed point from the execution record.
If any tool call has committed since the candidate was computed, the candidate is stale and changes nothing.
Otherwise, one policy-domain transaction closes the old rule version, activates the new automaton, replaces the rule entry and authorization token for every current workflow call, and only then returns success to the Agent.
For each tool call, a transaction verifies its workflow-call ID, action, canonical request, authorization token, and enabled automaton transition before recording progress.
A \Fresh call creates one authorization record and queues the canonical request.
An \Alias call advances the workflow by linking to the existing record without extending the authorization sequence.

\paragraph{Invariant after the rule update}
We write \(\mathsf{AgentSec}(S;c,r)\) when five conditions hold: (1) the workflow now in force and its required results come from the checked edit; (2) the stored nonempty execution set is the checker's largest safe execution set; (3) the workflow, authorization log, queue, and checkpoints match the processed-call sequence \(r\); (4) the automaton is in exactly the state reached after \(r\); and (5) exactly one rule version is active, with one rule entry and authorization token for every remaining workflow call.
For a submitted tool request, one transaction verifies the workflow call, action, canonical request, authorization token, active rule version, and enabled automaton transition before recording any progress.
The same transaction either creates the authorization record and queues the request, or records reuse of an existing authorization record and its return value.

\paragraph{Final check}
For current state \(S\), write
\(\Theta_S(u)=(\kappa,H,\Delta,\mathsf{out},\mathcal A,u)\)
for the safety-check instance obtained from its current components.
Write \(\mathsf{Ready}(S,u,Y)\) when the trusted runtime re-derives \(u\) from the current execution record, recomputes the same nonempty fixed point and automaton carried by \(Y\), confirms that the record has not changed since \(Y\) was computed, confirms that the old rule version is current, and confirms that the new version is unallocated or inactive.
If these checks pass, one transaction closes the old version, activates the new automaton, replaces every current rule entry and authorization token, and then returns success to the Agent.
A changed state makes the candidate stale and forces a new check.
The complete state updates, retries, request sending, completion updates, and restart rules appear in \cref{app:runtime-transitions,tab:app-event-matrix}.

\begin{lemma}[Edit--call serialization]
\label{lem:cut-linearization}
In every reachable well-formed runtime state, a visible edit answer and a tool call in the same policy domain have a unique order at the atomic rule update.
If the call commits first, it changes the execution record, so the earlier answer becomes stale and the edit cannot take effect.
If the rule update commits first, the old call cannot resolve \Fresh or \Alias.
Steps in independent domains commute by product semantics.
\end{lemma}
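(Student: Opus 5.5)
The argument is a case analysis on which of the two transactions reaches its linearization point first. It relies on the trusted computing base committing every state change in a policy domain through a single linearizable transaction.

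First, a visible edit answer takes effect only through the atomic rule update guarded by \(\mathsf{Ready}(S,u,Y)\). A tool call takes effect only through the call transaction, which verifies the authorization token, the active rule version, and the enabled automaton transition before recording \Fresh or \Alias progress. Both transactions write the same domain state \((\kappa,H,\Delta,\mathsf{out})\): the rule update advances \(\eta\) and \(\mathsf{ver}\), and the call appends to \(\chi\) and possibly \(\Delta\) and \(\mathsf{out}\). They therefore conflict, and linearizability gives them a unique total order, including across restarts, where each transaction appears entirely before or after its linearization point.

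The two cases follow. If the call commits first, it extends \(\chi\) in every case, and for \Fresh also \(\Delta\) and \(\mathsf{out}\). It also advances the state version, so the record digest carried by \(Y\) no longer matches. This holds for \Alias too: a reused authorization still changes \(\chi\) and the workflow progress, matching the discussion in \cref{sec:implications}. Hence \(\mathsf{Ready}\) fails, the candidate is stale, and the update changes nothing; any later acceptance needs a fresh \(\mathsf{Check}\) on the new record. If the rule update commits first, it closes \(\eta^-\) and replaces every \(\beta(x)\) with \((j_x^+,\eta^+)\) and fresh tokens \(h_x^+\), which stayed inaccessible until activation. The pending call still presents an \(\eta^-\) token and entry, so its transaction's version and token checks fail, and it cannot resolve \Fresh or \Alias under either version. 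This case needs invariant (5) of \(\mathsf{AgentSec}\): exactly one active version and one entry and token per remaining call. That invariant holds initially and is preserved by the rule update, using \cref{lem:history-derivation} for the assignment of \(\beta'\).

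For independent domains, \cref{def:well-formed-history} makes actions, logs, policies, and rule versions disjoint, and an edit never crosses domains. Transitions in distinct domains therefore touch disjoint state components, and the asynchronous product semantics makes them commute.

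The main obstacle is making the staleness argument exhaustive: every call commit, including \Alias steps and queue or send updates to \(\mathsf{out}\), must change some component that \(\mathsf{Ready}\) compares. I would handle this by checking each row of the runtime event table in \cref{tab:app-event-matrix} to confirm it advances \(\mathsf{ver}\). Retries and completion updates that do not advance \(\mathsf{ver}\) would be shown not to resolve \Fresh or \Alias, so they are irrelevant to the claimed order.
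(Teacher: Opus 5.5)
Your proposal is correct and takes the same route as the paper's proof. If the call commits first, it changes the checked record, including $\chi$ and $\mathsf{ver}$ for both \Fresh and \Alias, so $\mathsf{Ready}$ fails. If the rule update commits first, it closes $\eta^-$ and reissues every entry and token, so the old token fails; atomicity excludes a third case, and disjoint domains commute. Your extra detail is sound but not all needed. You do not need clause (5) of $\mathsf{AgentSec}$, because the update's own effect on $\mathsf{entry}$ and $\mathsf{phase}$ already blocks the old call, and avoiding clause (5) also avoids circularity with the invariant-preservation lemma that cites this one. Your closing row-by-row check only matters for \Fresh/\Alias uses, since rows such as \Dispatch do not advance $\mathsf{ver}$ and are not tool calls in this lemma.
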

\begin{proof}
If the call commits first, its recorded progress changes the execution record.
If the rule update commits first, it closes the old rule version, so the old authorization token fails.
Atomicity excludes a partially applied third case.
Independent domains commute.
\end{proof}

\subsection{Trust Boundary}

The TCB is the trusted boundary defined in \cref{sec:model}.
Search, minimization, and candidate preparation remain untrusted because the final check recomputes them.

\section{Formal Guarantees}
\label{sec:results}

The formal results connect derivation from the execution record, exact checking, finite enforcement, the atomic rule update, and every later runtime step.
Their central exactness result says that the checker accepts if and only if a safe runtime implementation exists.
The remaining results establish faithful registration, the information required for an exact answer, repeated use, and runtime correctness.

When defined, write
\(\delta=\mathsf{Derive}(\Theta)
 =(\kappa_u,H_u,\mathcal A_u,\mathcal C_u,\eta^-,\eta^+)\).
A \emph{safe runtime implementation} is an automaton \(M\) together with complete executions \(\widehat B_M\) labeled by result.
The partial and complete executions of \(M\) must form a safe execution set as defined in \cref{def:safe-execution-set}.
The atomic rule update activates this automaton at \(\eta^+\), closes \(\eta^-\), and replaces the rule entries and authorization tokens for all target calls.

\paragraph{What the checker must know}
Write \(V=\mathsf{Sec}(S;c,r)
=({\mathbf P},{\mathbf I},{\mathbf E},{\mathbf C})\) for four parts of the checked record: (1) the workflow and its required results, (2) which calls refer to the same action, (3) earlier authorizations and call progress, and (4) which rules govern a call that overlaps the atomic rule update.
Suppose a checker receives only some function \(f(V)\) of this record.
The function is exact if any two records that it makes indistinguishable always give the same answer to every common decision or rule-update request.
Write \(\mathsf{Exact}(f)\) for this property.
For example, a view that erases the active rule version can conflate a prepared update that may commit with the same update in a state where it must return \(\mathsf{NoCommit}\).
Internal identifiers may be renamed consistently because their spelling cannot affect the answer.
For \(J\subseteq\{\mathbf P,\mathbf I,\mathbf E,\mathbf C\}\), \(\mathsf{Keep}_J\) gives the checker exactly the parts named by \(J\).
\Cref{app:record-information} gives one record pair for each part and a fifth pair for the correspondence between earlier authorizations and actions.

\paragraph{Preserving required results}
\[
\begin{aligned}
\mathsf{PreservesRequired}(H,u,\widehat B)
&\Longleftrightarrow
\forall a\in\mathsf{StillRequired}_u.\\[-2pt]
&\qquad\exists(t,y)\in\widehat B.\
\mathsf{Covers}_u(t,a).
\end{aligned}
\]
Thus every required source result appears in an accepted target result.

\paragraph{Visible runtime behavior}
\(\mathcal O_{\rm api}\) includes Checkpoint decisions, edit answers tied to the checked record, edits put into effect, tool-call answers, sent requests, completion updates, and returned values.
It hides internal computation, stale candidates, and restart bookkeeping.
\[
\mathsf{obs}_{\rm api}(\ell)=
\begin{cases}
\ell,&\ell\in\mathcal O_{\rm api};\\
\tau,&\text{otherwise}.
\end{cases}
\]
Write \(S\approx_{\mathcal O_{\rm api}}^{\mathsf{di}}I\) when \(S\) and \(I\) are related by a divergence-insensitive weak bisimulation under \(\mathsf{obs}_{\rm api}\).
\(\mathcal B\) matches ideal and runtime states on \(\mathsf{AgentSec}\) and these visible fields.
\Cref{app:bisimulation} gives the full relation.

\begin{theorem}[Faithful Registration]
\label{thm:faithful-registration}
For finite-state \(\mathcal W\) and finite \(\mathcal R\),
\(\mathsf{CheckReg}(\mathcal W,\mathcal R)\) accepts iff the stored \(\lambda\) witnesses \(\mathcal W\sqsubseteq_\lambda\mathcal R\).
On acceptance, the source workflow's tool-call traces and \(\mathsf{BRuns}(\mathcal R)\) are order-isomorphic and preserve results, actions, \Fresh/\Alias, the safety answer, and the largest safe execution set.
Malformed registrations fail before startup.
\end{theorem}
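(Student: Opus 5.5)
The plan is to split the theorem into three parts: the soundness and completeness of \(\mathsf{CheckReg}\) as a decision procedure for \(\mathcal W\sqsubseteq_\lambda\mathcal R\); the transfer of every downstream quantity along \(\lambda\); and the early failure of malformed registrations. For the first part, I will use \cref{def:registration-refinement} to unfold \(\mathsf{CheckReg}\) into finitely many local checks. These are that \(\lambda\) is a total, injective and surjective map from source traces to \(\mathsf{BRuns}(\mathcal R)\), and that for each traced call it preserves the result label, the action \(q_{\rm act}\), the causal order, the signed result authority, and the authorization-log update. Since \(\mathcal W\) is finite-state and \(\mathcal R\) is finite, and since the no-proper-extension condition together with \cref{lem:observable-completion} makes every complete trace a finite linearization of a finite pomset, both trace sets are finite. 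Each check is therefore a finite comparison. Soundness then means that each accepting check gives the corresponding clause of \(\sqsubseteq_\lambda\). Completeness means that if \(\lambda\) witnesses the refinement, every clause holds, so no check fails. Both directions are a clause-by-clause matching, which I expect \cref{lem:registration-reflection} to supply.

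For the transfer part, I will first show that \(\lambda\) is an order-isomorphism of traces that preserves results and actions. This is immediate from the bijection and from order preservation in both directions. The key observation is that \(\Resolve\) from \eqref{eq:resolve} depends only on the sequence of actions \(q_{\rm act}(x_i)\) and on \(D_\Delta\). Hence \(\lambda\) commutes with \(\Resolve\), so it preserves the \Fresh/\Alias annotation, by induction on trace length using \cref{lem:resolution}. It follows that \(\mathsf{auth}\) and \(\mathsf{lab}(\Delta)\mathsf{auth}(z)\) agree, and so the sets \(R_o\), \(M_o\), \(\widehat B_0\) and \(\mathsf{Compat}\) correspond under \(\lambda\), with the sets \(\mathsf{Compat}\) mapping to each other. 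The operator \(\widehat\Phi\) of \eqref{eq:robust-operator} is built only from prefix order, result labels and these sets, so \(\lambda\circ\widehat\Phi=\widehat\Phi\circ\lambda\). Induction over the iterates \(\widehat B_i\) then carries \(\widehat B^\dagger\) to \(\widehat B^\dagger\). By \cref{lem:largest-safe-family}, this transfers both the \(\mathsf{SafeEdit}\) answer and the largest safe execution set.

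For the malformed case, I will show that any violation of \cref{def:well-formed-history} in the registered data is detected as a failed local check. Examples are duplicate call IDs, calls that share an action but disagree on canonical request, and an undefined constructor. Since root derivation requires \(\mathsf{CheckReg}\) to accept, \(\mathsf{Derive}\) is then undefined and startup aborts.

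The main obstacle is the commutation of \(\lambda\) with \(\Resolve\) and \(\widehat\Phi\). Preserving each trace individually is not enough. \(\lambda\) must also preserve prefix sharing across different traces, because \(\widehat\Phi\) quantifies over all \(z\preceq y\) and over the results compatible with \(z\). I plan to handle this by first extending \(\lambda\) from complete traces to their prefixes, using order-isomorphism and the call-ID bijection. I will then show that the extension is a trie isomorphism, and only after that run the fixed-point induction.
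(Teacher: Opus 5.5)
\textbf{Gap: finiteness of the source trace set.} Your transfer argument is sound. Commuting $\lambda$ with $\Resolve$, then with $\widehat\Phi$ via a trie isomorphism on prefixes, is if anything more explicit than the paper's one-line appeal to order-isomorphic safe execution sets. The gap is in the first part. You claim both trace sets are finite because the no-proper-extension condition and \cref{lem:observable-completion} make every complete trace a linearization of a finite pomset. Those facts are about registered contracts, so they cover only the $\mathcal R$ side. $\mathcal W$ is assumed only to be a finite-state LTS. A cycle containing a tool-call edge gives it infinitely many tool-call traces, each finite but of unbounded length. So ``finitely many local checks'' is not available in advance, and $\mathsf{CheckReg}$ must itself establish finiteness. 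That is the one nontrivial step in the paper's proof. $\mathsf{CheckReg}$ rejects any reachable SCC that can still reach a result and contains a tool-call edge, while permitting $\tau$-cycles, which are erased. You need both directions of this check:
\begin{enumerate}
\item Pumping such an SCC before reaching the result yields projected words of unbounded length. Then no bijection with the finite $\mathsf{BRuns}(\mathcal R)$ can exist, so the test never rejects a $\mathcal W$ with $\mathcal W\sqsubseteq_\lambda\mathcal R$. Your completeness direction lacks this, because the SCC test is not literally a clause of $\sqsubseteq_\lambda$ and ``every clause holds'' does not show it passes.
\item If no such SCC exists, contracting SCCs gives a DAG whose projected language after $\tau$-erasure is finite and enumerable. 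This makes the clause-by-clause comparison terminate, and makes it sound because the enumerated set is all of $\mathcal W$'s traces.
\end{enumerate}
Relatedly, the bijection is on source executions modulo $\equiv_\partial$. Your argument should use that explicitly, since erased $\tau$-cycles produce infinitely many paths with the same trace.

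\textbf{Two smaller points.} First, deferring the clause-by-clause matching to \cref{lem:registration-reflection} is circular: that lemma is the statement you are proving, so the matching must be argued directly. Second, for the last sentence of the theorem, the relevant malformations are registration failures, not arbitrary violations of \cref{def:well-formed-history}. These are tool-call cycles that reach a result, a non-total or non-injective $\lambda$, unregistered calls, and reuses of an action with a different canonical request. They fail before startup because the $\mathsf{register}$ step, and hence the root rule update, exists only when $\mathsf{CheckReg}$ accepts.
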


For the remaining results, assume accepted registration, trusted checking of every tool call, and linearizable transactions that remain atomic after a restart within each policy domain.

\begin{theorem}[Exact Checkpoint]
\label{thm:exact-checkpoint}
The ideal and runtime machines make the same decision for \(\mathsf{Checkpoint}(k,b)\).
They accept exactly when \(k\) is fresh, \(b\) is current, and the stored work matches the branch's remaining registered work: \(\mathcal C_b=\Gamma_b/\mathsf{raw}(\chi_b)\).
Acceptance preserves \(\mathsf{AgentSec}\) without changing the current workflow or any earlier authorization.
\end{theorem}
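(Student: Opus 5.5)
The argument splits into three parts: the acceptance decision, agreement between the ideal and runtime machines, and preservation of \(\mathsf{AgentSec}\). For the decision, we unfold the Checkpoint transition from \cref{sec:model}. The trusted transaction accepts exactly when three conditions hold: \(k\notin\operatorname{dom}K\); \(b\in\mathsf{heads}(T)\), that is, \(b\) is a current maximal node; and the stored remaining contract equals \(\Gamma_b/\mathsf{raw}(\chi_b)\). Each condition is a finite comparison on components of \(H\) that the Agent cannot modify. The comparison \(\Gamma_b/\mathsf{raw}(\chi_b)\) is well defined because \cref{def:well-formed-history} makes processed calls downward closed in their pomset. By \cref{lem:observable-completion} the quotient retains a defined contract, so even a completed leaf gives a determinate answer. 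Hence the predicate is a deterministic function of the record, which yields the ``exactly when'' characterization.

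For agreement between the two machines, we use the relation \(\mathcal B\) of \cref{app:bisimulation}. By definition, \(\mathcal B\) matches ideal and runtime states on \(\mathsf{AgentSec}\) and on the visible fields, and therefore on \(K\), \(T\), \(\Gamma\) and \(\chi\). Both machines evaluate the same predicate on matched components, so their decisions coincide. The runtime is linearizable, so its Checkpoint transaction has a single linearization point, and any concurrent tool call is ordered strictly before or after it. If a call commits first, both machines evaluate the predicate on the post-call record. After a restart, the transaction appears entirely before or after its linearization point, so no partial checkpoint is observable. A rejected request changes nothing on either side. This gives matching \(\mathcal O_{\rm api}\) decision labels.

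For preservation, we check the five clauses of \(\mathsf{AgentSec}(S;c,r)\) against the update. The accepting transition changes only \(K\), by adding an immutable signed copy of \((\mathcal C_b,\chi_b)\) and the result requirements, and \(\mathsf{ver}\), which it increments. It leaves \(T\), \(\Delta\), \(\mathsf{out}\), \(\beta\), \(\eta\), the automaton state, and \(\mathsf{lab}(\Delta)\) untouched. Clauses (1), (2), (4) and (5) therefore carry over verbatim. For clause (3), the checkpoints must still match the processed-call sequence \(r\). The new entry agrees with \(r\) precisely because of the acceptance condition \(\mathcal C_b=\Gamma_b/\mathsf{raw}(\chi_b)\), and \(r\) itself is unchanged. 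Well-formedness of the new record needs three further facts. Clause 1 of \cref{def:well-formed-history} requires that the checkpoint name an immutable node and store copies, which the transaction does. The stored calls lie in \(\omega\) because they come from the current leaf. Authority records are preserved by the trusted copy. Finally, since \(\mathsf{ver}\) advances, any rule-update candidate computed earlier fails \(\mathsf{Ready}\). This is consistent with \cref{lem:cut-linearization} and does not affect the invariant.

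The main obstacle is interleaving. A tool call on \(b\) that races with the Checkpoint transaction could make the stored copy disagree with \(\chi_b\). The plan relies on the fact that the equality \(\mathcal C_b=\Gamma_b/\mathsf{raw}(\chi_b)\) is checked and the copy is stored inside one linearizable transaction. Every racing call is thus ordered entirely before the check, in which case both \(\chi_b\) and \(\mathcal C_b\) include it, or entirely after it, in which case neither does.
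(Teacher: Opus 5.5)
Your proposal is correct and follows essentially the paper's route. The paper likewise argues that both machines apply the same freshness, current-branch, and remaining-work checks to a shared record (the relation \(\mathcal B\) identifies \(\mathsf{Core}\)), and that an accepted Checkpoint only extends \(K\) and advances \(\mathsf{ver}\), so \(A_2,A_4,A_5\) are untouched, \(A_1\) holds as an immutable well-typed extension, and \(A_3\) holds as exactly the permitted \(K\) extension (the Checkpoint row of \cref{tab:app-event-matrix}). One small correction: definedness of \(\Gamma_b/\mathsf{raw}(\chi_b)\) comes from clause 3 of \cref{def:well-formed-history}, which tool calls keep true via the quotient-associativity step in the proof of \cref{lem:app-event-derivative}; it does not come from \cref{lem:observable-completion}, which presupposes that the quotient is defined.
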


Let \(\mathsf{Ext}=\{\mathsf{Extend}(\xi)\mid \xi\text{ is verified}\}\).
The theorems below hold for every \(u\in\mathsf{Edit}_6\uplus\mathsf{Ext}\) at every finite reachable \(\mathsf{AgentSec}\) state.

\begin{theorem}[Exact Safety Checking]
\label{thm:exact-checking}
The ideal and runtime machines return the same mutually exclusive and exhaustive answer for \(\Theta=\Theta_S(u)\).
Undefined derivation gives state-preserving \(\mathsf{Invalid}\).
For defined \(\delta=\mathsf{Derive}(\Theta)\), \(\widehat B^\dagger_{H,u}=\varnothing\) gives state-preserving \(\mathsf{Reject}\) with no safe runtime implementation.
Otherwise, for \(Y=\mathsf{Check}(\Theta)\), nonempty \(\widehat B^\dagger_{H,u}\), existence of a safe runtime implementation, \(\mathsf{Ready}(S,u,Y)\), and a successor that puts the edit into effect while preserving \(\mathsf{AgentSec}\) are equivalent.
The complete-execution component of the largest safe execution set satisfies \(\mathsf{PreservesRequired}\), and the derived record identifies every already-satisfied source result and every authorized removal.
\end{theorem}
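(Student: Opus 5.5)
The argument splits on the three possible outcomes of \(\mathsf{Check}(\Theta)\), with \(\Theta=\Theta_S(u)\) read off the current \(\mathsf{AgentSec}\) state \(S\), and establishes a chain of equivalences in the accepting case. Mutual exclusivity and exhaustiveness follow from the definition of \(\mathsf{Check}\). It is a deterministic function of \(\Theta\): \(\mathsf{Derive}\) is a partial function by \cref{lem:history-derivation} and the extension judgment, and \(\widehat B^\dagger_{H,u}\) is a uniquely defined greatest fixed point. Exactly one of three cases therefore holds: derivation undefined, derivation defined with empty fixed point, or derivation defined with nonempty fixed point. The ideal and runtime machines return the same answer because both compute \(\mathsf{Check}\) on the same \(\Theta\); the runtime's relation \(\mathcal B\) matches \(\mathsf{AgentSec}\) and the visible fields, so the two machines present identical \(\Theta\).

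\textbf{Invalid and Reject cases.} In the \(\mathsf{Invalid}\) case, no transition other than emitting the answer fires, so the state is preserved. In the \(\mathsf{Reject}\) case, take any safe runtime implementation \((M,\widehat B_M)\). By definition its languages form a safe execution set, so \cref{lem:largest-safe-family} gives \(\widehat B_M\subseteq\widehat B^\dagger_{H,u}=\varnothing\). This contradicts condition 1 of \cref{def:safe-execution-set}, so no implementation exists. The rejection proof is sound by the rank-induction remark: every post-fixed family lies inside every \(\widehat B_i\).

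\textbf{Accepting case.} Here I would prove the cycle (a)\(\Rightarrow\)(b)\(\Rightarrow\)(c)\(\Rightarrow\)(d)\(\Rightarrow\)(a), where (a) is nonemptiness, (b) is existence of an implementation, (c) is \(\mathsf{Ready}(S,u,Y)\), and (d) is existence of an \(\mathsf{AgentSec}\)-preserving committing successor.

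(a)\(\Rightarrow\)(b): \cref{lem:largest-safe-family} makes \((W^\dagger,\widehat B^\dagger)\) a safe execution set, and \cref{lem:runtime-language} realizes it as \(\mathsf{Can}(W^\dagger,B^\dagger)\).

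(b)\(\Rightarrow\)(c): by the same containment argument, \(\widehat B^\dagger\neq\varnothing\). Since \(Y=\mathsf{Check}(\Theta)\) is computed from the current record, the re-derivation and recomputation inside \(\mathsf{Ready}\) coincide with \(Y\) by determinism. The version conditions hold because \(\mathsf{AgentSec}\) clause (5) gives exactly one active version \(\eta=\eta^-\), and \(\eta^+\) is freshly allocated from \((\omega,\zeta,\mathsf{ver},u,\mathsf{role})\).

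(c)\(\Rightarrow\)(d): the atomic update installs \(H_u\), closes \(\eta^-\), activates \(\mathsf{Can}\), and replaces all entries and tokens. I then check the five clauses of \(\mathsf{AgentSec}\) with \(r=\epsilon\) relative to the new workflow:
\begin{enumerate}[leftmargin=*,itemsep=1pt,topsep=1pt]
\item The workflow and required results come from the checked derivation.
\item The stored set is the maximal one by \cref{lem:largest-safe-family}.
\item Records match, since \(\mathsf{Derive}\) preserves \(\chi\), \(\Delta\), \(\mathsf{out}\), and \(K\), and \(H_u\) is well formed by \cref{lem:history-derivation}.
\item The automaton sits at \(q_\epsilon\).
\item There is a unique active version with per-call entries, by \cref{lem:history-derivation}'s \(\beta'\) clause.
\end{enumerate}

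(d)\(\Rightarrow\)(a): the runtime commit rule has \(\mathsf{Ready}\) as its guard, and \(\mathsf{Ready}\) demands a nonempty fixed point.

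The concluding statements are then short. \(\mathsf{PreservesRequired}\) for \(\widehat B^\dagger\) is part of \cref{lem:largest-safe-family}. The identification of satisfied results and removals is the exact partition checked by \(\vdash_{\rm edit}\), retained in \(G\cdot e_u\) by \cref{lem:history-derivation}.

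\textbf{Main obstacle.} The hardest step is (b)\(\Rightarrow\)(c) together with the concurrency aspect of (c)\(\Rightarrow\)(d). An interleaved tool call or Checkpoint could change the record between computing \(Y\) and committing, for example an \(\Alias\) call that changes \(\chi\) without touching \(\Delta\). I would handle this by \cref{lem:cut-linearization}: any committed call bumps \(\mathsf{ver}\), which makes \(\mathsf{Ready}\) false and yields stale no-commit rather than unsafe commit. The equivalence is therefore stated for the instant state \(S\) at which \(\Theta_S(u)\) is read. I would also need to confirm, for the \(\mathsf{Ext}\) case, that \(\mathsf{Extend}\) likewise preserves \(\chi,\Delta\) and yields a well-formed record, citing the appendix judgment.
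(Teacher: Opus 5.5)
Your case split, the \(\mathsf{Reject}\) argument, and the accepting-case cycle follow the paper's proof closely. The paper likewise uses \cref{lem:app-spec-bridge} (a repackaging of \cref{lem:largest-safe-family}), \cref{lem:runtime-language} for the realizing automaton, \cref{lem:history-derivation} and the successful-edit row of \cref{tab:app-event-matrix} for the five \(\mathsf{AgentSec}\) clauses, and \cref{lem:cut-linearization} for races.

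The one step not justified as written is your opening claim that the ideal and runtime machines agree ``because both compute \(\mathsf{Check}\) on the same \(\Theta\).'' The ideal machine does not run the fixed-point checker. Its edit transition returns \(\mathsf{Reject}\) when \(\mathsf{Derive}(\Theta_I(u))\) is defined but no \(\mathsf{MaxSafe}\) witness exists. Otherwise it installs the independently defined \(\mathsf{SafeBehavior}(\Theta_I(u))\). The paper stresses that this declarative object and the computed \(\widehat B^\dagger_{H,u}\) are defined separately. Equal \(\Theta\) on both sides therefore only gives equal \(\mathsf{Invalid}\) answers.

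To close the gap you need \cref{lem:app-spec-bridge}: a \(\mathsf{MaxSafe}\) witness exists iff \(\widehat B^\dagger_{H,u}\neq\varnothing\), and it then equals \((W^\dagger_{H,u},\widehat B^\dagger_{H,u})\). This gives four consequences:
\begin{enumerate}[leftmargin=*,itemsep=0pt,topsep=1pt]
\item The ideal and runtime \(\mathsf{Reject}\) answers fire under the same condition.
\item The ideal edge installs the same set, and by \cref{cor:app-exact-boundary} the same language as \(\mathsf{Can}(W^\dagger,B^\dagger)\).
\item Deterministic allocation makes both sides expose the same \(\mathsf{Edge}(u,[\mathcal H_u^+]_{\cong})\).
\item Conversely, every runtime commit that passed \(\mathsf{Ready}\) enables the matching ideal edge.
\end{enumerate}
This is how the paper handles the edit-transition case in \cref{app:bisimulation}. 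You already invoke \cref{lem:largest-safe-family} for the \(\mathsf{Reject}\) case, so the repair is one citation. Without it, though, the theorem's first sentence is unproved.

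A smaller omission: you should state that \(\mathsf{Reject}\) is state-preserving, as you did for \(\mathsf{Invalid}\). This is the corresponding row of \cref{tab:app-event-matrix}.
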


\begin{theorem}[Necessary Record Information]
\label{thm:necessary-state}
The full record is sufficient.
If any one of the four parts is omitted wholesale while the others are retained, two records become indistinguishable even though the correct answer differs.
\[
\mathsf{Exact}(\mathsf{Keep}_J)\Longleftrightarrow
J=\{\mathbf P,\mathbf I,\mathbf E,\mathbf C\}.
\]
The four component-labeled rows in \cref{tab:record-pairs} witness omission of \(\mathbf P,\mathbf I,\mathbf E,\mathbf C\).
The \(\mathsf{Drop}_{ra}\) row shows that retaining authorization records and actions without recording which ones correspond is also insufficient.
In particular, a checker is not exact if it forgets which calls or authorization records refer to the same action, or if it sees only the caller's description of what should happen next.
Generalized nonblocking can serve as an exact solver only after the trusted runtime derives the workflow and the completion condition for each result.
\end{theorem}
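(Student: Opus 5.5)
The proof splits along the equivalence \(\mathsf{Exact}(\mathsf{Keep}_J)\Longleftrightarrow J=\{\mathbf P,\mathbf I,\mathbf E,\mathbf C\}\): sufficiency of the full record gives the direction from right to left, and five explicit counterexample pairs give the direction from left to right.

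\emph{Sufficiency.} I will show that every answer returned by the ideal and runtime machines is a function of \(V=(\mathbf P,\mathbf I,\mathbf E,\mathbf C)\), up to consistent renaming of internal identifiers. Three earlier results carry this. By \cref{lem:history-derivation}, \(\mathsf{Derive}\) is deterministic and reads only the workflow, contracts, checkpoints, and registries (part \(\mathbf P\)) together with the action map \(q^{\rho}_{\rm act}\) (part \(\mathbf I\)). By \cref{lem:resolution}, \(\Resolve^{\rho_u}_\Delta\) reads \(\mathbf I\) together with \(D_\Delta\) and \(\mathsf{lab}(\Delta)\) (part \(\mathbf E\)). Hence \(M_o\), \(\widehat B_0\), and the fixed point \(\widehat B^\dagger\) of \cref{eq:robust-operator} are determined by \(\mathbf P,\mathbf I,\mathbf E\). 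The \(\mathsf{Ready}\)/\(\mathsf{NoCommit}\) decision additionally reads the active rule version, \(\beta\), and the staleness of the checked record, all of which lie in \(\mathbf C\). With \cref{thm:exact-checking} identifying the answer with these quantities, two records with equal \(V\) receive equal answers. The renaming clause holds because every construction is equivariant under bijections of IDs.

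\emph{Necessity.} For each part \(X\) I will exhibit records \(S_X,S'_X\) that agree on the other three parts (so \(\mathsf{Keep}_J\) with \(X\notin J\) conflates them) yet receive different answers.
\begin{itemize}
\item \(\mathbf P\): keep the laptop Fork, and in \(S'\) let the current leaf drop the \(o_R\) result, or change \(\mathsf{StillRequired}\) via an unsigned removal. One record is \(\mathsf{Accepted}\); the other is \(\mathsf{Invalid}\) or \(\mathsf{Reject}\).
\item \(\mathbf I\): after Restore, let the copied approval call \(x_a'\) map to \(d_a\) in \(S\) (so it is \(\Alias\) and contributes \(\epsilon\)), and to a fresh action in \(S'\) (so it is \(\Fresh\), the label sequence becomes \(raa\notin\mathcal A\), and \(M_o=\varnothing\)). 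This gives \(\mathsf{Accepted}\) versus \(\mathsf{Reject}\).
\item \(\mathbf E\): use the same workflow, with \(\mathsf{lab}(\Delta)=ra\) in \(S\) and \(\mathsf{lab}(\Delta)=rap_R\) in \(S'\), so that \(z_L\) violates policy. Alternatively, use differing call progress \(\chi\), as in the concurrent-Restore scenario of \cref{sec:implications}.
\item \(\mathbf C\): a prepared update with a valid candidate \(Y\), where the active version equals \(\eta^-\) in \(S\) and has already been closed in \(S'\). This gives commit versus \(\mathsf{NoCommit}\).
\item \(\mathsf{Drop}_{ra}\): two records with the same multiset of authorization records and actions but a permuted correspondence, so that in one the restored call's action is already authorized and in the other it is not. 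This reduces to the \(\mathbf I\) pair.
\end{itemize}
For each pair, I will verify the differing verdict by directly computing \(\widehat B_0\) and one or two iterations of \(\widehat\Phi\), as done in \cref{sec:exact-check}.

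\emph{Remaining claims.} Anything that forgets action sharing, or sees only the caller's description, factors through a view omitting \(\mathbf I\) or \(\mathbf P\), so it is not exact by the pairs above. The generalized-nonblocking claim follows from \cref{prop:result-necessary}, which applies only once the \(\mathsf{possible}_o/\mathsf{done}_o\) markings have been derived. Its strictness instance \(X\otimes(Y\oplus Z)\) shows that projected marker nonblocking is insufficient.

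\emph{Main obstacle.} The hardest step is constructing each pair so that the two records genuinely agree on the three retained parts while both remain well formed (\cref{def:well-formed-history}). The difficulty is that the parts are coupled: \(\chi\) points into \(\Delta\), and \(\beta\) covers exactly \(X_{\sem T}\). Changing one part therefore tends to force changes in another. My first attempt will be to fix the part boundaries so that each coupling field is assigned to exactly one part, with \(\chi\)–\(\Delta\) links in \(\mathbf E\) and \(\beta\) and version in \(\mathbf C\). I will then pick minimal examples in which only the varied field differs.
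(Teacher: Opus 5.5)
Your overall strategy (sufficiency from determinism and renaming-equivariance, then one opposite-answer pair per part plus a \(\mathsf{Drop}_{ra}\) pair) is the paper's. The gap is in the pairs, which carry the entire necessity direction: the ones you sketch violate the record's own constraints.

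\emph{The \(\mathbf I\) pair is ill-formed.} \(\mathsf{clone}_k\) and \(\preccurlyeq_\mu\) force a Restore copy to name the action of its checkpointed source. That source is the approval call that later created the record for \(d_a\). So \(x_a'\) is \(\Alias(x_a',d_a)\) in every valid record, and changing the source's action changes the record's action with it. You need two \emph{distinct} source calls whose action sharing differs before any authorization exists. The paper does this with Fork arms \(x,y\), where \(q_{\rm act}(x)=q_{\rm act}(y)\) in one record and not in the other, a checkpoint of \(b_R\), and then \(\mathsf{RestoreLive}(b_L,k,\sigma_{\mathbf I})\) under \(\Partial\{a\}\), with both actions present in both records. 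Likewise, \(\mathsf{Drop}_{ra}\) must be built directly, fixing \(x\mapsto d\) and varying which action the creator \(c\) and record \(p\) name; it does not reduce to the \(\mathbf I\) pair.

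\emph{The \(\mathbf E\) pair changes more than \(\mathbf E\).} Adding \(p_R\) to \(\Delta\) under the same workflow requires a processed creator call. Processing that call removes it from \(T\), selects its choice arm, and advances \(\mathsf{ver}\), so the records also differ in \(\mathbf P\) and \(\mathbf C\). The concurrent-Restore variant has the same defect. The paper instead separates its records by a single \(\Dispatch(d)\) after the check, which touches only the queue's sent status, and uses \(\mathsf{TryUpdate}\) to obtain commit versus \(\mathsf{NoCommit}\).

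\emph{The \(\mathbf C\) pair also changes more than \(\mathbf C\).} Having \(\eta^-\) closed in \(S'\) requires a committed rule update. That update also replaces every rule entry and token, which lie in \(\mathbf I\), and for an edit it also replaces \(G\) and \(T\). The paper's two records differ only in the \emph{name} of the active version (and data depending on it), obtained from namespace supplies that differ on the rule-version sort. The fixed request \(\mathsf{TryUpdate}(u_0,Y_0)\) names \(\eta_0\), and that is what prevents private renaming from identifying the full pairs.

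\emph{The \(\mathbf P\) pair.} Making a current leaf lack \(o_R\) requires a different registered contract, since well-formedness ties each leaf to \(\Gamma_b/\mathsf{raw}(\chi_b)\). A different contract changes the call set, and hence the call--action map and rule entries in \(\mathbf I\). The paper varies only the choice/parallel tag over identical calls and asks \(\mathsf{MergeSelect}\).

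\emph{Your remedy is not available.} Redrawing the part boundaries changes the statement, because the theorem concerns the fixed \(\mathbf P,\mathbf I,\mathbf E,\mathbf C\) of \cref{app:record-information}. It would not help anyway: the couplings above come from transitions (a processed call moves \(T\), \(\chi\), \(\Delta\), and \(\mathsf{ver}\) together), not from how fields are filed.

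\emph{Your criterion is the wrong target.} ``Agree on the other three parts'' cannot be met literally. Stored digests and checked answers depend on the workflow, and rule entries mention the active version, so even the paper's \(\mathbf P\) and \(\mathbf C\) pairs do not agree literally. What must be shown is that \(\mathsf{Keep}_J\), after its recursive deletion of references and dependent computed values, yields inputs related by one renaming applied jointly to record and request (\cref{lem:app-omission}).

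\emph{Validity of the records.} Each record must be a valid checked record: reached from the checked empty initial state and satisfying \(\mathsf{AgentSec}\), not merely well formed. This is why the paper builds all five pairs as explicit traces.

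\emph{The caller-description claim.} A view that sees only the caller's description is not a function of any \(\mathsf{Keep}_J\), so your factoring argument does not apply. The paper instead gives both records of the \(\mathbf P\) pair the same description and observes that their answers still differ.
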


\begin{theorem}[Repeated-Use Safety]
\label{thm:repeated-use}
Every enabled event preserves \(\mathsf{AgentSec}\).
Therefore, the guarantees hold after every finite reachable sequence of edits, registered extensions, tool calls, authorization updates, rule updates, stops, and restarts.
\end{theorem}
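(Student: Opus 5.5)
The proof is an induction over finite reachable executions with a case analysis on the enabled event. The core claim is one step of invariant preservation: if \(\mathsf{AgentSec}(S;c,r)\) holds and \(S\xrightarrow{e}S'\) is enabled, then \(\mathsf{AgentSec}(S';c',r')\) holds for the appropriate updated checked edit \(c'\) and processed-call sequence \(r'\). The second sentence of the theorem then follows by induction on the length of the execution. The base case is the initial rule update, which \Cref{thm:exact-checking} covers when instantiated at root derivation with accepted \(\mathsf{CheckReg}\). Each later guarantee is recovered at each reachable state from the invariant together with \Cref{thm:exact-checking,thm:exact-checkpoint}. I will group events into four kinds: state-preserving answers; tool calls; edits or extensions that take effect; and sending, completion and restart bookkeeping.

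\textbf{State-preserving and bookkeeping events.} \(\mathsf{Invalid}\), \(\mathsf{Reject}\), stale candidates and \Retry leave \(H\), \(\Delta\), \(\chi\) and the automaton state unchanged. For \Retry, \Cref{lem:resolution} shows that a retry adds no workflow call and no authorization label. Sending a queued request and recording a completion only change \(\mathsf{out}\) and record status. Neither touches \(\mathsf{lab}(\Delta)\) nor the processed-call sequence, so conditions (3)--(4) of \(\mathsf{AgentSec}\) are unaffected. Stops and restarts preserve the invariant because every committed transaction lies entirely before or after its linearization point.

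\textbf{Checkpoint.} By \Cref{thm:exact-checkpoint}, an accepted checkpoint preserves \(\mathsf{AgentSec}\) directly.

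\textbf{Tool call.} Suppose a call \(a=\Fresh(x,d)\) or \(\Alias(x,d)\) commits. The transaction guard ensures \(ra\in W^\dagger\) and that the token and active version are correct, so the automaton moves to \(q_{ra}\) by \Cref{lem:runtime-language}. \Cref{lem:recheck-coherence} then identifies the residual \(W^\dagger/ra\) and \(\widehat B^\dagger/ra\) with the largest safe set computed for \(\mathcal C/\mathsf{raw}(ra)\) and \(\Delta_{ra}\). This gives conditions (2) and (4). \(\mathsf{HStep}\) together with \Cref{lem:observable-completion} keeps \(H\) well formed and the current leaves equal to \(\Gamma_b/\chi_b\). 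The log equality \(\mathsf{lab}(\Delta_{ra})=\mathsf{lab}(\Delta)\mathsf{auth}(ra)\) gives condition (3).

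\textbf{Edit or extension taking effect.} Here \(\mathsf{Ready}(S,u,Y)\) must hold. \Cref{thm:exact-checking} gives a successor preserving \(\mathsf{AgentSec}\) with the new checked edit \(c'=u\), and \Cref{lem:history-derivation} gives well-formedness and the single active version \(\eta^+\). Overlapping calls are handled by \Cref{lem:cut-linearization}: each call is serialized entirely before the update, which makes the candidate stale, or entirely after it. Steps in independent domains commute by product semantics.

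\textbf{Main obstacle.} The hard case is the tool-call step, specifically showing that the installed automaton state after \(r\) still equals the \emph{checker's} largest safe set for the current record, rather than merely some safe set. This is exactly the content of \Cref{lem:recheck-coherence}. I expect the delicate part to be confirming that the residual contract there matches the one obtained by \(\mathsf{HStep}\) on \(T\). In particular, the handling of choice selection and \Alias steps, which advance \(\chi\) without extending \(\Delta\), must agree with \(\mathcal C/\mathsf{raw}(z)\). I would establish this first as a separate commutation lemma, \(\sem{T/(x,a)}=\sem{T}/x\), by structural induction on \(T\).
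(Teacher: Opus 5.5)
Your proposal follows essentially the paper's proof: \cref{lem:app-event-derivative} checks each row of \cref{tab:app-event-matrix} using the same ingredients you name (\cref{lem:resolution,lem:recheck-coherence,lem:runtime-language} for \Fresh and \Alias calls, \cref{lem:history-derivation,lem:live-extension,lem:cut-linearization} for edits and extensions, and atomic recovery via \cref{lem:app-restart}), and \cref{cor:app-finite-sequence} then inducts over finite sequences. Two small corrections: the base case should be the bootstrap \cref{lem:app-boot}, not \cref{thm:exact-checking}, which is stated only for \(u\in\mathsf{Edit}_6\uplus\mathsf{Ext}\) at states already satisfying \(\mathsf{AgentSec}\); and your global commutation lemma \(\sem{T/(x,a)}=\sem{T}/x\) is more than the invariant needs, because \(\mathsf{AgentSec}\) pins the safe set to the checked record \(H_c\) and ties the current \((G,T)\) to \(\mathsf{HRes}\) of the target, so the paper uses only the leaf identity \((\Gamma_b/\mathsf{raw}(\chi_b))/x=\Gamma_b/\mathsf{raw}(\chi_b a)\) (you should also add the \(\mathsf{Preload}\) row, which writes an inactive rule version, to your case split).
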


\begin{theorem}[Runtime Correctness]
\label{thm:runtime-correctness}
\(\mathcal B\) is a divergence-insensitive weak bisimulation under \(\mathsf{obs}_{\rm api}\), so every pair related by \(\mathcal B\) satisfies \(S_0\approx_{\mathcal O_{\rm api}}^{\mathsf{di}}I_0\).
The observation records Checkpoint decisions, edit answers tied to the checked record, new authorization data, and returned values.
\end{theorem}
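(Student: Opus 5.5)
The plan is to build a single relation \(\mathcal B\) between runtime states \(S\) and ideal states \(I\) and check the weak-bisimulation transfer conditions one event class at a time. Two states are related when they satisfy \(\mathsf{AgentSec}\) for the same processed-call sequence \(r\) and the same checked edit \(c\), and agree on every field visible in \(\mathcal O_{\rm api}\): the current workflow and required results, the authorization log \(\Delta\), the queue \(\mathsf{out}\), the checkpoints, issued edit answers, and returned values. Runtime-only data is left unconstrained: prepared but uncommitted candidates, inactive rule versions, and restart bookkeeping. Two earlier results carry most of the weight. By \cref{thm:repeated-use}, every enabled runtime event preserves \(\mathsf{AgentSec}\), so the right-hand side of \(\mathcal B\) is invariant. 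By \cref{lem:runtime-language}, the installed automaton \(\mathsf{Can}(W^\dagger,B^\dagger)\) enables exactly the transitions \(a\) with \(ra\in W^\dagger\), which is precisely the ideal machine's allowed set.

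The first step is to split runtime events into visible and \(\tau\) steps and show that every \(\tau\) step stays inside \(\mathcal B\) without moving the ideal state. These steps are candidate search, fixed-point computation, minimization, stale-candidate discard, and restart recovery. Each leaves \(H,\Delta,\mathsf{out}\), and the active \(\eta\) unchanged. For restarts I use the hypothesis of linearizable transactions that remain atomic after a restart: a transaction either appears before its linearization point or after it. The recovered state is therefore \(\mathcal B\)-related to the same \(I\).

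The second step handles visible steps in both directions.
\begin{enumerate}
\item For a Checkpoint decision, \cref{thm:exact-checkpoint} gives the same decision on both sides.
\item For a tool call, the same transaction checks the enabled transition. \cref{lem:resolution} makes the \(\Fresh\)/\(\Alias\) annotation depend only on \(D_\Delta\) and \(\rho\), both of which are related. The new authorization data and queued request therefore coincide, and \cref{lem:recheck-coherence} puts both sides at the derivative \((W^\dagger/za,B^\dagger/za)\).
\item For sends, completion updates, and returns, the queue and the \((\chi,\Delta)\) lookup are shared, and retries add no workflow call.
\item For an edit answer, \cref{thm:exact-checking} gives the same \(\mathsf{Invalid}\)/\(\mathsf{Reject}\)/\(\mathsf{Accepted}\) answer for \(\Theta_S(u)\).
\end{enumerate}

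I expect the rule update to be the main obstacle, because the runtime splits the ideal single atomic step into check, possibly stale candidate, \(\mathsf{Ready}\), and commit. The plan is to treat everything before commit as \(\tau\) and match the commit transaction with the ideal atomic step. \cref{thm:exact-checking} shows that \(\mathsf{Ready}(S,u,Y)\) holds exactly when the ideal machine may take the step. \cref{lem:cut-linearization} shows that any interleaved tool call is ordered entirely before the commit, which makes the candidate stale and reduces to a retry with no visible effect, or entirely after, under the new version. Either way the visible trace is one the ideal machine can produce.

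Finally, divergence insensitivity removes the need for any progress argument about repeated stale retries. Combining these cases shows that \(\mathcal B\) is closed under matching weak transitions in both directions. Related initial states then give \(S_0\approx^{\mathsf{di}}_{\mathcal O_{\rm api}} I_0\).
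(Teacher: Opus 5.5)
Your route matches the paper's. You build the relation from $\mathsf{AgentSec}$ plus state agreement and split the proof event by event. Preparation, stale $\mathsf{NoCommit}$ and restart become $\tau$ steps matched by zero ideal steps. The commit is matched with the ideal atomic edge, races are settled by \cref{lem:cut-linearization}, and divergence insensitivity absorbs repeated silent steps.

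The step that does not go through as written is your definition of $\mathcal B$. You require agreement only on the fields visible in $\mathcal O_{\rm api}$ (workflow, required results, $\Delta$, $\mathsf{out}$, checkpoints), together with $\mathsf{AgentSec}$ for a common $(c,r)$. A bisimulation needs related states to give the same answers to every future request, and those answers depend on hidden fields:
\begin{itemize}
\item whether $\mathsf{Use}_x(m)$ commits or yields $\mathsf{deny}(x)$ depends on the tokens and canonical requests in $\rho$, and on the current rule entries and active version;
\item the visible $\mathsf{Invalid}$ and $\mathsf{Reject}$ answers carry $\gamma(\Theta)$, a digest of all of $H$, including $\chi$, $\mathsf{ver}$, $\rho$, $\beta$ and $\eta$.
\end{itemize}
Since $\mathsf{AgentSec}$ constrains only the runtime state, nothing in your definition ties the ideal state's current $\rho$, $\beta$ or $\eta$ to the runtime's entry table. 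Your $\mathcal B$ can therefore relate a runtime state that accepts a token to an ideal state that denies it. Your tool-call case already assumes that $D_\Delta$ and $\rho$ ``are related'', which your definition does not provide.

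The repair is the paper's relation: require $\alpha_I(I)=\alpha_S(S;c,r)$. That is, require equality of the full core $(\kappa,H,\Delta,\mathsf{out},\mathcal A)$, of $\mathsf{icut}(c)$ and $r$, of the active version, and of the runtime's $\mathsf{entry}$ on current calls with the ideal's $\beta$, omitting only inactive prepared data. With that relation every case is immediate, including the denial case you do not treat, because both machines run the same checks on the same data.

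Two smaller corrections. First, your edit-answer item lists $\mathsf{Accepted}$ among the visible answers, but it must stay hidden until the commit. Only $\mathsf{Invalid}$ and $\mathsf{Reject}$ are visible, state-preserving answers; an accepted check is the silent $\mathsf{Preload}$, and acceptance is observed only as the commit's $\mathsf{Edge}(u,[\mathcal H_u^+]_{\cong})$. Otherwise a visible $\mathsf{Accepted}$ followed by a stale $\mathsf{NoCommit}$ has no ideal match, because the ideal machine's only accepting move is the state-changing edge. Second, when the rule update commits first, the overlapping old call is not processed under the new version. Its old token fails, and both machines emit $\mathsf{deny}(x)$.
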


\paragraph{Proof roadmap}
Registration reflection follows by enumerating the finite tool-call traces and checking their order- and identity-preserving correspondence.
The matching Checkpoint transitions prove exact Checkpoint.
The edit-rule checks and greatest-fixed-point argument prove exact checking.
Every safe execution set is contained in the computed family, and a nonempty computed family yields a safe runtime implementation.
Five explicit pairs of execution records prove the information result.
Induction over the runtime rules proves repeated use.
A case analysis over edits, tool calls, races with the rule update, returns, and restarts establishes the weak bisimulation.
The appendix gives the complete case proofs.

\section{Mechanization and Executable Validation}
\label{sec:validation}

Six Lean modules pass \texttt{--trust=0} and mechanize the finite linear-contract core used by the paper.
The mechanization proves that the executable checker accepts exactly when a valid safe execution set exists.
It proves that registration checks every tool call and preserves the exact answer for the current execution record.
It also proves that executable workflow editing is sound and complete for all six edit forms.
An accepted edit produces an atomic rule update, and every finite sequence of later calls preserves Lean's \path{AgentSec}.
Separate proofs cover both orders between a tool call and that update, as well as preparation before it takes effect.
The appendix proves the general pomset lift, the maps that preserve source results, the information lower bound, and ideal--runtime weak bisimulation in \cref{app:spec-bridge,app:schema-fidelity-proof,app:record-information,app:bisimulation}.

All 19 paper-specific tests pass, together with 62 authorization and checker regression tests for supporting components.
Across accepted instances with 2--128 results, the checker takes \(0.11\)--\(53.47\) ms.
Rejected instances take \(0.11\)--\(5.51\) ms, while enumerating 6--720 unordered completions takes \(0.33\)--\(50.35\) ms.
\Cref{app:artifact-bounds} reports the complete measurement protocol and executable coverage.

\section{Related Work}
\label{sec:related}

\paragraph{Recovery, workflow update, and synthesis}
Output-commit recovery reconstructs a fixed computation.
Generalized nonblocking and Live Synthesis solve supplied transition systems and completion conditions, while workflow inheritance and dynamic controller update connect supplied old and new specifications~\cite{strom1985optimistic,elnozahy2002rollback,dequeiroz2005multitasking,malik2008generalised,finkbeiner2022live,vanderaalst2002inheritance,nahabedian2020dynamic,amram2022dynamic}.
Our runtime derives the edited workflow, action reuse, still-required results, completion conditions, and new runtime rules directly from the execution record.
It computes the largest safe execution set and atomically makes rules take effect for new authorizations and reuse of existing ones.

\paragraph{Agent recovery mechanisms}
ACRFence blocks replay after Restore, DART selects checkpoints, Atomix delays external actions until commit, and Rebound provides atomic, auditable rollback~\cite{zheng2026acrfence,yang2026dart,mohammadi2026atomix,burke2026rebound}.
Our formal results give exact criteria for Checkpoint and all six forms of Fork, Restore, and Merge, together with a rejection proof and an information lower bound showing that omitting any record part wholesale can change the correct answer.
These guarantees continue through the atomic rule update and arbitrary finite execution.

\section{Conclusion}
\label{sec:conclusion}

Safe execution editing requires the runtime to derive from the execution record what each edit must preserve.
Earlier authorizations remain fixed, and every still-required result retains a policy-safe completion.
For Checkpoint, the six forms of Fork, Restore, and Merge, and registered extensions, our checker either applies the direct Checkpoint check, computes the largest safe execution set, or returns a finite proof that no runtime rule set can enforce the edit safely.
The formal results connect this exact decision to decision-relevant distinctions from all four parts of the execution record, the atomic rule update, repeated execution edits, and equivalence with an ideal atomic machine.

\clearpage
\appendices
\section{Supplementary Proof Appendix}
\label{app:proofs}

This appendix gives the proof details for \cref{thm:exact-checking} over finite registered call models and atomic policy-domain rule versions.
Independent domains retain the product interpretation stated in the paper.

\subsection{The Declarative Definition and the Computed Answer}
\label{app:spec-bridge}

Fix a well-formed safety-check instance \(\Theta=(\kappa,H,\Delta,\mathsf{out},\mathcal A,u)\) for which the edit derivation yields \(\delta=(\kappa_u,H_u,\mathcal A_u,\mathcal C_u,\eta^-,\eta^+)\).
All sets in this subsection retain result labels, so two equal resolved words belonging to different results remain different elements.
\begin{definition}[Required source results]
\label{def:source-results}
Write \(\mathsf{kind}(u)\) for the constructor of \(u\).
Call either Fork constructor a Fork and either Restore constructor a Restore.
For the matching source row in \cref{tab:history-rules}, define
\[
\begin{array}{c|c}
u&\mathsf{SrcReg}(H,u)\\ \hline
\text{Fork}&\{(L,\mathcal C),(R,\mathcal C)\}\\
\text{Restore}&\{(\mathsf{cur},\mathcal C),(k,\mathcal C_k)\}\\
\mathsf{MergeSelect}(g,w,\sigma)&
 \{(w,\sem{T_w}),(\bar w,\sem{T_{\bar w}})\}\\
\mathsf{MergeJoin}(g,\sigma)&
 \{(L,\sem{T_L}),(R,\sem{T_R})\}
\end{array}
\]
The full set of source results is
\[
\begin{aligned}
\mathsf{RequiredBefore}(H,u)
={}&(\{\mathsf{ctx}\}\times O_{\rm unchanged})\\
&\uplus
\biguplus_{\substack{(r,\mathcal D)\\
\in\mathsf{SrcReg}(H,u)}}(\{r\}\times O_{\mathcal D}).
\end{aligned}
\]
Write \(\mathsf{Done}_H(r,o)\) when the recorded progress for region \(r\) shows that every event in result \(o\)'s pomset has completed.
Only \(\mathsf{RestoreReplace}\) treats completed current results as finished:
\[
\mathsf{DoneCur}_H=
\{(\mathsf{cur},o)\mid
o\in O_{\mathcal C}\land\mathsf{Done}_H(\mathsf{cur},o)\},
\]
\[
\begin{array}{c|c}
u&\mathsf{Satisfied}_{H,u}\\ \hline
\mathsf{RestoreReplace}(b,k,\sigma)&\mathsf{DoneCur}_H\\
\text{otherwise}&\varnothing
\end{array}
\]
Completion observability makes this set either all of \(\{\mathsf{cur}\}\times O_{\mathcal C}\) or empty.
The only sets eligible for signed removal are
\[
\mathsf{CurOut}=\{\mathsf{cur}\}\times O_{\mathcal C},
\qquad
\mathsf{Other}_w=\{\bar w\}\times O_{\sem{T_{\bar w}}},
\]
\[
\begin{array}{c|c}
u&\mathsf{CanRemove}(H,u)\\ \hline
\mathsf{RestoreReplace}(b,k,\sigma)&
 \mathsf{CurOut}\setminus\mathsf{Satisfied}_{H,u}\\
\mathsf{MergeSelect}(g,w,\sigma)&\mathsf{Other}_w\\
\text{otherwise}&\varnothing
\end{array}
\]
\(\mathsf{reqauth}_H(a)\) is the authority identity named by the unique signed authority record found through \(a\)'s source region in \(G\).
When \(a\) comes from a checkpoint, the identity is taken from the stored copy in \(K\).
Context identity, carry, and clone preserve that authority identity, while every source region introduced by an edit rule receives the identity of the authority that signed the rule and policy version.
Thus \(\mathsf{reqauth}_H\) is total and immutable on \(\mathsf{RequiredBefore}(H,u)\), including Fork-created \(L/R\) copies.
For every \(a\in\mathsf{CanRemove}(H,u)\), the removal request must contain a signature by \(\mathsf{reqauth}_H(a)\) over
\[
\begin{aligned}
\mathsf{rmmsg}(H,u)=(
&\omega,\kappa,\zeta,\sigma,H,\\
&\mathsf{CanRemove}(H,u),\\
&\mathsf{origin}(\mathsf{CanRemove}(H,u))).
\end{aligned}
\]
Without all required signatures, the edit has no derivation.
With them, \(\mathsf{AuthorizedRemoval}_{H,u}=\mathsf{CanRemove}(H,u)\).
Write \(\mathsf{Sat}_u=\mathsf{Satisfied}_{H,u}\) and
\(\mathsf{Rm}_u=\mathsf{AuthorizedRemoval}_{H,u}\).
The checker requires
{\small
\[
\begin{aligned}
\mathsf{StillRequired}_u
 &=\mathsf{RequiredBefore}(H,u)
   \setminus(\mathsf{Sat}_u\uplus\mathsf{Rm}_u),\\
\mathsf{RequiredBefore}(H,u)
 &=\mathsf{StillRequired}_u
   \uplus\mathsf{Sat}_u\uplus\mathsf{Rm}_u.
\end{aligned}
\]
}
\end{definition}

\begin{definition}[Preserving required results]
\label{app:spc}
\[
\begin{aligned}
\mathsf{PreservesRequired}(H,u,\widehat B)
&\Longleftrightarrow
\forall a\in\mathsf{StillRequired}_u.\\[-2pt]
&\qquad\exists(t,y)\in\widehat B.\
\mathsf{Covers}_u(t,a).
\end{aligned}
\]
Here \(\mathsf{AuthorizedRemoval}_{H,u}\) contains the typed results of the unselected branch in \(\mathsf{MergeSelect}\) or the current-branch results of an authorized \(\mathsf{RestoreReplace}\).
\(\mathsf{Satisfied}_{H,u}\) contains the completed current-branch results of \(\mathsf{RestoreReplace}\).
Both sets are empty for every other row and for a registered extension.
Thus \(\mathsf{StillRequired}_u=\mathsf{RequiredBefore}(H,u)\setminus(\mathsf{Satisfied}_{H,u}\uplus\mathsf{AuthorizedRemoval}_{H,u})\), so \(\mathsf{PreservesRequired}\) covers exactly the source results that are neither completed nor authorized for removal.
\end{definition}
Let \(\mathfrak R_\Theta\) be the finite set of safe execution sets from \cref{def:safe-execution-set}, ordered componentwise by
\[
 (L_1,\widehat B_1)\sqsubseteq(L_2,\widehat B_2)
 \quad\Longleftrightarrow\quad
 L_1\subseteq L_2\ \land\ \widehat B_1\subseteq\widehat B_2 .
\]

\begin{lemma}[The computed answer is exact]
\label{lem:app-spec-bridge}
The declarative predicate \(\mathsf{MaxSafe}(\Theta;L,\widehat B)\) holds for some pair if and only if \(\widehat B^\dagger_{H,u}\ne\varnothing\).
When such a pair exists, it is unique and equals \((W^\dagger_{H,u},\widehat B^\dagger_{H,u})\).
The declarative object and the computed fixed point are defined independently, and the lemma proves their equality.
\end{lemma}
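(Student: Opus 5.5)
The plan is to reduce the statement to \cref{lem:largest-safe-family} and then use antisymmetry of the componentwise order \(\sqsubseteq\) on \(\mathfrak R_\Theta\). Recall that \(\mathsf{MaxSafe}(\Theta;L,\widehat B)\) says that \((L,\widehat B)\in\mathfrak R_\Theta\) and that every \((L',\widehat B')\in\mathfrak R_\Theta\) satisfies \((L',\widehat B')\sqsubseteq(L,\widehat B)\). Since \(\sqsubseteq\) is the product of two subset orders, it is a partial order. A greatest element of a poset is therefore unique when it exists, and uniqueness in the lemma follows immediately.

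\emph{If \(\widehat B^\dagger_{H,u}\ne\varnothing\).} By \cref{def:safe-execution-edit}, \(\mathsf{SafeEdit}(\Theta)\) holds, since \(\mathsf{Derive}(\Theta)\) is defined for the fixed instance. \Cref{lem:largest-safe-family} then gives two facts. First, \((W^\dagger_{H,u},\widehat B^\dagger_{H,u})\in\mathfrak R_\Theta\). Second, every safe execution set \((L,\widehat B)\) satisfies \(\widehat B\subseteq\widehat B^\dagger_{H,u}\) and \(L\subseteq W^\dagger_{H,u}\). Together these are exactly \(\mathsf{MaxSafe}(\Theta;W^\dagger_{H,u},\widehat B^\dagger_{H,u})\). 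By uniqueness, any pair satisfying \(\mathsf{MaxSafe}\) equals this one.

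\emph{If some pair satisfies \(\mathsf{MaxSafe}\).} Then \(\mathfrak R_\Theta\) is nonempty. The existence direction of \cref{lem:largest-safe-family} gives \(\mathsf{SafeEdit}(\Theta)\), and hence \(\widehat B^\dagger_{H,u}\ne\varnothing\). As a self-contained check, condition 1 of \cref{def:safe-execution-set} gives \(\widehat B\ne\varnothing\). Conditions 2 and 3, together with \cref{eq:safe-completions}, put \(\widehat B\subseteq\widehat B_0\). Condition 5 makes \(\widehat B\) post-fixed for \(\widehat\Phi\): for each \((o,y)\in\widehat B\) and each \(z\preceq y\), condition 4 gives \(z\in L\), so every \(o'\in\mathsf{Compat}(z)\) has a witness in \(\widehat B\). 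By induction on \(i\), monotonicity of \(\widehat\Phi\) then gives \(\widehat B\subseteq\widehat B_i\) for all \(i\). Hence \(\widehat B\subseteq\widehat B^\dagger_{H,u}\), which is nonempty.

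The main point needing care is the claim that the two objects are defined independently. \(\mathsf{MaxSafe}\) quantifies over \(\mathfrak R_\Theta\), which includes \(\mathsf{PreservesRequired}\) and result-labelled sets. The fixed point, by contrast, is built only from \(\widehat B_0\) and \(\widehat\Phi\). So the proof must confirm that \((W^\dagger,\widehat B^\dagger)\) really satisfies \(\mathsf{PreservesRequired}\). This is the step inside \cref{lem:largest-safe-family} that uses the \(z=\epsilon\) instance of the fixed-point condition together with the coverage witnesses from \cref{lem:history-derivation}. It is also where we need \(\mathsf{StillRequired}_u\) to be exactly the partition remainder from \cref{def:source-results}. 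I would re-verify that step explicitly: at \(z=\epsilon\), every target result \(t\in O_{H,u}\) is compatible, so some \((t,y)\in\widehat B^\dagger\) exists; for each \(a\) the derivation supplies a \(t\) with \(\mathsf{Covers}_u(t,a)\). All remaining steps are order-theoretic bookkeeping.
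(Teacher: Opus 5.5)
Your proposal is correct and follows essentially the paper's own argument. Every safe execution set is post-fixed for \(\widehat\Phi\) and so lies inside \(\widehat B^\dagger_{H,u}\); a nonempty fixed point is itself safe, with \(\mathsf{PreservesRequired}\) obtained from the \(z=\epsilon\) instance plus the coverage witnesses; and uniqueness follows from antisymmetry of \(\sqsubseteq\). The only difference is that you cite \cref{lem:largest-safe-family} for the converse direction, while the paper re-derives it inline, which changes nothing substantive.
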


\begin{proof}
Let \((L,\widehat B)\in\mathfrak R_\Theta\).
The workflow, result, and policy requirements give \(\widehat B\subseteq\widehat B_0\).
For every \((o,y)\in\widehat B\), every partial execution \(z\preceq y\) belongs to \(L=\Partial(\pi_2\widehat B)\).
Persistent result coverage therefore supplies, for every \(o'\in\mathsf{Compat}(z)\), a pair \((o',y')\in\widehat B\) extending \(z\).
Thus \(\widehat B\subseteq\widehat\Phi(\widehat B)\).
By monotonicity on the finite lattice \(2^{\widehat B_0}\), every post-fixed set is contained in the greatest fixed point, so \(\widehat B\subseteq\widehat B^\dagger\).
Taking the projected partial executions yields \(L\subseteq W^\dagger\).

Conversely, suppose \(\widehat B^\dagger\ne\varnothing\).
The fixed-point equality keeps every compatible result completable and, at \(\epsilon\), combines with the checked preservation maps to establish \(\mathsf{PreservesRequired}\).
Inclusion in \(\widehat B_0\) ensures that complete executions follow the target workflow and policy.
Closure of \(\mathcal A_u\) under truncation makes every generated partial execution safe, and \(W^\dagger=\Partial(\pi_2\widehat B^\dagger)\) gives a completion after every partial execution.
Hence \((W^\dagger,\widehat B^\dagger)\in\mathfrak R_\Theta\), and the preceding containment makes it greatest.
If two greatest elements existed, each would contain the other componentwise, so they would be equal.
If \(\widehat B^\dagger=\varnothing\), the first half places every hypothetical safe execution set inside the empty set, contradicting its required nonemptiness.
\end{proof}

\begin{corollary}[Exact checking]
\label{cor:app-exact-boundary}
The ideal edit transition is enabled by the declarative \(\mathsf{SafeBehavior}(\Theta)\) exactly when the checker returns a nonempty fixed point.
On that edge, the ideal language \(L^*\) equals the generated language of \(\mathsf{Can}(W^\dagger,B^\dagger)\).
\end{corollary}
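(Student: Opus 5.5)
The corollary follows by combining \cref{lem:app-spec-bridge} with \cref{lem:runtime-language}, and I would prove its two claims separately.

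\textbf{Enabling condition.} I would first unfold how the ideal machine's edit transition is specified. It fires on \(\Theta\) exactly when \(\mathsf{Derive}(\Theta)\) is defined and the declarative witness \(\mathsf{SafeBehavior}(\Theta)\) exists, i.e. some pair satisfies \(\mathsf{MaxSafe}(\Theta;L,\widehat B)\). If \(\mathsf{Derive}(\Theta)\) is undefined, then the checker returns \(\mathsf{Invalid}\), no fixed point is computed, and \(\mathsf{SafeBehavior}\) is undefined, so both sides are false. If \(\mathsf{Derive}(\Theta)\) is defined, \cref{lem:app-spec-bridge} says that \(\mathsf{MaxSafe}\) has a witness iff \(\widehat B^\dagger_{H,u}\ne\varnothing\). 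By the definition of \(\mathsf{Check}\), that is exactly the condition under which the checker returns \(\mathsf{Accepted}\) with a nonempty fixed point. Because \(\mathsf{Derive}\) is deterministic (\cref{lem:history-derivation}), the ideal machine and the checker use the same derived \(H_u\), \(\mathcal A_u\) and \(\rho_u\). Hence they evaluate the same \(\widehat\Phi\) over the same \(\widehat B_0\), and the two conditions coincide.

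\textbf{Language equality.} On an enabled edge, the ideal machine installs \(L^*\), the first component of \(\mathsf{SafeBehavior}(\Theta)\). By the uniqueness clause of \cref{lem:app-spec-bridge}, this component equals \(W^\dagger_{H,u}=\Partial(B^\dagger_{H,u})\). By \cref{lem:runtime-language}, the generated language of \(\mathsf{Can}(W^\dagger,B^\dagger)\) is exactly \(W^\dagger_{H,u}\), so \(L^*\) equals that generated language. The same lemma also gives equality of the marked language with \(B^\dagger\), which I would record because the bisimulation later needs it.

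\textbf{Main obstacle.} The only delicate point is the bookkeeping that aligns what the ideal machine is parameterised by with what the checker computes. The ideal edge is defined from the declarative safe execution set, which keeps result labels. The automaton \(\mathsf{Can}\) is built from the unlabeled projection \(\pi_2\widehat B^\dagger\). I would check that \(\Partial\) commutes with this projection, which holds because \(L=\Partial(\pi_2\widehat B)\) by condition 4 of \cref{def:safe-execution-set}. I would also check that the symbols of \(\mathsf{Can}\), which carry the \(\Fresh\)/\(\Alias\) annotation, are the same annotated words that \(\Resolve^{\rho_u}_\Delta\) produces in \(R_o\). Both follow directly from the definitions, so the corollary needs no new fixed-point reasoning.
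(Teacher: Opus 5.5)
Your proposal is correct and follows the paper's own proof: the enabling condition is exactly \cref{lem:app-spec-bridge}, and the language equality uses the uniqueness clause of that lemma together with \cref{lem:runtime-language}. The extra bookkeeping you describe (the undefined-derivation case, projecting away result labels, and the \Fresh/\Alias annotations) is harmless and already follows from the paper's definitions.
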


\begin{proof}
The first statement is \cref{lem:app-spec-bridge}.
The second combines that lemma with \cref{lem:runtime-language}.
\end{proof}

\begin{proposition}[Output-sensitive checking]
\label{prop:app-output-complexity}
Fix a unit-cost model for the checker with constant-time finite-map lookup and policy-automaton transitions.
Let
\[
\begin{aligned}
D&=\lvert\Theta\rvert_{\rm chk},&
n&=\max_o\lvert X_{p_o}\rvert,\\
\Lambda&=\sum_o\sum_{w\in\Lin(p_o)}(\lvert w\rvert+1),&
N_B&=\lvert\widehat B_0\rvert,\\
V_B&=\sum_{(o,y)\in\widehat B_0}(\lvert y\rvert+1).
\end{aligned}
\]
\[
\mathcal G_{\rm cov}
=\left\{((o,y),z,o')\ \middle|\
\begin{aligned}
&(o,y)\in\widehat B_0,\ z\preceq y,\\[-2pt]
&o'\in\mathsf{Compat}(z)
\end{aligned}\right\}.
\]
Here \(D\) is the checked input size, \(\Lambda\) is the total size of the emitted linearizations, and \(\mathcal G_{\rm cov}\) contains the triples of candidate execution, partial execution, and required result examined by the checker.
The checker runs in \(O(D+n\Lambda+|\mathcal G_{\rm cov}|+V_B\log(2+V_B))\) time and \(O(D+\Lambda+|\mathcal G_{\rm cov}|)\) space, including the data it returns.
Moreover, \(N_B\leq\sum_o|\Lin(p_o)|\), \(|\mathcal G_{\rm cov}|\leq N_B(n+1)|O_{\mathcal C_u}|\), and the number of nonempty strict-removal ranks is at most \(N_B\).
\end{proposition}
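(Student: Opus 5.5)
The plan is to bound each phase of the checker separately and sum the costs. The phases are: derivation; emission of linearizations with resolution and policy filtering; construction of the coverage triples; iterated pruning to the greatest fixed point; and output.

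\emph{Derivation and emission.} Derivation reads the checked input once. Under the unit-cost finite-map assumption, each table row of \cref{tab:history-rules}, the preservation maps, and the coverage check are linear in \(D\). For each result \(o\) I would enumerate \(\Lin(p_o)\) by the standard backtracking over minimal elements. Each emitted word of length at most \(n\) is produced with \(O(n)\) work per position (testing minimality against \(\prec_o\)), which gives the \(n\Lambda\) term. Along each word I would run \(\Resolve\) incrementally using a persistent set \(D_i\), which is justified by \cref{lem:resolution}. In parallel I would advance the policy automaton on \(\mathsf{lab}(\Delta)\mathsf{auth}(\cdot)\), so \(M_o\) costs \(O(\Lambda)\) beyond enumeration. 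The bound \(N_B\le\sum_o|\Lin(p_o)|\) follows immediately, since each element of \(\widehat B_0\) is the resolution of a distinct linearization.

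\emph{Coverage and pruning.} Next I would insert all resolved words into a trie keyed by annotated symbols. Sorting the children gives the \(V_B\log(2+V_B)\) term. At each trie node \(z\) I would store \(\mathsf{Compat}(z)\) and, for each \(o'\), a counter of surviving \((o',y')\) below \(z\). Computing \(\mathsf{Compat}\) via prefixes of \(\Lin(p_o)\) is already paid for by \(\Lambda\). The set \(\mathcal G_{\rm cov}\) is then materialized explicitly. Its size bound holds because each candidate has at most \(n+1\) prefixes and at most \(|O_{\mathcal C_u}|\) compatible results per prefix.

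The fixed point is then computed in the style of Horn-clause propagation or Knaster--Tarski worklist counting. When a candidate is removed, I decrement the counters on its \(\le n+1\) ancestors. Whenever a counter for \((z,o')\) hits zero, I enqueue every candidate triple \(((o,y),z,o')\) whose prefix is \(z\) and remove those candidates, recording the rank and the witness \((z,o')\) for the rejection proof. Each triple is touched \(O(1)\) times, so pruning costs \(O(|\mathcal G_{\rm cov}|+V_B)\). I would show that the worklist result equals \(\bigcap_i\widehat B_i\) by the usual argument: anything removed is absent from some \(\widehat B_i\), and what remains is post-fixed, so \cref{lem:largest-safe-family} applies. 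Because ranks are assigned by rounds in which at least one candidate is strictly removed, there are at most \(N_B\) nonempty ranks.

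\emph{Output, space, and the main obstacle.} Returning \(W^\dagger\) as the pruned trie, or the rejection log, is linear in the stored structures. Space is dominated by the input, the emitted linearizations, and the triples, which gives \(O(D+\Lambda+|\mathcal G_{\rm cov}|)\). The main obstacle I expect is making the worklist pruning genuinely linear in \(|\mathcal G_{\rm cov}|\). This requires indexing triples by \((z,o')\) so that a zeroed counter reaches its dependent candidates without rescanning, and ensuring that rank assignment reproduces the iteration \(\widehat B_{i+1}=\widehat\Phi(\widehat B_i)\) rather than an arbitrary removal order. I would handle the second point by processing the queue in rounds, one round per \(\widehat\Phi\) application.
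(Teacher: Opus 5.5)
Your enumeration, incremental $\Resolve$ and policy simulation, per-key support counters for $(z,o')$, synchronous removal batches equal to $\widehat B_i\setminus\widehat B_{i+1}$, and one stored rank and cause per removed candidate all match the paper's proof. The gap is in your output phase, which is also where you misplace the $V_B\log(2+V_B)$ term. For an $\mathsf{Accepted}$ answer, the returned data includes the canonical automaton $\mathsf{aut}_Z=\mathsf{Can}(W^\dagger,B^\dagger)$. The record $Z$ also carries the digest $H_{\mathsf{aut}}$ of its encoding, and the verifier checks $\mathsf{aut}_Z\cong\mathsf{Can}(W^\dagger,B^\dagger)$. The states of $\mathsf{Can}$ are derivative pairs $q_z=(W/z,B/z)$, and by \cref{lem:runtime-language} it is the minimal automaton for the pair. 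Distinct prefixes with equal derivatives must therefore be merged, and the pruned trie does not do this in general. Because complete sequences cannot be properly extended, every complete execution has derivative $(\{\epsilon\},\{\epsilon\})$. For example, the two supplier leaves of the laptop trie collapse into one marked state in $\mathsf{Can}$. So ``returning $W^\dagger$ as the pruned trie is linear'' does not produce the data the checker returns, and the cost of producing that data is not accounted for.

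The missing step is a bottom-up minimization of the pruned trie over $B^\dagger$. Give each node a signature consisting of its mark bit and its list of (annotated symbol, child class) pairs. Then sort the signatures height by height and merge nodes with equal signatures. By induction on subtree height, two nodes receive the same class exactly when their derivative pairs agree. The pruned trie has at most $V_B$ nodes, so this costs $O(V_B\log(2+V_B))$ time and $O(V_B)\subseteq O(\Lambda)$ space. This minimization is precisely what the paper's $V_B\log(2+V_B)$ term pays for. Your proposed source for that term, sorting trie children, does not serve this purpose. Under the stated constant-time finite-map model the tries can be built in time linear in the emitted volume, as the paper does. Moreover, the trie over all resolved words, which you need for $\mathsf{Compat}$, is sized by $\Lambda$ rather than $V_B$. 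Child sorting also never identifies equal derivatives. Once you add the minimization step, together with linear-time canonical encoding for the digests, your stated time and space bounds go through.
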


\begin{proof}
A backtracking topological-order enumerator emits all indexed linearizations in \(O(n\Lambda)\) time, while resolution, policy simulation, and construction of raw and resolved tries are linear in emitted volume.
For every support key \((z,o')\), maintain the number of remaining pairs \((o',y')\) extending \(z\), initially enqueue candidates having a required result with zero support, and remove candidates in synchronous batches.
When removal makes a support count zero, enqueue exactly the candidates that depend on that requirement for the next batch.
Each candidate and each triple in \(\mathcal G_{\rm cov}\) is processed at most once, and the batches are exactly \(\widehat B_i\setminus\widehat B_{i+1}\).
Store one rank and cause per removed candidate rather than copied sets, and construct \(\mathsf{Can}(W,B)\) by bottom-up sorting of finite-trie derivative signatures in \(O(V_B\log(2+V_B))\) time.
\end{proof}

\subsection{Ordered Rejection Proofs}
\label{app:rejection-proof}

For an empty fixed point, the returned proof is
\[
C_{\rm fp}=(\widehat B_0,E_0,\ldots,E_k),
\qquad
E_i\subseteq
\widehat B_i\times\Partial(\pi_2\widehat B_i)
\times O_{H,u}.
\]
The verifier first recomputes the canonical \(\widehat B_0(\Theta)\) and requires equality with the serialized first component.
An entry \(((o,y),z,o')\in E_i\) verifies exactly when \(z\preceq y\), \(o'\in\mathsf{Compat}(z)\), and no \((o',y')\in\widehat B_i\) extends \(z\).
The verifier requires the set of first components in \(E_i\) to equal \(\widehat B_i\setminus\widehat\Phi(\widehat B_i)\), sets \(\widehat B_{i+1}=\widehat B_i\setminus\pi_1(E_i)\), and accepts only when \(\widehat B_{k+1}=\varnothing\).
All sets, partial executions, and compatibility checks are finite and use the canonical order fixed by the checker.

\begin{lemma}[The rejection proof is sound and complete]
The verifier accepts \(C_{\rm fp}\) iff the descending fixed-point computation reaches \(\widehat B^\dagger=\varnothing\).
In that case no safe execution set exists.
\end{lemma}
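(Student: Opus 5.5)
We prove the two directions separately and then derive the final claim, that no safe execution set exists, from the earlier containment argument in \cref{lem:largest-safe-family}. Throughout, we rely on one fact about \(\widehat\Phi\): it is monotone, and on a fixed finite \(\widehat B\) its defining condition is decidable by finite enumeration of \(z\preceq y\) and of \(\mathsf{Compat}(z)\).

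\emph{Soundness of the verifier.} Suppose the verifier accepts \(C_{\rm fp}=(\widehat B_0,E_0,\ldots,E_k)\). We show by induction on \(i\) that the verifier's sets \(\widehat B_i\) coincide with the descending iterates \(\widehat\Phi^i(\widehat B_0)\).

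The base case holds because the verifier recomputes the canonical \(\widehat B_0(\Theta)\) and requires equality with it. For the step, each entry \(((o,y),z,o')\) is checked to satisfy \(z\preceq y\), \(o'\in\mathsf{Compat}(z)\), and that no pair in \(\widehat B_i\) labeled \(o'\) extends \(z\). This certifies \((o,y)\notin\widehat\Phi(\widehat B_i)\). The verifier also requires the first components of \(E_i\) to be exactly \(\widehat B_i\setminus\widehat\Phi(\widehat B_i)\). Since \(\widehat\Phi(\widehat B_i)\subseteq\widehat B_i\) along the descending chain, we get \(\widehat B_{i+1}=\widehat B_i\setminus\pi_1(E_i)=\widehat\Phi(\widehat B_i)\).

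The inclusion \(\widehat\Phi(\widehat B_i)\subseteq\widehat B_i\) needs a short side argument: \(\widehat B_1\subseteq\widehat B_0\) by definition of \(\widehat\Phi\), and monotonicity propagates the inclusion down the chain. Acceptance requires \(\widehat B_{k+1}=\varnothing\). Because \(\widehat\Phi(\varnothing)\subseteq\varnothing\), the chain is stationary at \(\varnothing\) from then on, so \(\widehat B^\dagger=\bigcap_i\widehat B_i=\varnothing\).

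\emph{Completeness of the verifier.} Suppose the descending computation reaches \(\varnothing\). The chain is strictly decreasing until it stabilizes, and \(\widehat B_0\) is finite (bounded by \(N_B\) as in \cref{prop:app-output-complexity}), so it reaches \(\varnothing\) after finitely many steps \(k+1\).

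We build the certificate as follows. For each removed \((o,y)\in\widehat B_i\setminus\widehat B_{i+1}\), the failure of the \(\widehat\Phi\) condition supplies a witness \(z\preceq y\) and an \(o'\in\mathsf{Compat}(z)\) with no extending pair in \(\widehat B_i\). We choose the canonically least such witness. Then \(z\in\Partial(\pi_2\widehat B_i)\), so \(E_i\) is well typed. Every check the verifier performs is satisfied by construction, and the final check \(\widehat B_{k+1}=\varnothing\) holds.

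\emph{No safe execution set exists.} Given \(\widehat B^\dagger=\varnothing\), \cref{lem:largest-safe-family} (equivalently \cref{lem:app-spec-bridge}) places every safe \((L,\widehat B)\) inside the empty set. This contradicts the nonemptiness condition of \cref{def:safe-execution-set}.

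For a more self-contained argument that relies only on the certificate, we show directly that any post-fixed \(P\subseteq\widehat B_0\) lies in every \(\widehat B_i\). The base case is immediate. For the step, \(P\subseteq\widehat\Phi(P)\subseteq\widehat\Phi(\widehat B_i)=\widehat B_{i+1}\) by monotonicity. Hence \(P\subseteq\widehat B_{k+1}=\varnothing\).

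The main obstacle is the soundness direction's identification of the verifier's sets with the true iterates. The risk is that a certificate could remove an execution too early, or fail to remove one, while still ending at \(\varnothing\). The equality check on \(\widehat B_i\setminus\widehat\Phi(\widehat B_i)\) is exactly what rules both cases out, so we must verify that this check is computable from finite data. To do so, we show that \(\mathsf{Compat}(z)\) depends only on the recomputed contract \(\mathcal C_u\). This holds because the recomputation of \(\widehat B_0\) ties the certificate to \(\Theta\) and not to Agent-supplied data.
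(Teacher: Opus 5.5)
Your proposal is correct and takes essentially the same approach as the paper. Both arguments tie the certificate to \(\Theta\) through the recomputed \(\widehat B_0\), identify each verified round with one application of \(\widehat\Phi\), and obtain the converse from the canonical iteration. Both then exclude every safe execution set by the induction that places each post-fixed family inside every \(\widehat B_i\).
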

\begin{proof}
The initial equality check ties the proof to \(\Theta\), and every verified round is exactly \(\widehat B_{i+1}=\widehat\Phi(\widehat B_i)\), so canonical iteration supplies completeness.
For soundness, induction places every post-fixed family inside every \(\widehat B_i\), so an empty final family excludes every nonempty safe execution set.
\end{proof}

\subsection{Exact Returned Proof Data}
\label{app:returned-data}

Let \(\mathsf{enc}\) be canonical and length-delimited, with \(\kappa\) authenticating \(\mathcal A\).
The formal transition system treats the domain-separated authenticated digests \(\mathsf{Digest}_{\rm rules}\) and \(\mathsf{Digest}_{\rm record}\) as injective, so equal digests have equal encoded inputs.
For \(\Theta=(\kappa,H,\Delta,\mathsf{out},\mathcal A,u)\), define
\[
\gamma(\Theta)
 =\mathsf{Digest}_{\rm record}(
   \mathsf{enc}(\mathsf{request},\omega,\zeta,\kappa,
   H,\Delta,\mathsf{out},\mathcal A,u)).
\]
When \(\mathsf{Derive}(\Theta)\) is defined and
\(\widehat B^\dagger_{H,u}\ne\varnothing\), also define
\[
\begin{aligned}
\mathsf{aut}_Z
 &=\mathsf{Can}(W^\dagger_{H,u},B^\dagger_{H,u}),\\
H_{\mathsf{aut}}
 &=\mathsf{Digest}_{\rm rules}(
   \mathsf{enc}(\mathsf{aut}_Z,\widehat B^\dagger_{H,u})).
\end{aligned}
\]
The digest checked before the rules change is
\begin{equation}
\label{eq:checked-record-digest}
\begin{split}
 \mathsf{cut}_Z=
 \mathsf{Digest}_{\rm record}(\mathsf{enc}(&\omega,\zeta,\kappa,H,\Delta,
     \mathsf{out},u,\mathcal C_{H,u},\\[-2pt]
     &H_{\mathsf{aut}},\eta^-,\eta^+)).
\end{split}
\end{equation}
Because \(H\) contains \(\chi,\mathsf{ver},\rho,\beta\), the digest covers workflow progress, record version, registry contents, active rule entries and versions, the authorization log, and the request queue.
The verifier re-derives the edit-rule judgment and complete fixed point labeled by result, reconstructs \((W^\dagger,B^\dagger)\), verifies both digests, and checks the full automaton isomorphism
\[
 \mathsf{aut}_Z\cong\mathsf{Can}(W^\dagger,B^\dagger),
\]
including its initial state, marking, and every labeled transition.
Canonical encoding order makes \(\mathsf{Invalid}\), \(\mathsf{Reject}\), and \(\mathsf{Accepted}\) unique and unchanged by consistent renaming of internal identifiers.
A concrete implementation may authenticate \(\mathsf{enc}\) with signatures or collision-resistant digests.
The corresponding guarantee relies on signature unforgeability and digest collision resistance, which the formal model represents with unforgeable identifiers.

\subsection{Proofs for Edit Rules and Rechecking}

\subsubsection{Edit-rule correctness}
\label{app:schema-fidelity-proof}

\begin{proof}[Full proof of \cref{lem:history-derivation}]
Unique object and edit-rule identifiers select one row and record, while context identity preserves everything outside the edited region.
The Fork rows check both copies.
RestoreReplace checks the checkpoint copy and the exact treatment of the current branch, while RestoreLive checks both the retained current branch and the checkpoint copy.
MergeSelect checks the selected and unselected regions, while MergeJoin checks both retained regions before its barrier.
Clone and carry supply the global bijections and preserve which workflow calls belong to each result and which authority controls each source region.
Constructor induction defines \(\mathsf{pr}^u_r\), proves the cover, and supplies every \(\mathsf{Covers}_u\) witness.
The common derived record fixes all remaining fields, preserves or initializes branch progress, and assigns each source region introduced by an edit rule its signed authority record.
Fresh identifiers, partial-constructor checks, and \cref{lem:observable-completion} preserve uniqueness, acyclicity, and contract well-formedness.
Finally, \(\beta'(x)=(j_x^+,\eta^+)\) and \((x,j_x^+,h_x^+)\in\mathcal H_u^+\) for every current target workflow call \(x\).
\end{proof}

\subsubsection{Registered extension}
\label{app:registered-extension}

The Agent may name only an immutable record \(\xi\) that specifies the exact prior edit rules, registry, execution record, and policy, together with finite disjoint additions \(D_\Sigma,D_\rho,D_\alpha\) and a successor policy \(\mathcal A^+\).
The set \(D_\alpha\) contains the signed authority record for every newly introduced result and its source region.
The policy authority and trusted runtime verify \(\xi\).
Existing meanings remain unchanged, new identifiers are fresh and domain-scoped, and \(\mathcal A\subseteq\mathcal A^+\).
Let \(\zeta^+\), \(\kappa^+\), and \(\mathcal A^+\) be the verified successor versions and policy named by \(\xi\).
For every \(x\in X_{\sem T}\), the trusted runtime deterministically creates \(j_x^+\) and \(h_x^+\).
For fresh \(\eta^+\), define
\[
\begin{aligned}
 G^+&=G\uplus D_\alpha,\\
 \Sigma_{\zeta^+}&=\Sigma_\zeta\uplus D_\Sigma,\\
 \rho^+&=\rho\uplus D_\rho,\\
 \mathcal H_\xi^+
   &=\{(x,j_x^+,h_x^+)\mid x\in X_{\sem T}\},\\
 \beta^+(x)&=(j_x^+,\eta^+),\\
 H_\xi^+
   &= (\omega,G^+,T,K,\Sigma_{\zeta^+},\rho^+,\beta^+,\\[-2pt]
   &\hspace{13mm}\chi,\mathsf{ver}+1,\eta^+),\\
 \mathcal C_\xi&=\sem T,\\
 \mathsf{RequiredBefore}(H,\mathsf{Extend}(\xi))
   &=\mathsf{StillRequired}_{\mathsf{Extend}(\xi)},\\
 \mathsf{StillRequired}_{\mathsf{Extend}(\xi)}
   &=\{\mathsf{ctx}\}\times O_{\sem T},\\
 \mathsf{Satisfied}_{H,\mathsf{Extend}(\xi)}&=\varnothing,\\
 \mathsf{AuthorizedRemoval}_{H,\mathsf{Extend}(\xi)}&=\varnothing,\\
 \mathsf{Covers}_{\mathsf{Extend}(\xi)}
   (t,(\mathsf{ctx},o))
   &\Longleftrightarrow t=o .
\end{aligned}
\]
The checked judgment is
\[
(\kappa,H,\mathcal A)\vdash\mathsf{Extend}(\xi)\Downarrow
  (\kappa^+,H_\xi^+,\mathcal A^+,\mathcal C_\xi,\eta,\eta^+).
\]
It preserves \(T,K,\chi,\Delta,\mathsf{out}\) and every existing row of \(G,\Sigma_\zeta,\rho\), appends \(D_\alpha\) to the execution record, and authorizes no removal.
Every current result, workflow call, action, source region, authority record, and causal edge therefore remains unchanged.
It uses the same checker and atomic rule update as a Fork, Restore, or Merge edit.

\begin{lemma}[Registered extension preserves safety]
\label{lem:live-extension}
For a verified \(\mathsf{Extend}(\xi)\), the current safe executions after recorded progress form a post-fixed family for the extension instance and are therefore contained in \(\widehat B^\dagger_{H,\mathsf{Extend}(\xi)}\).
Hence \(\widehat B^\dagger_{H,\mathsf{Extend}(\xi)}\ne\varnothing\).
Consequently, a valid registered extension cannot remove a currently required result or leave it without a safe completion.
\end{lemma}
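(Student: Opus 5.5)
The plan is to exhibit the current safe behaviour, restricted to its remaining part, as a post-fixed family for the extension instance. Greatest-fixed-point containment (as in \cref{lem:largest-safe-family}) then places it inside \(\widehat B^\dagger_{H,\mathsf{Extend}(\xi)}\) and forces that set to be nonempty.

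\emph{Step 1: fix the candidate family.} Since we are at a reachable \(\mathsf{AgentSec}\) state, the installed automaton stores a nonempty largest safe set \(\widehat B^\dagger_{\rm old}\) for the previous checked instance, and the current processed sequence is \(r\in W^\dagger_{\rm old}\). By \cref{lem:recheck-coherence}, the residual family \(\widehat P=\widehat B^\dagger_{\rm old}/r\) equals \(\widehat B^\dagger(\mathcal C/\mathsf{raw}(r),\Delta_r)\). Its contract is exactly the current \(\sem T\), and its log is the current \(\Delta\). This \(\widehat P\) is the candidate.

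\emph{Step 2: check that \(\widehat P\) lies in \(\widehat B_0\) for the extension instance.} The extension preserves \(T\), \(\chi\), \(\Delta\), \(\mathsf{out}\) and every existing row of \(\rho\), and \(D_\rho\) is disjoint. Hence \(q^{\rho^+}_{\rm act}\) agrees with \(q^{\rho}_{\rm act}\) on \(X_{\sem T}\), and \(\Resolve^{\rho^+}_\Delta\) gives the same annotated words as before. So \(R_o\) is unchanged for every \(o\in O_{\sem T}\). The extension also requires \(\mathcal A\subseteq\mathcal A^+\). Therefore every \((o,y)\in\widehat P\) with \(\mathsf{lab}(\Delta)\mathsf{auth}(y)\in\mathcal A\) still lies in \(M_o\) under \(\mathcal A^+\), which gives \(\widehat P\subseteq\widehat B_0\).

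\emph{Step 3: check that \(\widehat P\) is post-fixed.} The set \(\mathsf{Compat}(z)\) depends only on the unchanged pomsets, so \(\widehat P\subseteq\widehat\Phi(\widehat P)\) transfers verbatim from the fixed-point property of \(\widehat P\) under the old parameters. Greatestness then yields \(\widehat P\subseteq\widehat B^\dagger_{H,\mathsf{Extend}(\xi)}\). Nonemptiness of \(\widehat P\) follows from \(r\in W^\dagger_{\rm old}=\Partial(B^\dagger_{\rm old})\), which gives some completion \(rv\).

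\emph{Step 4: conclude the final claim.} Here \(\mathsf{StillRequired}\) is \(\{\mathsf{ctx}\}\times O_{\sem T}\), with \(\mathsf{Covers}\) being identity. By \cref{lem:largest-safe-family}, the nonempty fixed point \(\widehat B^\dagger_{H,\mathsf{Extend}(\xi)}\) satisfies \(\mathsf{PreservesRequired}\), so at \(z=\epsilon\) every \(o\in O_{\sem T}\) has a completion. In addition, \(\mathsf{AuthorizedRemoval}=\varnothing\), so no currently required result is removed.

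I expect the main obstacle to be Step 2, specifically that resolution and the policy test really are unchanged under the extension. This needs the disjointness of \(D_\rho\), the fact that existing action assignments are not altered, and a careful matching of \(\ell\) with certified labels for the old actions. There is also a subtle point in Step 3. The quantifier \(\forall o'\in\mathsf{Compat}(z)\) ranges over all results of \(\sem T\), and this must coincide with the old residual compatibility. That coincidence comes from \(\mathcal C/\mathsf{raw}(r)=\sem T\), which holds by the well-formedness clause relating current leaves to \(\Gamma_b/\chi_b\).
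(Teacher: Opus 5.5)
Your proposal is correct and follows the same route as the paper. Both arguments show that the remaining safe family after recorded progress is post-fixed for the extension instance, because the extension preserves \(T\), action sharing, \(\Delta\), and result labels, and satisfies \(\mathcal A\subseteq\mathcal A^+\); greatestness then gives containment in a nonempty \(\widehat B^\dagger_{H,\mathsf{Extend}(\xi)}\). Your explicit use of \cref{lem:recheck-coherence} to identify \(\widehat P\), and your separate check that \(\widehat P\subseteq\widehat B_0\) for the new instance, spell out steps that the paper's brief proof leaves implicit.
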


\subsubsection{Rechecking after a partial execution}
\label{app:recheck-coherence-proof}

\begin{proof}[Full proof of \cref{lem:recheck-coherence}]
Because some projected member of \(\widehat B^\dagger\) has the form \(zv\), the recoverability and composition properties in \cref{lem:resolution} give the recoverability and authorization-log equalities.
Removing the executed sequence from the executions labeled by result retains every compatible result and gives
\[
 \mathsf{Compat}_{\mathcal C/\mathsf{raw}(z)}(s)
 =\mathsf{Compat}_{\mathcal C}(zs).
\]
Together with authorization-sequence concatenation, applying \(\Resolve\) incrementally yields
\[
 \widehat B_0(\mathcal C,\Delta)/z
 =\widehat B_0(\mathcal C/\mathsf{raw}(z),\Delta_z),
\]
with the same retained result identities, policy, and target registry.
Thus \(\widehat B^\dagger/z\) is post-fixed for the next-check operator and lies within its greatest fixed point \(Y\).
Conversely, incremental resolution and policy safety after \(z\) put \(zY=\{(o,zy)\mid(o,y)\in Y\}\) inside the original \(\widehat B_0\).
The union \(\widehat B^\dagger\cup zY\) is post-fixed because partial executions \(t\preceq z\) use existing witnesses, while those of the form \(t=zs\) use witnesses in \(Y\).
Greatestness gives \(zY\subseteq\widehat B^\dagger\), hence \(Y=\widehat B^\dagger/z\).
Finally, \(\Partial(B)/z=\Partial(B/z)\) yields the generated-language equality.
\end{proof}

\begin{proof}[Proof of \cref{lem:live-extension}]
\label{app:live-extension-proof}
The verified extension preserves the current workflow, which calls refer to the same action, every result label, and \(\mathcal A\subseteq\mathcal A^+\).
Every earlier safe completion and coverage witness therefore remains valid for the next fixed-point computation.
The current remaining family is therefore post-fixed, and greatestness gives its inclusion in a nonempty successor fixed point.
\end{proof}

\subsection{Lean Theorem Map}
\label{app:lean-map}

Six Lean modules machine-check the claims used in the paper.
\path{FiniteCore} proves the executable greatest-fixed-point checker equivalent to the declarative safe-execution-set predicate, including resolution over partial executions and greatestness.
\path{RegistrationRefinement} proves that registration accepts the typed finite model, every registered call remains under trusted control, action identity is preserved, and the finite and semantic contracts agree.
\path{HistoryStructure} proves \path{deriveEdit_iff}, well-formedness preservation, and executable fixtures for Choice/Parallel Fork, Replace/Live Restore, Select/Join Merge.
\path{OperationalSemantics} proves invariant preservation across every runtime trace, closure of all six edit forms, registered extension safety, and both orders between a tool call and an atomic rule update.
\path{CompilationBridge} constructs a secure atomic rule update from an accepted checker answer and reflects every successful update back to that answer.
\path{PaperTheorems} combines these results into exact registered checking, six-edit derivation, preservation before rules take effect, trace safety, and serialization between calls and rule updates.
The remaining appendix sections prove the general pomset lift, information lower bounds, and ideal--runtime weak bisimulation.

\subsection{Why No Record Part Can Be Omitted Wholesale}
\label{app:record-information}

We prove that each of the four record parts contains an answer-relevant distinction and that no part can be omitted wholesale while retaining the other three.
For each omitted part, we construct two valid records that become indistinguishable even though their correct answers are different.

Let \(\mathcal Q_{\rm sec}\) pair every valid checked record \(V=\mathsf{Sec}(S;c,r)\) with a well-typed request \(q::=\mathsf{Decide}(u)\mid\mathsf{TryUpdate}(u,Y)\).
For such a record, let \(\Theta_V(u)=\Theta_S(u)\) and \(\mathsf{Ready}_V(u,Y)=\mathsf{Ready}(S,u,Y)\).
For \(\mathsf{Decide}\), \(\mathsf{Ans}\) returns \(\mathsf{Check}(\Theta_V(u))\).
For \(\mathsf{TryUpdate}\), it returns the updated runtime state exactly when \(\mathsf{Ready}_V(u,Y)\) holds, and returns \(\mathsf{NoCommit}\) otherwise.

Internal identifiers may be renamed consistently, while public identifiers remain fixed.
We write \((V,q)\cong(V',q')\) when one type-preserving renaming maps the entire record and request to the other.
We use the same symbol for checker answers and runtime states when the corresponding renaming maps every field of one to the other.
An input function \(f\) is exact when indistinguishable reduced inputs always have the same correct answer up to such a renaming:
\[
\begin{aligned}
\mathsf{Exact}(f)\Longleftrightarrow {}&
\forall (V_0,q_0)\in\mathcal Q_{\rm sec}.\\[-2pt]
&\forall (V_1,q_1)\in\mathcal Q_{\rm sec}.\\[-2pt]
&(f(V_0),q_0)\cong(f(V_1),q_1)\\[-2pt]
&\Longrightarrow
\mathsf{Ans}(V_0,q_0)\cong\mathsf{Ans}(V_1,q_1).
\end{aligned}
\]

The four parts of the record are as follows.
\(\mathbf P\) contains the workflow, checkpoints, edit rules, required results, authorized removals, and the authority for each result.
\(\mathbf I\) records which workflow calls refer to the same tool action, together with each action's request, authorization label, scope, source region, current rule entry, and authorization data.
\(\mathbf E\) records processed calls, current and saved progress, prior authorizations, retries, returned values, queued requests, and the current policy state.
\(\mathbf C\) records the policy version, edit-rule version, execution record version, and active rule version, together with the internal namespace and the digest checked before new rules take effect.
Encodings, digests, authorization sequences, and returned proof data are computed from these four parts rather than treated as additional facts.

For \(J\subseteq\{\mathbf P,\mathbf I,\mathbf E,\mathbf C\}\), \(\mathsf{Keep}_J(V)\) retains the facts in \(J\).
It also removes every reference to an omitted fact and every computed value that depends on an omitted fact.
This recursive deletion has a unique output because computed values depend acyclically on the four recorded parts.
\(\mathsf{Drop}_{ca}(V)\) removes only the correspondence between workflow calls and tool actions, together with all data that could reconstruct it.
\(\mathsf{Drop}_{ra}(V)\) similarly removes only the correspondence between prior authorization records and tool actions.

\begin{lemma}[Omitting information is well defined]
\label{lem:app-omission}
Consistent renaming commutes with \(\mathsf{Keep}_J\), \(\mathsf{Drop}_{ca}\), and \(\mathsf{Drop}_{ra}\).
Their outputs therefore do not depend on the spelling of internal identifiers or the byte encoding of computed data.
\end{lemma}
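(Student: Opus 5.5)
The plan is to treat a consistent renaming as a type-preserving bijection \(\theta\) on internal identifiers, extended homomorphically to records, requests, and every computed value. I would show \(\theta\) commutes with each operator by structural induction over the acyclic dependency order of record fields. The base case is the four primitive parts \(\mathbf P,\mathbf I,\mathbf E,\mathbf C\). Here \(\mathsf{Keep}_J\) either retains a part verbatim or deletes it, and deletion is insensitive to identifier spelling. Since \(\theta\) is type-preserving, it maps each fact of part \(X\) to a fact of part \(X\). Hence membership in \(J\) is invariant, and \(\theta(\mathsf{Keep}_J(V))\) and \(\mathsf{Keep}_J(\theta V)\) agree on primitive facts.

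The inductive step covers computed values, namely encodings, digests, authorization sequences, and returned proof data. Each such value is a deterministic function \(F\) of earlier fields. I would first establish that \(F\) is equivariant, \(F(\theta V)=\theta F(V)\), up to the convention that digests and encodings are compared modulo the renaming. This follows from canonical, length-delimited encoding and the canonical ordering fixed by the checker, as stated in \cref{app:returned-data}. The recursive deletion removes a computed value exactly when some argument in its dependency set is omitted. Dependency sets are defined by field positions, not identifier spellings, so \(\theta\) preserves them. By induction, the retained computed values correspond under \(\theta\), and the deleted ones are deleted on both sides. Uniqueness of the output, which the statement assumes via acyclicity, then gives equality of the two reduced records.

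For \(\mathsf{Drop}_{ca}\) and \(\mathsf{Drop}_{ra}\), I would argue the same way. The removed relation is the graph of \(q_{\rm act}\), or respectively the map from authorization records to actions. Under \(\theta\), this graph maps to the corresponding graph of \(\theta V\). I would then define "all data that could reconstruct it" as the closure of dependents in the acyclic dependency order, rather than as a semantic notion. With that definition, the closure is again determined positionally and commutes with \(\theta\). The final sentence of the statement is then immediate. Two spellings of one record differ by some \(\theta\), so their reduced outputs differ by the same \(\theta\) and are identified by \(\cong\). Byte encodings of computed data are likewise determined up to \(\theta\) by the retained primitives.

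The main obstacle is the digest and encoding fields. A literal digest of a canonical encoding is not equivariant, since renaming changes the bytes. I would handle this with the formal model's convention that digests are injective, unforgeable symbolic terms over their encoded inputs, so \(\theta\) acts on them by renaming the inputs. I would also need to check that the renaming never touches public identifiers. That is the reason only internal identifiers may be renamed, and I would verify that \(\mathbf C\)'s internal namespace is the only place such identifiers originate.
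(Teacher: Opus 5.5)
Your proposal is correct and follows the paper's proof. A type-preserving renaming preserves references and dependencies, so it sends each deletion step to the corresponding step in the renamed record, and induction over the finite acyclic dependency order gives commutation; your symbolic treatment of digests also matches the paper's convention that digests are injective and modeled by unforgeable identifiers. One adjustment: do not narrow ``data that could reconstruct it'' to a closure of dependents, because that would miss data such as the creator's action \(q_{\rm act}(c)\), which the \(\mathsf{Drop}_{ra}\) pair must delete. Instead, note that reconstructibility is itself invariant under consistent renaming, since conjugating any reconstruction map by the renaming gives one for the renamed record.
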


\begin{proof}
A type-preserving renaming preserves references and dependencies.
It therefore maps every deletion step to the corresponding step in the renamed record.
Induction over the finite acyclic dependency order proves the claim.
\end{proof}

\begin{lemma}[Opposite answers require distinguishable records]
\label{lem:app-observational-separation}
If \(\mathsf{Ans}(V_0,q_0)\not\cong\mathsf{Ans}(V_1,q_1)\), then every exact \(f\) satisfies \((f(V_0),q_0)\not\cong(f(V_1),q_1)\).
\end{lemma}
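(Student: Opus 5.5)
The proof is a direct contrapositive of the definition of \(\mathsf{Exact}(f)\); no semantic analysis of the records is needed. Let \(f\) be exact, and suppose \(\mathsf{Ans}(V_0,q_0)\not\cong\mathsf{Ans}(V_1,q_1)\). Assume for contradiction that \((f(V_0),q_0)\cong(f(V_1),q_1)\). Both pairs lie in \(\mathcal Q_{\rm sec}\), because the lemma ranges over valid checked records paired with well-typed requests. Instantiating the universal quantifiers in \(\mathsf{Exact}(f)\) at \((V_0,q_0)\) and \((V_1,q_1)\) then gives \(\mathsf{Ans}(V_0,q_0)\cong\mathsf{Ans}(V_1,q_1)\), which contradicts the hypothesis.

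The only step needing care is to confirm that the two uses of \(\cong\) are the relations the definition quantifies over. On reduced inputs, \(\cong\) is one type-preserving renaming acting jointly on \(f(V)\) and the request. On answers, \(\cong\) is the renaming lifted to checker outputs and runtime states. I would check that \(\mathsf{Ans}\) is a well-defined total function on \(\mathcal Q_{\rm sec}\): \(\mathsf{Check}\) is deterministic, and \(\mathsf{TryUpdate}\) returns either a unique updated state or \(\mathsf{NoCommit}\). I would also check that \(\cong\) on answers is an equivalence relation, with inverse and composed renamings type-preserving, so that ``\(\not\cong\)'' is read consistently. The canonical-encoding remark in \cref{app:returned-data} supplies invariance of answers under renaming, and \cref{lem:app-omission} ensures that \(f=\mathsf{Keep}_J\) and the \(\mathsf{Drop}\) maps are themselves renaming-invariant. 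The lemma as stated, however, holds for arbitrary exact \(f\) and needs neither fact.

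I expect no real obstacle here, since this is unfolding definitions. The substantive work lies in how the lemma is used afterwards. To prove \(\neg\mathsf{Exact}(\mathsf{Keep}_J)\) for a missing part, it suffices to exhibit one pair of records with non-isomorphic answers whose \(\mathsf{Keep}_J\) images are isomorphic. That is the content of the rows of \cref{tab:record-pairs} used in \cref{thm:necessary-state}.
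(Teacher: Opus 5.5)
Your proposal is correct and matches the paper's proof, which simply observes that the lemma is the contrapositive of the definition of \(\mathsf{Exact}(f)\), instantiated at the two pairs in \(\mathcal Q_{\rm sec}\). The side checks you list (totality of \(\mathsf{Ans}\), \(\cong\) being an equivalence, renaming invariance via \cref{lem:app-omission}) are, as you yourself note, not needed for this step.
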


\begin{proof}
The claim is the contrapositive of the definition of \(\mathsf{Exact}(f)\).
\end{proof}

\begin{table}[t]
\centering
\scriptsize
\setlength{\tabcolsep}{2pt}
\begin{tabular}{@{}c >{\raggedright\arraybackslash}p{0.47\columnwidth}
                  >{\raggedright\arraybackslash}p{0.34\columnwidth}@{}}
\toprule
Omitted & Record pair & Correct answers\\
\midrule
\(\mathbf P\) &
\(q_{\mathbf P}^i=\mathsf{Decide}(\mathsf{MergeSelect}(g_i,L,\sigma))\);
\(T^0=b_x{:}X\oplus_{g_0}^{\rm open}b_y{:}Y\),
\(T^1=b_x{:}X\parallel_{g_1} b_y{:}Y\). &
\(V^0\): safe;
\(V^1\): \(\mathsf{Invalid}\) because the mode is wrong.\\
\(\mathbf I\) &
\(q_{\mathbf I}=\mathsf{Decide}(\mathsf{RestoreLive}(b_L,k,\sigma_{\mathbf I}))\) after safe empty-root/ForkChoice/Checkpoint;
\(\Delta=\epsilon,\mathcal A=\Partial\{a\}\);
\(q^0_{\rm act}(x)=q^0_{\rm act}(y)=d_0\);
\(q^1_{\rm act}(x)=d_0\ne d_1=q^1_{\rm act}(y)\);
\(\ell(d_0)=\ell(d_1)=a\). &
\((X\otimes Y_k)\oplus Y\): one \Fresh, accept;
two \Fresh reject \(aa\).\\
\(\mathbf E\) &
\(u_{\mathbf E}=\mathsf{RestoreReplace}(b_L,k,\sigma_{\mathbf E})\),
\(q_{\mathbf E}=\mathsf{TryUpdate}(u_{\mathbf E},Y_0)\);
\(Y_0\) is checked at \(V^0\) after safe empty-root/ForkChoice/Checkpoint/\Fresh;
\(V^0\xrightarrow{\Dispatch(d)}V^1\), changing only whether a queued request was sent. &
\(V^0\): rules take effect;
\(V^1\): \(\mathsf{NoCommit}\) because the checked record changed.\\
\(\mathsf{Drop}_{ra}\) &
\(q_{\neg ar}=\mathsf{Decide}(\mathsf{RestoreReplace}(b_L,k,\sigma_{\neg ar}))\) after safe empty-root/ForkChoice/Checkpoint/\Fresh;
\(q_{\rm act}(x)=d\), \(\mathcal A=\Partial\{a\}\);
same calls \(\{c,x\}\), records \(\{p\}\), and actions \(\{d,e\}\);
\((q_{\rm act}(c),p)=(d,d)\) in \(V^0\), \((e,e)\) in \(V^1\). &
\(\Alias(x_k,d)\): accept;
\(\Fresh(x_k,d)\): reject \(aa\).\\
\(\mathbf C\) &
\(q_{\mathbf C}=\mathsf{TryUpdate}(u_0,Y_0)\), with \(Y_0\) computed at \(V^0\), is identical in both;
the current rule versions and dependent digests differ. &
\(V^0\): rules take effect;
\(V^1\): \(\mathsf{NoCommit}\).\\
\bottomrule
\end{tabular}
\caption{Record pairs that become indistinguishable after one fact is omitted but require opposite answers.}
\label{tab:record-pairs}
\end{table}

\subsection{A Common Initial State and Five Record Pairs}
\label{app:boot-pairs}

\begin{definition}[Checked registration]
\label{def:registration-refinement}
A typed Agent workflow \(\mathcal W\) is eligible for registration when it is a finite-state LTS whose complete paths carry result labels, end at terminal states, and yield finitely many tool-call traces after internal \(\tau\)-steps are erased.
\(\mathsf{CheckReg}\) rejects any reachable strongly connected component that can still reach a result and contains a tool-call edge within the component, permits erased \(\tau\)-cycles, and then enumerates the resulting finite traces.
\(\mathcal R\) stores a candidate map \(\lambda\), and \(\mathsf{BRuns}(\mathcal R)\) is the finite language obtained by linearizing its registered root pomsets, retaining each result label, and labeling every workflow call by its checked \Use.
\(\mathcal W\sqsubseteq_\lambda\mathcal R\) means that \(\lambda\) gives a bijection from these traces to \(\mathsf{BRuns}(\mathcal R)\) that preserves results, call order, which calls refer to which tool actions, and the signed policy authority for each source region.
For source executions \(e,e'\), write \(e\equiv_\partial e'\) when erasing internal \(\tau\)-steps yields the same tool-call trace with the same result label.
The check requires every tool action to have one certified canonical request and requires every reuse of that action to retain the same request.
It also requires \(\lambda\) to be total and one-to-one, to contain no unregistered tool calls, and to preserve prior authorizations and every required result.
\end{definition}

Every pair below starts from a checked initial state and differs in one part of the execution record.
After that part is omitted, the checker receives indistinguishable inputs even though the correct answers are opposite.

We construct the finite pairs from one empty initial state \(\mathsf S_\emptyset\).
A registered call model \(\mathcal R\) contains finite contracts, edit rules, policy, the checked registration map, and signed records for results, workflow calls, tool actions, canonical requests, source regions, scopes, and labels.
Before the root rules take effect, it contains no prior authorization, execution progress, queued request, rule version, returned value, or automaton state.
The registration relation has a step \(\mathsf S_\emptyset\xrightarrow{\mathsf{register}(\mathcal W,\mathcal R,n)}\mathsf{Reg}(\mathcal R,n)\) exactly when \(\mathsf{CheckReg}(\mathcal W,\mathcal R)\) accepts and \(n\) is a fresh private namespace.
A checked root update then initializes empty progress and request state, activates one rule version and the canonical automaton, and records signed data for every current rule entry and root result authority.

\begin{lemma}[Registration reflection]
\label{lem:registration-reflection}
For finite-state \(\mathcal W\) and finite \(\mathcal R\), \(\mathsf{CheckReg}(\mathcal W,\mathcal R)\) accepts iff the registered \(\lambda\) witnesses \(\mathcal W\sqsubseteq_\lambda\mathcal R\) in \cref{def:registration-refinement}.
On acceptance, applying \(\lambda\) to source executions modulo \(\equiv_\partial\) gives a bijection with \(\mathsf{BRuns}(\mathcal R)\).
The correspondence preserves result identity, causal order, canonical requests, prior authorizations, and the safety-check answer.
\end{lemma}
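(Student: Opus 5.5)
The plan is to prove the two directions of the equivalence separately and then derive the preservation claims from the bijection. The first step is to make the finite object concrete. Because \(\mathcal W\) is finite-state and \(\mathsf{CheckReg}\) rejects every reachable strongly connected component that can still reach a result and contains a tool-call edge, every complete path crosses only finitely many tool-call edges. Only \(\tau\)-cycles may repeat. Therefore the quotient of complete source executions by \(\equiv_\partial\) is a finite set of pairs consisting of a result label and a tool-call trace. I will show this by erasing \(\tau\)-steps and collapsing each SCC to a node of a DAG, and then bounding trace length by the number of tool-call edges on a DAG path. \(\mathsf{BRuns}(\mathcal R)\) is finite as well, since it is a union of linearizations of finitely many finite pomsets. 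Both sides are therefore explicitly enumerable, and \(\mathsf{CheckReg}\) compares two finite sets.

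For the forward direction (acceptance gives \(\mathcal W\sqsubseteq_\lambda\mathcal R\)), I will go through each check performed by \(\mathsf{CheckReg}\) and match it to a clause of \cref{def:registration-refinement}:
\begin{enumerate}
\item totality and injectivity of \(\lambda\) on the enumerated classes, together with surjectivity onto \(\mathsf{BRuns}(\mathcal R)\);
\item equality of result labels;
\item position-by-position equality of call order and of the call-to-action map;
\item agreement of canonical requests and signed authorities.
\end{enumerate}
For the converse, a witnessing bijection makes every one of these finite equality tests pass. The SCC test is the only check not expressed in terms of \(\lambda\). It holds because the left side of \(\sqsubseteq_\lambda\) is only defined when the trace set is finite, and a tool-call edge inside a result-reaching SCC would pump infinitely many distinct traces. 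The bijection modulo \(\equiv_\partial\) then follows directly, because \(\lambda\) is defined on classes and two executions in the same class have identical trace and label by definition.

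The preservation claims come next. Result identity, causal order, and canonical requests are among the properties that \(\sqsubseteq_\lambda\) preserves by definition. Prior authorizations are preserved because \(\Resolve\) (\cref{lem:resolution}) depends only on the ordered action sequence and \(D_\Delta\), and \(\lambda\) preserves both; hence the \(\Fresh\)/\(\Alias\) annotations and \(\mathsf{auth}\) coincide. For the safety answer, I will argue that \(\widehat B_0\), \(\mathsf{Compat}\), and hence the operator \(\widehat\Phi\) are defined purely from the labeled annotated traces. The bijection therefore transports \(\widehat\Phi\) and commutes with each iterate, so the greatest fixed points correspond and by \cref{lem:largest-safe-family} the verdicts agree.

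I expect the main obstacle to be the SCC criterion. It is not yet clear to me whether it exactly characterizes finiteness of the traces that reach a result. A tool-call cycle inside an SCC from which no result can be reached contributes nothing to complete traces, which is why the condition is restricted to result-reaching components, and I would check that restriction carefully. The approach I would try first is a pumping argument: if a result-reaching SCC contains a tool edge, pumping that cycle yields unboundedly long complete traces, contradicting finiteness. Conversely, in the DAG of SCCs every complete path uses only \(\tau\)-edges inside the components it visits, which bounds trace length.
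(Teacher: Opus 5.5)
Your proposal is correct and follows essentially the same route as the paper's proof: you pump a tool-call edge inside a reachable result-reaching SCC, contract the remaining SCCs to a DAG with a finite projected language, match \(\mathsf{CheckReg}\)'s finite comparison against \(\mathsf{BRuns}(\mathcal R)\) clause by clause with \(\mathcal W\sqsubseteq_\lambda\mathcal R\), and carry the deterministic \Resolve and the operator \(\widehat\Phi\) through the bijection so that prior authorizations and the safety answer are preserved. The one point you should make explicit is that this transfer needs \(\lambda\) to preserve the prefix order of traces, so that \(\mathsf{Compat}\) carries over; this is what the paper means when it calls the two trace sets order-isomorphic.
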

\begin{proof}
The SCC check terminates because the state graph is finite.
A reachable SCC that can still reach a result and contains a tool-call edge can be repeated before that result, producing projected words of unbounded length.
Conversely, if no such SCC contains a tool-call edge, contracting the SCCs yields a DAG with a finite, enumerable projected language.
The checker compares that language with \(\mathsf{BRuns}(\mathcal R)\).
It checks results, call order, which calls refer to which actions, signatures on those actions, that registration is total and one-to-one, that no unregistered calls appear, and prior authorizations after every partial execution.
It therefore accepts exactly when \(\mathcal W\sqsubseteq_\lambda\mathcal R\) holds.
Erasing internal steps gives a bijection between matching tool-call traces.
Canonical-request equality and deterministic \(\Resolve\) preserve prior authorizations after every partial execution.
The safe execution sets are therefore order-isomorphic, so the fixed-point operator preserves the safety decision, greatestness, and corresponding automaton.
\end{proof}

\begin{lemma}[Checked initial state]
\label{lem:app-boot}
If \(\mathsf{CheckReg}(\mathcal W,\mathcal R)\) accepts and the registered root contract and policy have a nonempty greatest safe execution set, the checked root update reaches a finite state \(S_{\mathcal R}\) with \(\mathsf{AgentSec}(S_{\mathcal R};c_{\mathcal R},\epsilon)\).
The execution record stores the checked root derivation, and every later rule update stores the checked Fork, Restore, or Merge edit or registered extension that produced it.
\end{lemma}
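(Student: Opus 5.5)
The proof follows the root update transition and establishes the five conjuncts of \(\mathsf{AgentSec}(S_{\mathcal R};c_{\mathcal R},\epsilon)\) one at a time.

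\emph{Derivation and acceptance.} The root update is a \(\mathsf{Derive}\) instance with empty log, empty progress, and empty queue. Acceptance of \(\mathsf{CheckReg}(\mathcal W,\mathcal R)\) makes root derivation defined, because \(\mathsf{Derive}\) requires exactly this on the root. \Cref{lem:registration-reflection} then transfers the hypothesis that the registered root contract has a nonempty greatest safe execution set to \(\widehat B^\dagger\) of the root instance. Since \(D_\Delta=\varnothing\), \(\Resolve\) annotates the first occurrence of each action as \Fresh. \Cref{lem:largest-safe-family} then gives that \((W^\dagger,\widehat B^\dagger)\) is the largest safe execution set. This yields conjunct (2) and the required-result half of (1). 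For the root, \(\mathsf{StillRequired}\) is just the root result set, with nothing satisfied and nothing removed, so \(\mathsf{PreservesRequired}\) follows from the fixed point at \(z=\epsilon\).

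\emph{Finite state and remaining conjuncts.} Finiteness is immediate: \(\mathcal R\) is finite, so \(\widehat B_0\) is finite and \(\mathsf{Can}(W^\dagger,B^\dagger)\) has finitely many states. For conjunct (3), the empty log, queue, and checkpoint map agree with the empty processed sequence \(r=\epsilon\); in particular \(\mathsf{lab}(\Delta)=\epsilon\in\mathcal A\) because \(\mathcal A\) is closed under truncation and nonempty. For conjunct (4), the automaton is initialized at \(q_\epsilon\), and \cref{lem:runtime-language} identifies its languages with \((W^\dagger,B^\dagger)\). For conjunct (5), the update allocates exactly one fresh \(\eta^+\) with no prior version to close. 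It creates one \((j_x^+,h_x^+)\) per \(x\in X_{\sem T}\), giving \(\operatorname{dom}(\beta)=X_{\sem T}\).

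\emph{Well-formedness.} The state must also be well formed under \cref{def:well-formed-history}. Structure holds because \(G\) is the single root node with a signed authority for each root result. Identity holds by the registration signatures and the canonical-request check. The current-workflow clause holds because \(\chi_b=\epsilon\) gives \(\mathcal C_b=\Gamma_b\). The domain clause holds because the namespace \(n\) is fresh.

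\emph{Storage claim.} Since the root update and each later rule update write the derivation \(\delta\) and the edit record \(e_u\) into the immutable \(G\) (\cref{eq:derived-cut}, \cref{app:registered-extension}), this part is an induction on rule updates using append-only \(G\).

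The main obstacle I expect is the transfer step. The hypothesis is stated for the registered root contract, but \(\mathsf{AgentSec}\) concerns the checker's fixed point on the derived root record. I will argue via the order-isomorphism in \cref{lem:registration-reflection} that \(\widehat\Phi\) commutes with \(\lambda\), so the two greatest fixed points correspond and nonemptiness carries over.
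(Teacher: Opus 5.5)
Your overall structure (a clause-by-clause check of \(\mathsf{AgentSec}(S_{\mathcal R};c_{\mathcal R},\epsilon)\) at the bootstrap state, plus the storage claim) matches the paper's proof. The individual checks are sound:
\begin{itemize}
\item empty progress, log, and queue match \(r=\epsilon\);
\item the automaton starts at \(q_\epsilon\);
\item each current call gets one entry under the single active version;
\item \(\mathsf{PreservesRequired}\) follows from nonemptiness together with the fixed-point condition at \(z=\epsilon\), because \(\mathsf{Compat}(\epsilon)\) contains every root result.
\end{itemize}

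The flaw is in the step you call the main obstacle. The map \(\lambda\) of \cref{lem:registration-reflection} relates the \(\tau\)-erased traces of the source workflow \(\mathcal W\) to \(\mathsf{BRuns}(\mathcal R)\). Your hypothesis (a nonempty greatest safe execution set for the \emph{registered} root contract and policy) and the root update's checker instance both live on the \(\mathcal R\) side. Commuting \(\widehat\Phi\) with \(\lambda\) therefore does not connect them.

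Nothing actually needs connecting. The root derivation takes the registered root contract with \(\Delta=\epsilon\), so its \(\widehat B^\dagger\) is the hypothesized set, up to the deterministic allocation of runtime identifiers from \((\mathcal R,n)\). This is exactly what the paper uses (``deterministic allocation constructs every runtime identifier'' and ``the recomputed nonempty fixed point''). If you want the renaming made explicit, the right fact is that checker answers are unchanged by consistent renaming of internal identifiers (\cref{app:returned-data}, and the equivariance argument in \cref{lem:app-event-derivative}), not \cref{lem:registration-reflection}.

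Relatedly, \(\mathsf{CheckReg}\) acceptance is a necessary premise of root derivation, not a sufficient one, so ``requires exactly this'' overstates it. Definedness of the root judgment is instead presupposed by your second hypothesis.

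Two smaller points:
\begin{itemize}
\item Clause (2) also requires that \(Z\) verify against \(\mathsf{record}_c\). This holds because the root update commits on the record it checked, but you should state it.
\item For the storage claim, the checked answer is stored in the per-version \(\mathsf{checked}\) map at commit; clause (1) requires \(c=\mathsf{checked}(H.\eta)\). Meanwhile \(G\) receives \(e_u\) only for the six edit forms, \(D_\alpha\) for an extension, and the root origin at bootstrap. So ``writes \(\delta\) into \(G\)'' is inaccurate. No induction is needed either: each committing update stores its own checked answer by the definition of the transaction, which is the paper's one-line argument.
\end{itemize}
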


\begin{proof}
Registration contributes only the checked \(\mathcal R\) and \(n\), and deterministic allocation constructs every runtime identifier from them.
The root judgment and checked maps make the registered structure agree with the runtime state, and the recomputed nonempty fixed point establishes the condition on required results.
Empty runtime progress matches the executed calls, the canonical automaton starts at \(q_\epsilon\), and the atomic assignments give every current call an entry in the active rule version.
Only the initial rule update creates a root origin, and every successful later update stores the checked derivation that produced it.
\end{proof}

The following finite pairs use private names except for each displayed request and authorization label.
Each state is reached from \(\mathsf S_\emptyset\) by the stated registration, initial rule update, Checkpoint, tool-call, and edit steps.

\paragraph{\(\mathbf P\): workflow structure}
Let \(X\) and \(Y\) be singleton contracts with distinct workflow calls \(x\) and \(y\), a shared tool action, and a universal policy.
From the same registered singleton root, a checked initial rule update followed by \(\mathsf{ForkChoice}\) reaches \(V_{\mathbf P}^0\) with private group \(g_0\), while the corresponding \(\mathsf{ForkParallel}\) trace reaches \(V_{\mathbf P}^1\) with private group \(g_1\).
The two records differ only in
\[
 V_{\mathbf P}^0.T=b_L{:}X\oplus_{g_0}^{\mathsf{open}}b_R{:}Y,
 \qquad
 V_{\mathbf P}^1.T=b_L{:}X\parallel_{g_1} b_R{:}Y .
\]
Use the same registered \(\mathsf{MergeSelect}\) schema with \(q_{\mathbf P}^i=\mathsf{Decide}(\mathsf{MergeSelect}(g_i,L,\sigma))\).
The choice record has a unique edit derivation and a nonempty one-step safe execution set.
The parallel record has no \(\mathsf{MergeSelect}\) derivation because its group mode is wrong.
Omitting \(\mathbf P\) removes the constructor tag and all computed signed data that depends on it.
The private renaming \(g_0\mapsto g_1\) then makes the remaining record-request pairs indistinguishable.

\paragraph{Common start for the correspondence pairs}
Let \(\mathbf 1\) be the singleton-result contract whose unique pomset is empty.
Its completion \(\epsilon\) gives a nonempty greatest safe execution set under \(\mathcal A=\Partial\{a\}\).
Both pairs below put these safe root rules into effect and then apply a checked \(\mathsf{ForkChoice}\).
Any inactive edit-rule entry needed to equalize the individual tool actions is identical in both records and never becomes active.

\paragraph{\(\mathbf I\): which calls refer to the same action}
Let \(X,Y\) be singleton contracts with workflow calls \(x,y\), and let \(d_0,d_1\) be tool actions without authorization records, both labeled \(a\).
Two checked call models contain the same workflow and the same individual calls and actions, but the calls refer to the actions differently:
\[
\begin{array}{c|cc}
 &q_{\rm act}(x)&q_{\rm act}(y)\\ \hline
 V_{\mathbf I}^0&d_0&d_0\\
 V_{\mathbf I}^1&d_0&d_1 .
\end{array}
\]
From the \(\mathbf 1\) root now in force, the common Fork rule derives \(b_L{:}X\oplus_g^{\mathsf{open}}b_R{:}Y\).
Each result uses one tool action, so both Fork updates are safe under \(\Partial\{a\}\).
Create a checkpoint for \(b_R\) before progress and issue the common query \(q_{\mathbf I}=\mathsf{Decide}(\mathsf{RestoreLive}(b_L,k,\sigma_{\mathbf I}))\).
Up to clone renaming, the target is \((X\otimes Y_k)\oplus Y\).
In \(V_{\mathbf I}^0\), the left result uses one \Fresh and one \Alias on \(d_0\), while the bypass result uses one \Fresh, so every result has authorization sequence \(a\).
In \(V_{\mathbf I}^1\), every left-result linearization uses \Fresh on both \(d_0,d_1\), so its word is \(aa\notin\mathcal A\).
The left result is compatible with \(\epsilon\), so the first \(\widehat\Phi\) round also removes the otherwise safe bypass completion and leaves the greatest fixed point empty.
Applying \(\mathsf{Drop}_{ca}\) makes the remaining records indistinguishable because their individual calls and actions were already identical.

\paragraph{\(\mathbf E\): authorization and execution progress}
Let \(C,X\) be singleton contracts with workflow calls \(c,x\), let \(q_{\rm act}(c)=q_{\rm act}(x)=d\), and label \(d\) by \(a\).
From the safe \(\mathbf 1\) root, a checked Fork update puts \(b_L{:}C\oplus_g^{\mathsf{open}}b_R{:}X\) into effect.
Create a checkpoint for \(b_R\), then use \(c\) to complete the left arm, create authorization record \(p\), and queue its canonical request.
At the resulting \(V_{\mathbf E}^0\), check \(u_{\mathbf E}=\mathsf{RestoreReplace}(b_L,k,\sigma_{\mathbf E})\) to obtain \(Y_0\).
Its clone of \(x\) reuses \(d\), so the update is safe under \(\mathcal A=\Partial\{a\}\).
Let \(V_{\mathbf E}^0\xrightarrow{\Dispatch(d)}V_{\mathbf E}^1\), which changes only whether the queued request has been sent and leaves \(\mathbf P,\mathbf I,\mathbf C\) unchanged.
For the same request \(q_{\mathbf E}=\mathsf{TryUpdate}(u_{\mathbf E},Y_0)\), the rules take effect at \(V_{\mathbf E}^0\), while \(V_{\mathbf E}^1\) returns \(\mathsf{NoCommit}\) because the digest of the checked record has changed.

\paragraph{\(\mathsf{Drop}_{ra}\): which action a prior authorization names}
For a separate pair, keep the common active row \(q_{\rm act}(x)=d\), creator \(c\), authorization record \(p\), and tool actions \(d,e\) labeled \(a\), but set \(q_{\rm act}(c)=d\) and \(p\mapsto d\) in \(V_{\neg ar}^0\), versus \(q_{\rm act}(c)=e\) and \(p\mapsto e\) in \(V_{\neg ar}^1\).
Both states follow the same safe empty-root/ForkChoice/Checkpoint/\Fresh shape above, have authorization sequence \(a\), and use the common query \(q_{\neg ar}=\mathsf{Decide}(\mathsf{RestoreReplace}(b_L,k,\sigma_{\neg ar}))\).
The cloned \(x_k\) is \(\Alias(x_k,d)\) in \(V_{\neg ar}^0\), but \(\Fresh(x_k,d)\) produces forbidden word \(aa\) in \(V_{\neg ar}^1\).
The two records contain the same creator, authorization record, workflow call, and tool actions, and both retain \(x\mapsto d\).
\(\mathsf{Drop}_{ra}\) removes the only differing correspondence and every value that could reconstruct it, so the remaining records are indistinguishable.
This pair proves that an exact checker must know which tool action each prior authorization names.

\paragraph{\(\mathbf C\): the active rule version}
Let an initial namespace be a family \(n=(n_s)_{s\in\mathsf{Sort}}\) of injective, per-sort fresh supplies.
From the same empty initial state, register the same public call model \(\mathcal R\) with supplies \(n^0,n^1\) that agree on every type except the rule-version type.
The checked root updates create distinct active versions \(\eta_0\ne\eta_1\).
The resulting \(V_{\mathbf C}^0,V_{\mathbf C}^1\) have identical \(\mathbf P,\mathbf I,\mathbf E\), while their active rule versions and dependent digests differ.
Check the same registered edit \(u_0\) at \(V_{\mathbf C}^0\) to obtain \(Y_0=\mathsf{Accepted}(\delta_0,W_0,\widehat B_0,Z_0)\), then submit \(q_{\mathbf C}=\mathsf{TryUpdate}(u_0,Y_0)\) to both records.
At \(V_{\mathbf C}^0\), the final check reproduces \(\mathsf{cut}_{Z_0}\), so the rules take effect.
At \(V_{\mathbf C}^1\), the active version is \(\eta_1\) rather than the \(\eta_0\) recorded in \(Z_0\), so the update returns \(\mathsf{NoCommit}\).
Omitting \(\mathbf C\) also removes dependent rule entries, authorization data, and digests, leaving the record-request pairs indistinguishable.

\begin{proposition}[Four answer-relevant record parts]
\label{prop:app-explicit-pairs}
Every pair above consists of finite runtime states reached from the common empty initial state and satisfying \(\mathsf{AgentSec}\).
After the corresponding information is omitted, each pair becomes indistinguishable even though the correct answers are opposite.
Therefore every \(\mathsf{Keep}_J\) with \(J\subsetneq\{\mathbf P,\mathbf I,\mathbf E,\mathbf C\}\), as well as \(\mathsf{Drop}_{ca}\) and \(\mathsf{Drop}_{ra}\), is inexact.
Every exact checker input must distinguish all five pairs.
\end{proposition}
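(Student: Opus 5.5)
The proof goes pair by pair, in four obligations per pair, and then assembles the inexactness claims through \cref{lem:app-observational-separation}.

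\emph{(i) Reachability and invariant.} Every state is reached from \(\mathsf S_\emptyset\) by a checked registration followed by the stated steps. \Cref{lem:app-boot} gives \(\mathsf{AgentSec}\) after the root update. Each later Fork, Checkpoint, \Fresh call, or \(\Dispatch\) preserves it by \cref{thm:repeated-use,thm:exact-checkpoint}. For each Fork we also exhibit the nonempty one-step fixed point, so the update is actually enabled under \(\Partial\{a\}\).

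\emph{(ii) Opposite answers.} This is a direct finite computation of \(\widehat B_0\) and one or two rounds of \(\widehat\Phi\), using \eqref{eq:resolve} for the \Fresh/\Alias classification.
\begin{itemize}
\item \(\mathbf P\): the choice record derives \(\mathsf{MergeSelect}\) with a nonempty set. The parallel record fails the table's required shape, so \(\mathsf{Derive}\) is undefined and the answer is \(\mathsf{Invalid}\) by \cref{thm:exact-checking}.
\item \(\mathbf I\): the shared action makes the clone an \Alias, so every result's word is \(a\). With distinct actions the left result's word is \(aa\notin\mathcal A\), so it has no member of \(\widehat B_0\). Because it lies in \(\mathsf{Compat}(\epsilon)\), the first round of \(\widehat\Phi\) also removes the bypass completion, and the fixed point is empty.
\item \(\mathsf{Drop}_{ra}\): the argument is identical, with \(D_\Delta\) either containing \(d\) or not.
\item \(\mathbf E\) and \(\mathbf C\): \(\mathsf{Ready}\) compares \(\mathsf{cut}_{Z}\) from \eqref{eq:checked-record-digest}. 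Injectivity of the digest means a changed \(\mathsf{out}\) or a changed \(\eta\) yields \(\mathsf{NoCommit}\), while the unchanged record commits.
\end{itemize}

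\emph{(iii) Indistinguishability after omission.} For each pair, list the fields that differ and check that all of them lie inside the omitted part or depend on it:
\begin{itemize}
\item the constructor tag and group ID for \(\mathbf P\);
\item the \(q_{\rm act}\) rows for \(\mathbf I\);
\item the sent flag of \(\mathsf{out}\) for \(\mathbf E\);
\item \(p\mapsto d\) versus \(p\mapsto e\) for \(\mathsf{Drop}_{ra}\);
\item \(\eta\) and its digests for \(\mathbf C\).
\end{itemize}
Recursive deletion in \(\mathsf{Keep}_J\) and the \(\mathsf{Drop}\) operators removes every dependent digest and computed value. \Cref{lem:app-omission} then lets the residual private-name difference be absorbed by a type-preserving renaming, for example \(g_0\mapsto g_1\), or the per-sort namespace for \(\eta\).

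\emph{(iv) Assembly.} \Cref{lem:app-observational-separation} makes each \(\mathsf{Keep}_J\) with \(J\) missing the corresponding part inexact, and likewise \(\mathsf{Drop}_{ca}\) and \(\mathsf{Drop}_{ra}\). Any \(J\subsetneq\{\mathbf P,\mathbf I,\mathbf E,\mathbf C\}\) omits some part, and \(\mathsf{Keep}_J\) is a refinement of the reduction that omits only that part. So its reduced inputs are also identified up to renaming, and inexactness transfers.

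The main obstacle is step (iii) for \(\mathbf I\), \(\mathbf E\), and \(\mathsf{Drop}_{ra}\). There we must show that no retained field reconstructs the deleted correspondence. Candidate leaks are \(\mathsf{lab}(\Delta)\), \(\beta\), the rule tokens, and, for \(\mathbf E\), the request digests inside \(\mathbf I\). I would handle this by choosing both records to have equal authorization sequences \(a\), equal canonical requests, and identical inactive rule entries, as the pairs stipulate. I would then verify field by field that every remaining value is either literally equal or a private identifier covered by the renaming.
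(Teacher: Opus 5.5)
Your proposal is correct and follows the paper's own argument: reachability from \cref{lem:app-boot} through invariant-preserving modeled steps, the displayed fixed-point and digest calculations for the opposite answers, \cref{lem:app-omission} plus private renamings for indistinguishability, and \cref{lem:app-observational-separation} for inexactness. Your explicit reduction of an arbitrary \(J\subsetneq\{\mathbf P,\mathbf I,\mathbf E,\mathbf C\}\) to a single-part omission (using \(\mathsf{Keep}_J=\mathsf{Keep}_J\circ\mathsf{Keep}_{J'}\) for \(J\subseteq J'\) and commutation with renaming) spells out a step the paper leaves implicit; two small details remain to add: in the \(\mathsf{Drop}_{ra}\) pair the creator row \(q_{\rm act}(c)\) also differs (\(d\) versus \(e\)) and must be deleted as data reconstructing the record--action link, and reachability of the \(\mathbf E\) and \(\mathsf{Drop}_{ra}\) states needs the active automaton to enable the left-arm \Fresh call while the completed \(b_L\) stays addressable.
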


\begin{proof}
\Cref{lem:app-boot} gives the common initial state, and the traces above use only modeled Checkpoint, tool-call, Fork/Restore/Merge, and rule-update steps.
For the two correspondence pairs, the empty root completes with \(\epsilon\), and each Fork arm has a one-\Fresh completion labeled \(a\).
Hence every displayed state before its request is reachable under \(\Partial\{a\}\).
Open choice enables the Checkpoint steps, and the active automata enable each stated left-arm \Fresh before that arm becomes a completed, still-addressable branch.
For the \(\mathbf E\) pair, \(\Dispatch\) preserves \(\mathbf P,\mathbf I,\mathbf C\) but changes the checked record, giving a successful update versus \(\mathsf{NoCommit}\).
The displayed edit-rule and resolution calculations give the other opposite answers.
\Cref{lem:app-omission} and the stated private renamings make the reduced inputs indistinguishable.
Each omission collapses its corresponding pair, so \cref{lem:app-observational-separation} proves inexactness.
\end{proof}

Giving the \(\mathbf P\) pair the same caller description of what should happen next and the same completion conditions leaves their opposite answers unchanged.
The target alone is therefore insufficient.
A generalized nonblocking solver becomes applicable only after the trusted runtime has derived the edited workflow, preserved result identities, and the completion condition for each result.

\subsection{Ideal Atomic Execution Machine}
\label{sec:ideal}

An ideal state \(I=(\kappa,H,\Delta,\mathsf{out},\mathcal A,c,r)\), with \(c=(H_c,u,L^*,\widehat B^*)\), contains the execution record, current rule version and entries, checked safe execution set, and current partial execution.
Write \(\Theta_I(u)=(\kappa,H,\Delta,\mathsf{out},\mathcal A,u)\).
The runtime state additionally represents the automaton, transaction phase, authorization data, and a prepared rule update.

\paragraph{Edit transition}
For current \(\Theta_I(u)\), an undefined derivation gives state-preserving \(\mathsf{Invalid}\), while a defined derivation with no \(\mathsf{MaxSafe}\) witness gives state-preserving \(\mathsf{Reject}\).
Otherwise \(\mathsf{Derive}(\Theta_I(u))\) supplies the target and rule version, and independently defined \(\mathsf{SafeBehavior}(\Theta_I(u))=(L^*,\widehat B^*)\) enables one atomic \(\mathsf{Edge}(u,[\mathcal H_u^+]_{\cong})\) that makes them current and resets \(r\) to \(\epsilon\).
All three answers are computed from the same current execution record and are mutually exclusive and exhaustive.

\paragraph{Tool call}
For submitted tool request \(m\), an enabled \(x\) passes the ideal automaton exactly when its current rule entry and authorization data identify tool action \(d\), \(\mathsf{canon}_\kappa(m)=\mathsf{inv}_\rho(d)\), and \(ra\in L^*\).
Here \(a=\Fresh(x,d)\) if \(d\) has no authorization record, and \(a=\Alias(x,d)\) otherwise.
One atomic transition applies \(\mathsf{HStep}\), appends \(a\) to \(\chi\), advances \(\mathsf{ver}\), removes \(x\)'s current rule entry, and sets \(r:=ra\).
\Fresh also creates the labeled authorization record and queues the canonical request, while \Alias links to the existing return record.
Any failed premise returns \(\mathsf{deny}(x)\) without changing state or inventing a return value.

\paragraph{Checkpoint, retry, sending, and restart}
\(\mathsf{Checkpoint}(k,b)\), verified retry, \(\Dispatch\), completion updates, and restart after a stop have the state changes in \cref{tab:app-event-matrix}.
The ideal and runtime machines apply the same freshness, current-branch, and remaining-work checks, so they make the same Checkpoint decision.
This transition system represents external-network completion after \(\Dispatch\) as environmental behavior.

By \cref{lem:largest-safe-family}, every reachable partial execution of the ideal machine retains a policy-safe completion until the next edit transition changes the workflow contract.
The runtime implements this machine without enumerating the full completion set at every call.

\subsection{Runtime State and Atomic Transitions}
\label{app:runtime-transitions}

For \(\Theta=(\kappa,H,\Delta,\mathsf{out},\mathcal A,u)\), let \(C_{\rm der}(\Theta)\) be the canonical record of the failed derivation.
\[
\begin{aligned}
\mathsf{Check}(\Theta)
 &=\mathsf{Invalid}([\gamma(\Theta),C_{\rm der}(\Theta)]_{\cong}),\\[-2pt]
 &\qquad\text{if }\mathsf{Derive}(\Theta)\text{ is undefined};\\
\mathsf{Check}(\Theta)
 &=\mathsf{Reject}([\gamma(\Theta),C_{\rm fp}]_{\cong}),\\[-2pt]
 &\qquad\text{if }\mathsf{Derive}(\Theta)\text{ is defined and }\
   \widehat B^\dagger_{H,u}=\varnothing.
\end{aligned}
\]
When \(\mathsf{Derive}(\Theta)\) is defined and
\(\widehat B^\dagger_{H,u}\ne\varnothing\),
\[
\mathsf{Check}(\Theta)=
\mathsf{Accepted}(\delta,W^\dagger_{H,u},
 \widehat B^\dagger_{H,u},Z).
\]
where
\[
 Z=(\omega,\zeta,\kappa_Z,\mathsf{cut}_Z,u,
    H_{\mathsf{aut}},\eta^-,\eta^+).
\]
\(\mathsf{Invalid}\) carries the failed derivation, \(\mathsf{Reject}\) carries the complete ordered rejection proof, and \(\mathsf{Accepted}\) carries the unique derivation, coverage proof, canonical automaton, and \(Z\).
\(Z\) binds the checked execution record, derived target, automaton, and both old and new rule versions.

The runtime state for one policy domain is
\[
S=(\kappa,H,\Delta,\mathsf{out},\mathcal A,
    \mathsf{phase},\mathsf{entry},\mathsf{aut},\mathsf q,\mathsf{checked}),
\]
where \(\mathsf{phase}\) records whether each rule version \(\eta\) is unallocated (\(\bot\)), inactive, active, or closed.
The initial rule runs only after \(\mathsf{CheckReg}(\mathcal W,\mathcal R)\) accepts, and its exact construction appears in \cref{app:boot-pairs}.
Each active version stores \(c=(\mathsf{record}_c,H_c^+,u,W,\widehat B,Z)\), which contains the checked execution record, derived target, edit, safe partial executions, complete executions labeled by result, and data needed for the final recheck.

\begin{definition}[Agent runtime invariant]
\label{def:agent-sec}
For \(c=(\mathsf{record}_c,H_c^+,u,W,\widehat B,Z)\), \(\mathsf{record}_c=(\kappa_c,H_c,\Delta_c,\mathsf{out}_c,\mathcal A_c)\), and resolved \(r\), \(\mathsf{AgentSec}(S;c,r)\) holds exactly when the following clauses hold.
\begin{enumerate}[leftmargin=*,itemsep=0pt,topsep=1pt]
\item \emph{Checked record and edit rule.}
The state is well formed, \(c=\mathsf{checked}(H.\eta)\), its origin derives the target now in force, and every required result has one signed authority record and a checked \(\mathsf{Covers}_u\) proof.
\item \emph{Required results.}
\(\widehat B=\widehat B^\dagger_{H_c,u}\ne\varnothing\), \(W=\Partial(\pi_2\widehat B)\), and \(Z\) verifies both against \(\mathsf{record}_c\).
\item \emph{Executed calls.}
\((H.G,H.T)=\mathsf{HRes}((H_c^+.G,H_c^+.T),r)\), \(H.\chi=H_c^+.\chi r\), and authorization records, queue entries, return links, and checkpoints match \(r\) and advance only monotonically.
\item \emph{Current automaton.}
\(\mathsf{aut}(H.\eta)\cong\mathsf{Can}(W,\pi_2\widehat B)\), and \(\mathsf q(H.\eta)\) is its derivative after \(r\).
\item \emph{Current rule version.}
Only \(H.\eta\) is active, \(\mathsf{entry}|_{X_{\sem H.T}}=H.\beta\), and it provides exactly one signed rule entry for each current workflow call, while inactive or closed versions provide none.
\end{enumerate}
\end{definition}

\paragraph{Atomic tool call}
For submitted tool request \(m\), set \(m^\circ=\mathsf{canon}_\kappa(m)\).
For a call \(x\) with current rule entry \(j\) in version \(\eta\), \(\mathsf{VerifyToken}(h,\rho(x),j,d,m^\circ)\) verifies the issuing runtime, scope, rule entry, tool action, source region, and signed digest and requires \(m^\circ=\mathsf{inv}_\rho(d)\).
Set \(a=\Fresh(x,d)\) if \(d\notin D_\Delta\), otherwise \(a=\Alias(x,d)\), and let \(\Use_x(m)=\mathsf{Use}(x,j,h,m)\).
Its sole state-changing transition is
\begin{equation}
\label{eq:use-rule}
\frac{
\begin{gathered}
 x\text{ enabled in }T\quad
 \mathsf{entry}(x)=(j,\eta)\\[-2pt]
 \eta=H.\eta\quad
 \mathsf{phase}(\eta)=\mathsf{active}\\[-2pt]
 \mathsf{VerifyToken}(h,\rho(x),j,d,m^\circ)\quad
 \mathsf q(\eta)\xrightarrow{a}_{\mathsf{aut}(\eta)}q'
\end{gathered}}
{S\xrightarrow{\Use_x(m)/a}S'} .
\end{equation}
The paired update \(\mathsf{HStep}((G,T),a)\) removes \(x\), records any choice, preserves the other workflow constructors, and appends \(a\) to its branch progress.
\(\mathsf{Advance}(S,x,a,q')\) changes exactly
\[
\begin{aligned}
 (H.G,H.T)&:=\mathsf{HStep}((G,T),a),&
 H.\chi&:=\chi a,\\
 H.\mathsf{ver}&:=\mathsf{ver}+1,& H.\beta&:=\beta\setminus\{x\},\\
 \mathsf{entry}&:=\mathsf{entry}\setminus\{x\},&
 \mathsf q(\eta)&:=q'.
\end{aligned}
\]
In addition, \Fresh appends the immutable labeled authorization record and queues the canonical request \((d,\mathsf{inv}_\rho(d))\), never an unchecked caller request.
\Alias records the tool action's stored return link without changing \(\Delta\), and all other fields remain unchanged.
A verified \Retry follows the stored link from \(\chi\) to the authorization record for the same canonical request and returns any stable value without changing state.

\paragraph{Final check and atomic rule update}
At the atomic update point, the trusted runtime ignores any workflow or rule version supplied by the caller, sets \(\Theta_S(u)=(\kappa,H,\Delta,\mathsf{out},\mathcal A,u)\), and re-runs \(\mathsf{Derive}(\Theta_S(u))\) from the current execution record.
For a defined derivation, \(\mathsf{ReCut}(S,u,Z)\) recomputes the right-hand side of \cref{eq:checked-record-digest}, while \(\mathsf{Verify}_Z\) recomputes the fixed point, coverage proof, canonical automaton, and both digests.
The rules may take effect only if
\begin{equation}
\label{eq:update-premise}
\begin{gathered}
 \kappa=\kappa_Z,\qquad
 \mathsf{ReCut}(S,u,Z)=\mathsf{cut}_Z,\\
 \mathsf{Verify}_Z(W^\dagger_{H,u},B^\dagger_{H,u},
    \widehat B^\dagger_{H,u}),\\
 H.\eta=\eta^-,\qquad
 \mathsf{phase}(\eta^+)\in\{\bot,\mathsf{inactive}\}.
\end{gathered}
\end{equation}
\(\mathsf{Ready}(S,u,Y)\) holds when \(Y=\mathsf{Accepted}(\delta,W^\dagger,\widehat B^\dagger,Z)\), the current derivation is \(\delta\), and all state-dependent premises of \cref{eq:update-premise} hold.
The committing domain transaction makes \((\kappa_u,H_u,\mathcal A_u)\) current, closes \(\eta^-\), activates \(\eta^+\), assigns a rule entry to every target workflow call, starts the canonical automaton at \(q^0\), and stores the checked answer.
For this successor, write
\[
c_\delta^+=
(\mathsf{state}(\Theta_S(u)),H_u,u,
 W^\dagger_{H,u},\widehat B^\dagger_{H,u},Z).
\]
Only after activation does it expose \(\mathcal H_u^+\) and emit \(\mathsf{Edge}(u,[\mathcal H_u^+]_{\cong})\).
Old authorization data then fails, while \Retry still follows its stored authorization-record link.
A stale answer returns \(\mathsf{NoCommit}\) and must be recomputed from the current execution record.

\subsection{Atomic Events Preserve the Invariant}
\label{app:event-preservation}

Write \(A_1,\ldots,A_5\) for the five clauses of \(\mathsf{AgentSec}\) in their stated order.
The following matrix covers every state-changing transition after the initial rules take effect.
Initial registration occurs before the root safety check, and later additions use the registered-extension transition below.

\begin{table*}[t]
\centering
\scriptsize
\setlength{\tabcolsep}{3pt}
\begin{tabular}{@{}p{0.15\textwidth}p{0.29\textwidth}p{0.12\textwidth}p{0.36\textwidth}@{}}
\toprule
Event & Atomic change & New state & Why the invariant remains true\\
\midrule
\Fresh use &
Apply \(\mathsf{HStep}\) to \((G,T)\); append \(\Fresh(x,d)\) to global \(\chi\); advance \(\mathsf{ver},q\); append one authorization record and queued canonical request; remove the current rule entry. &
\((c,r\Fresh(x,d))\) &
\(A_1\): record and edit rules unchanged.
\(A_2\): rechecking after the partial execution.
\(A_3\): unique authorization record and request order.
\(A_4\): canonical derivative.
\(A_5\): remove exactly \(x\)'s rule entry.\\
\Alias use &
Apply \(\mathsf{HStep}\) to \((G,T)\); append \(\Alias(x,d)\) to global \(\chi\); advance \(\mathsf{ver},q\); add only the return-record link; remove the current rule entry. &
\((c,r\Alias(x,d))\) &
\(A_1\): the signed record still names the same tool action.
\(A_2\): rechecking after the partial execution.
\(A_3\): link names an earlier authorization record and the authorization sequence is unchanged.
\(A_4\): canonical derivative.
\(A_5\): remove exactly \(x\)'s rule entry.\\
Successful edit transition, six edit forms or registered extension &
Make derived \((\kappa^+,H^+,\mathcal A^+)\) and the checked automaton current; close \(\eta^-\); activate \(\eta^+\); publish new rule entries. &
\((c_\delta^+,\epsilon)\) &
\(A_1\): edit-rule correctness and result preservation.
\(A_2\): exact checker.
\(A_3\): prior authorizations and queued requests retained.
\(A_4\): start at \(q_\epsilon\).
\(A_5\): complete rule update.\\
Verified \(\mathsf{Invalid}\) or \(\mathsf{Reject}\) &
Return the failed derivation or rejection proof; no runtime-state change. &
\((c,r)\) &
All clauses unchanged; \(\gamma\) ties the answer to the checked execution record.\\
\(\mathsf{Checkpoint}(k,b)\) &
Record the current workflow and progress in a signed checkpoint; advance \(\mathsf{ver}\). &
\((c,r)\) &
\(A_1\): immutable well-typed extension.
\(A_2,A_4,A_5\): unchanged.
\(A_3\): exactly the permitted \(K\) extension.\\
\(\mathsf{Preload}\) &
Prepare content only in an inactive rule version assigned to no call. &
\((c,r)\) &
\(A_1\)--\(A_4\): unchanged.
\(A_5\): the version remains inactive and unavailable to calls.\\
Retry, retrieval, delivery &
Emit \(\mathsf{Return}(t,[v]_{\cong})\) for a stored value whose signature verifies; no runtime-state change. &
\((c,r)\) &
All clauses unchanged; retry additionally checks the same canonical request and prior-authorization link.\\
Denial &
Return \(\mathsf{deny}(x)\); no state change. &
\((c,r)\) &
All clauses unchanged.\\
\(\Dispatch(d)\) &
Mark exactly the oldest queued canonical request as sent. &
\((c,r)\) &
\(A_3\): sent requests are exactly the earliest queued requests, in order.
Other clauses unchanged.\\
Completion update &
Emit \(\mathsf{Settle}(d,[v]_{\cong})\), advance one authorization record phase, and fill its stable return record. &
\((c,r)\) &
\(A_3\): tool action, canonical request, creator, authorization label, and order are immutable.
Other clauses unchanged.\\
Stale or malformed answer &
Return \(\mathsf{NoCommit}\); no runtime-state change. &
\((c,r)\) &
All clauses unchanged.\\
Stop and restart &
Discard unfinished in-memory work and expose the last committed runtime state. &
\((c,r)\) &
All clauses unchanged because each atomic update appears either completely or not at all.\\
\bottomrule
\end{tabular}
\caption{Exhaustive event classes and preservation of the five \(\mathsf{AgentSec}\) clauses.}
\label{tab:app-event-matrix}
\end{table*}

\begin{lemma}[Atomic events are well defined]
\label{lem:app-event-derivative}
The atomic update induces a partial function \(\partial\) on runtime-state and event pairs modulo consistent renaming, written \(\langle V,e\rangle_{\cong}\).
If \(\mathsf{AgentSec}(S;c,r)\) and \(S\xrightarrow eS'\), the witness in \cref{tab:app-event-matrix} satisfies \(\mathsf{AgentSec}(S';c',r')\) and
\[
 [\mathsf{Sec}(S';c',r')]_{\cong}
 =\partial(\langle\mathsf{Sec}(S;c,r),e\rangle_{\cong}).
\]
\end{lemma}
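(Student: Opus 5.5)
The proof goes event class by event class, following the rows of \cref{tab:app-event-matrix}. First I would show that $\partial$ is a well-defined partial function on $\cong$-classes. Every transition rule is deterministic: $\mathsf{HStep}$, $\Resolve$, the derivation of \cref{lem:history-derivation}, the canonical automaton of \cref{lem:runtime-language}, and identifier allocation from $(\omega,\zeta,\mathsf{ver},u,\mathsf{role})$ all are. Every premise is stated over $\mathsf{Sec}(S;c,r)$ and the event alone. Hence renaming-equivariance, argued exactly as in \cref{lem:app-omission}, shows that $\cong$-equivalent pairs $\langle V,e\rangle$ produce $\cong$-equivalent successors. I define $\partial$ on a class as the class of the successor, and leave it undefined when a premise fails. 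The witness $(c',r')$ is read off the ``New state'' column.

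Second, I would discharge the state-preserving rows in one lemma. These are Invalid/Reject, denial, stale $\mathsf{NoCommit}$, retry/retrieval/delivery, $\mathsf{Preload}$, and restart, and each leaves every field that $A_1$--$A_5$ mention unchanged. Restart is the one that needs care. Linearizability says the exposed state is the last committed one, which satisfies the invariant by the induction hypothesis. $\Dispatch$ and completion updates touch only the queue and the return fields of authorization records. For these I check $A_3$ directly: the sent requests form a prefix of the queue order, and authorization records stay immutable apart from monotone phase and return fields. Checkpoint extends $K$ only, and its acceptance condition $\mathcal C_b=\Gamma_b/\mathsf{raw}(\chi_b)$ preserves well-formedness of $K$, as \cref{thm:exact-checkpoint} states.

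Third, I would treat the $\Fresh$ and $\Alias$ rows together, with $r'=ra$. The premise $\mathsf q(\eta)\xrightarrow{a}q'$ and \cref{lem:runtime-language} give $ra\in W$. Clause $A_4$ then holds because $q'$ is the derivative after $ra$. For $A_2$, I would use \cref{lem:recheck-coherence}: the stored $\widehat B^\dagger$ is unchanged, and the residual $\widehat B^\dagger/(ra)$ equals the fixed point recomputed from the current record, which is nonempty since $ra\in\Partial(\pi_2\widehat B^\dagger)$. The same lemma gives $A_3$. It matches $\mathsf{lab}(\Delta_{ra})$ with $\mathsf{lab}(\Delta)\mathsf{auth}(ra)$, so a $\Fresh$ appends exactly one record and an $\Alias$ appends none, consistent with the at-most-once property of \cref{lem:resolution}. $A_1$ holds because $\mathsf{HStep}$ preserves well-formedness: processed calls stay downward closed, and selecting a choice arm is recorded. $A_5$ holds because exactly $x$ is removed from both $\mathsf{entry}$ and $\beta$.

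The main obstacle is the successful-edit row, which is the only one that changes $c$. Here I would use the premise \cref{eq:update-premise}. From $\mathsf{ReCut}=\mathsf{cut}_Z$ and digest injectivity, the checked record equals the current one, so $\mathsf{Verify}_Z$ is a verification against the current state. \Cref{lem:history-derivation} then gives $A_1$, and \cref{lem:app-spec-bridge} gives $A_2$ for the new $c_\delta^+$ with $r'=\epsilon$. $A_3$ holds because the derivation preserves $\chi$, $\Delta$, $\mathsf{out}$ and $K$. $A_4$ holds because the automaton starts at $q_\epsilon$. $A_5$ holds because the same transaction closes $\eta^-$, activates $\eta^+$, and assigns $\beta'$ to every target call. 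Registered extensions go the same way, using \cref{lem:live-extension} in place of \cref{lem:history-derivation}. The delicate point is that the equation $[\mathsf{Sec}(S')]_{\cong}=\partial(\cdot)$ requires the fresh $\eta^+$, rule entries and tokens to be determined up to renaming. I would argue this from deterministic allocation together with $\mathsf{phase}(\eta^+)\in\{\bot,\mathsf{inactive}\}$.
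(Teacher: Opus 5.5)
Your proposal follows the paper's proof: renaming-equivariance of each event's precondition and update makes $\partial$ well defined on $\cong$-classes, and the rows of \cref{tab:app-event-matrix} are then discharged with the same lemmas (\cref{lem:resolution,lem:recheck-coherence,lem:runtime-language} for \Fresh/\Alias, and \cref{lem:history-derivation,lem:app-spec-bridge} plus the update premise for a successful edit). The one step the paper makes explicit that your $A_1$ argument for \Fresh/\Alias omits is that each current leaf stays synchronized with its branch progress, $(\Gamma_b/\mathsf{raw}(\chi_b))/x=\Gamma_b/\mathsf{raw}(\chi_b a)$ by associativity of quotients; this is what keeps clause~3 of well-formedness, and later Checkpoint acceptance, valid after a tool call.
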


\begin{proof}
All transition rules use typed equality, membership, order, canonical constructors, signature verification, and equality checks.
These operations are unchanged by consistently renaming private identifiers in the state and event.
For an event \(e\), let \(G_e(V)\) be its precondition and \(\mathsf U_e(V)\) its successor update when that precondition holds.
For every such renaming \(\pi\), \(G_e(V)\Longleftrightarrow G_{\pi e}(\pi V)\) and \(\mathsf U_{\pi e}(\pi V)=\pi\mathsf U_e(V)\).
Thus consistently renamed inputs have the same definedness and consistently renamed successors.
This proves that \(\partial\) is well defined.

For \Fresh and \Alias, \cref{lem:resolution,lem:recheck-coherence,lem:runtime-language} establish the first two rows of the matrix.
If \(x\) lies at \(b\), \(\mathsf{HStep}\) appends \(a\) to \(b\)'s progress, while the same atomic tool-call update appends \(a\) to the global resolved-call trace \(\chi\).
Associativity of language quotients gives
\[
 (\Gamma_b/\mathsf{raw}(\chi_b))/x
 =\Gamma_b/\mathsf{raw}(\chi_ba).
\]
Thus the updated leaf and any later Checkpoint remain synchronized with \(G\).
For a successful edit transition, \cref{lem:history-derivation,lem:live-extension,lem:app-spec-bridge,lem:runtime-language,lem:cut-linearization} establish the third row for all six edit forms and registered extension.
Each remaining rule changes only the fields shown in the matrix, and its row checks every invariant clause whose fields change.
Thus every successor has the displayed new state and preserves \(\mathsf{AgentSec}\).
\end{proof}

\begin{corollary}[Finite-sequence preservation]
\label{cor:app-finite-sequence}
Starting from any valid bootstrap, every finite sequence of the event classes in \cref{tab:app-event-matrix} ends in a state satisfying \(\mathsf{AgentSec}\).
The sequence length is unbounded, although every individual safety-check instance and event payload is finite.
For a finite word \(\bar e\), the iterated update is defined on \(\langle V,\bar e\rangle_{\cong}\), where one renaming acts on the initial state and the whole word.
\end{corollary}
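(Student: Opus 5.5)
The plan is induction on the length of the finite event word \(\bar e=e_1\cdots e_n\), with \cref{lem:app-event-derivative} serving as the single-step engine. For the base case I take the valid bootstrap state \(S_{\mathcal R}\) of \cref{lem:app-boot}, which satisfies \(\mathsf{AgentSec}(S_{\mathcal R};c_{\mathcal R},\epsilon)\). For the inductive step, suppose the prefix \(e_1\cdots e_i\) leads to a state \(S_i\) with \(\mathsf{AgentSec}(S_i;c_i,r_i)\). Every event class that can fire belongs to some row of \cref{tab:app-event-matrix}; the table claims exhaustiveness, and the table preamble says registration occurs only before bootstrap and later additions use the extension row. So \cref{lem:app-event-derivative} gives a witness \((c_{i+1},r_{i+1})\) with \(\mathsf{AgentSec}(S_{i+1};c_{i+1},r_{i+1})\). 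If \(e_{i+1}\) is not enabled at \(S_i\), the finite sequence is simply not a run and there is nothing to show. The state-preserving rows, namely denial, \(\mathsf{NoCommit}\), \(\mathsf{Invalid}\), \(\mathsf{Reject}\), and retry, keep \((c,r)\) fixed. Stop and restart exposes the last committed state, and that state already satisfies the invariant by the induction hypothesis together with transaction atomicity.

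For the claim about unbounded length, I would note that the induction is over an arbitrary natural number \(n\), so no bound on \(n\) is used. Finiteness of each instance comes from the state itself: at every reachable state the safety-check instance \(\Theta_{S_i}(u)\) is built from a finite well-formed record, and each event payload is finite by the model assumptions. This holds at every step because \(\mathsf{AgentSec}\) clause \(A_1\) includes well-formedness, and the preservation lemma keeps the record finite. Only finitely many calls, authorization records, and checkpoints are appended per event.

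The renaming statement is where I expect the main obstacle. The difficulty is to define the iterated update on \(\langle V,\bar e\rangle_{\cong}\) so that it depends only on the class and not on representatives. I would define \(\partial^*(\langle V,\epsilon\rangle)=\langle V\rangle\) and \(\partial^*(\langle V,e\bar e'\rangle)=\partial^*(\langle \partial(V,e),\bar e'\rangle)\), and then prove by induction on \(n\) that for every consistent renaming \(\pi\) we have \(\partial^*(\pi V,\pi\bar e)=\pi\,\partial^*(V,\bar e)\), with definedness agreeing on both sides.

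The delicate point is that freshly allocated private identifiers created at step \(i\), such as new rule versions, tokens, and clone IDs, must be renamed consistently with their later occurrences in \(\bar e\). To handle this, I would use the deterministic allocation from \((\omega,\zeta,\mathsf{ver},u,\mathsf{role})\) and the namespace supply \(n\): \(\pi\) acts on the supply, so it extends uniquely to the allocated identifiers. The one-step equivariance \(\mathsf U_{\pi e}(\pi V)=\pi\,\mathsf U_e(V)\) from the proof of \cref{lem:app-event-derivative} then composes along the word. This gives well-definedness of the iterated update on equivalence classes, and combining it with the invariant induction completes the corollary.
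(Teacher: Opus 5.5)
Your proposal is correct and matches the paper's proof. Both argue by induction on the event word, with \cref{lem:app-boot} as the base case and \cref{lem:app-event-derivative} supplying the invariant-preserving step together with the renaming equivariance that composes along the word. Your explicit handling of freshly allocated identifiers via the namespace supply simply spells out what the paper leaves implicit in that lemma's claim \(\mathsf U_{\pi e}(\pi V)=\pi\,\mathsf U_e(V)\).
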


\begin{proof}
The base case is \cref{lem:app-boot}, and \cref{lem:app-event-derivative} supplies the invariant-preserving inductive step and consistency under renaming the remaining events.
\end{proof}

\subsection{Events Outside the Agent's Control and Restarts}
\label{app:restart}

The resolved alphabet contains only tool calls checked by the runtime automaton.
Agent request arrival, scheduling, completion, delivery, and stop or restart events do not append a \Fresh or \Alias symbol.
Request arrival and scheduler choice do not change committed runtime state.
Completion updates and delivery emit modeled API observations but leave the checked partial execution unchanged.
\(\Dispatch\) is observable only when the oldest queued request is recorded as sent, and it also leaves the checked partial execution unchanged.
Therefore none of these events advances the automaton for tool calls.

\begin{lemma}[Other events preserve the checked partial execution]
\label{lem:app-uncontrollable}
If \(\mathsf{AgentSec}(S;c,r)\), \(e\) is one of the modeled events above, and \(S\xrightarrow eS'\), then the checked partial execution and canonical automaton state are unchanged and \(\mathsf{AgentSec}(S';c,r)\) holds.
\end{lemma}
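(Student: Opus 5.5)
The plan is a case analysis over the modeled events: Agent request arrival, scheduler choice, completion updates, delivery, $\Dispatch$, and stop or restart. For each event I will check which runtime fields it touches, following the corresponding row of \cref{tab:app-event-matrix}. The key observation is that the resolved alphabet of the canonical automaton contains only $\Fresh(x,d)$ and $\Alias(x,d)$ symbols. By the atomic tool-call rule \cref{eq:use-rule}, the only transition that changes $\mathsf q(\eta)$, $H.\chi$, $H.T$, $H.\beta$ or $\mathsf{entry}$ is a $\Use_x(m)$ step. No listed event is such a step, so none of them changes $r$, $\mathsf q$, $\mathsf{aut}$, $\mathsf{checked}$, the active version, or the current workflow. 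Hence $c$ and $r$ can be reused as the witness, and clauses $A_1$, $A_2$, $A_4$ and $A_5$ of \cref{def:agent-sec} carry over verbatim.

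\emph{Arrival, scheduling and restart.} Request arrival and scheduler choice do not change committed state, so $S'=S$. For stop and restart, linearizable transactions give that the exposed state $S'$ is the last committed state. By the induction of \cref{cor:app-finite-sequence}, that committed state already satisfies $\mathsf{AgentSec}$ for the same $(c,r)$. Unfinished in-memory work, such as a prepared but uncommitted rule update, lives only in inactive versions or volatile data. Discarding it therefore leaves $A_5$ intact.

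\emph{Dispatch, completion and delivery.} These change only authorization-record phases, return values and the sent-prefix of $\mathsf{out}$, so only $A_3$ needs real checking. For $\Dispatch(d)$, I will show that marking exactly the oldest queued request preserves the invariant that the sent requests are a prefix of $\mathsf{out}$ in new-authorization order. For a completion update, I will argue that each authorization record's action, canonical request, creator, label and position are immutable, since only phase and return fields advance. It follows that $\mathsf{lab}(\Delta)$, $D_\Delta$ and the $\chi$-to-record links are unchanged, and the records still match $r$ monotonically. Delivery and retrieval emit $\mathsf{Return}$ only and change nothing.

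The step I expect to be the main obstacle is making explicit that these non-Agent changes do not invalidate the stored checked data. $A_2$ verifies $Z$ against $\mathsf{record}_c$, the record at check time, not the current record. The current record $H$ does change under $\Dispatch$, which is exactly the $\mathbf E$ pair of \cref{tab:record-pairs}. I will argue that $A_2$ is stated relative to $\mathsf{record}_c$, while only the final-check premise \cref{eq:update-premise} compares against the current state. Since $\mathsf{lab}(\Delta)$ and $D_\Delta$ are unchanged, \cref{lem:recheck-coherence} still identifies the current safe set with $\widehat B^\dagger/r$, so the automaton state remains the correct derivative.
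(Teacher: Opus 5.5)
Your proposal is correct and follows the paper's own argument. You inspect each event via the rows of \cref{tab:app-event-matrix} and show that none touches $H.T$, $H.\chi$, the authorization sequence, $r$, or $\mathsf q(\eta)$, so only the monotone status and queue changes under $A_3$ need checking; your remark that $A_2$ is stated against $\mathsf{record}_c$ rather than the current record makes explicit what the paper leaves implicit.

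Two small repairs are needed. For restart, do not invoke \cref{cor:app-finite-sequence}: it presumes reachability from bootstrap and only yields some witness. Restarting at a committed boundary simply re-exposes $S$ (a $\tau$ self-loop, as in \cref{lem:app-restart}), so the hypothesis $\mathsf{AgentSec}(S;c,r)$ already gives the claim. Also, your premise that only \Use steps change $H.T$, $H.\beta$, or $\mathsf{entry}$ is false for rule updates; this is harmless only because no listed event is one. Similarly, \Dispatch changes $\mathsf{out}$, not $H$.
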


\begin{proof}
Request arrival and scheduling do not change committed state, while denial, retry, retrieval, delivery, and restart preserve the checked runtime state.
Request sending and completion updates change only the queued-request status, authorization-record status, or stored return record permitted by the executed-calls clause of \(\mathsf{AgentSec}\).
None changes \(H.T\), \(H.\chi\), the authorization sequence derived from \(\Delta\), \(r\), or \(\mathsf q(\eta)\), and the corresponding row of \cref{tab:app-event-matrix} preserves every other invariant clause.
\end{proof}

The restart lemma uses the linearizable atomic transactions required by \cref{sec:model}.
Let \(D\) be the committed runtime state, \(v\) the unfinished in-memory state, and \(\bar t=t_1\cdots t_n\) the finite sequence of domain transactions overlapping a stop, ordered by their linearization points.
Atomicity selects \(\bar t_{\leq k}=t_1\cdots t_k\), containing exactly the transactions that committed before the stop, and gives
\[
\mathsf{Recover}(D,v,\bar t)
 =\mathsf{commit}_{t_k}\circ\cdots\circ
   \mathsf{commit}_{t_1}(D),
\]
with \(k=0\) interpreted as \(D\).
No state reached after restart contains only part of one modeled atomic update.
In particular, a \Fresh transaction cannot expose an authorization record without its global trace, branch progress, and queued request.
A rule update cannot activate a new version without closing the old version and assigning every target rule entry.

\begin{lemma}[Restart preserves atomic commits]
\label{lem:app-restart}
Let \(S_0\to S_1\to\cdots\to S_n\) be the modeled path whose atomic transitions correspond to \(t_1,\ldots,t_n\) in the order above.
If committed state \(D\) refines \(S_0\), then restarting after exactly \(t_1,\ldots,t_k\) have committed yields a state refining \(S_k\).
At committed-state boundaries, stopping and restarting without another commit is consequently a \(\tau\) self-loop.
\end{lemma}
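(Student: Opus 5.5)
The plan is an induction on \(k\), with the linearizability assumption from \cref{sec:model} and the atomicity clauses of \cref{tab:app-event-matrix} as the only external ingredients.
Define ``\(D\) refines \(S\)'' as agreement on every committed field: the execution record \((\kappa,H,\Delta,\mathsf{out},\mathcal A)\), phases, rule entries, automata and states, and checked answers.
In-memory work, prepared but unpublished candidates, and restart bookkeeping are excluded, since \(\mathsf{obs}_{\rm api}\) hides them.

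The base case \(k=0\) is immediate.
By the definition of \(\mathsf{Recover}\), the recovered state is \(D\), which refines \(S_0\) by hypothesis.
Discarding \(v\) changes no committed field, because \(v\) holds only unfinished in-memory state.

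For the inductive step, assume the state after restarting with \(t_1,\ldots,t_{k-1}\) committed refines \(S_{k-1}\).
Then \(\mathsf{commit}_{t_k}\) is applied to that state.
I would argue by cases over the transition classes of \cref{tab:app-event-matrix}: \Fresh or \Alias use, successful edit transition, Checkpoint, Preload, \(\Dispatch\), and completion update.
Each \(t_k\) is the transaction implementing exactly one modeled atomic transition \(S_{k-1}\to S_k\).

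In each case, the transaction's premises (enabled call, active version, token check, \(\mathsf{Ready}\)) read only committed fields.
By refinement, those fields are equal in the recovered state and in \(S_{k-1}\), so the premises hold there too.
By linearizability the whole update is applied, so the result refines \(S_k\).
Lemma \cref{lem:app-event-derivative} gives functionality of the update modulo renaming, so the successor is determined up to \(\cong\).
For denials, \(\mathsf{NoCommit}\), and retries, which commit nothing, the refined state is unchanged, matching their state-preserving rows.

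The main obstacle is excluding partial visibility.
A recovered state must never contain a \Fresh authorization record without its \(\chi\) entry, branch progress, and queued request.
Likewise, an activated \(\eta^+\) must never appear with \(\eta^-\) still open or some target rule entry missing.
The plan here is to invoke the transaction mechanism's atomicity directly: each row of the table is one domain transaction, so its write set becomes visible entirely or not at all.
This property would then be phrased as a refinement obligation on each row, not re-proved from the implementation.

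Ordering is handled by linearizability.
Because transactions are applied in linearization order, the sequence of committed states is a prefix of the modeled path.
For transactions in independent domains, the commutation of \cref{lem:cut-linearization} lets us fix any consistent interleaving.

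The final sentence follows by taking no new commit.
Restarting at a committed boundary with \(k\) unchanged returns a state refining the same \(S_k\).
No \(\mathcal O_{\rm api}\) label is emitted, so under \(\mathsf{obs}_{\rm api}\) the stop and restart pair is a \(\tau\) self-loop.
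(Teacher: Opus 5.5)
Your proposal is correct and follows essentially the same route as the paper. Both argue by induction on \(k\) from \(D\) refining \(S_0\), identify each \(\mathsf{commit}_{t_i}\) with the update in the corresponding row of \cref{tab:app-event-matrix} (so refinement of \(S_{i-1}\) carries over to \(S_i\)), use the recovery equation to exclude partial updates, and get the self-loop by restarting with no further commit; your explicit definition of refinement and your check that each row's premises read only committed fields add detail the paper leaves implicit, not a different argument.
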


\begin{proof}
Induction on \(k\) starts from \(D\) refining \(S_0\).
At each step, the complete atomic update \(\mathsf{commit}_{t_i}\) is the update in the corresponding row of \cref{tab:app-event-matrix}, so it maps a concrete state refining \(S_{i-1}\) to one refining \(S_i\).
The recovery equation applies exactly the first \(k\) updates and no partial update, yielding the claim.
Once the LTS records the restarted state, stopping again changes only unfinished memory and is therefore a self-loop.
\end{proof}

\subsection{Agent-API Weak Bisimulation After Restart}
\label{app:bisimulation}

We now prove \cref{thm:runtime-correctness} by exhausting the observable and silent cases.
For runtime state \(S=(\kappa,H,\Delta,\mathsf{out},\mathcal A,\ldots)\), define \(\mathsf{Core}(S)=(\kappa,H,\Delta,\mathsf{out},\mathcal A)\) and \(\mathsf{Current}(H)=X_{\sem{H.T}}\).
For \(c=(\mathsf{record}_c,H_c^+,u,W,\widehat B,Z)\), define \(\mathsf{icut}(c)=(H_c^+,u,W,\widehat B)\).
For \(I=(K_I,c_I,r_I)\) with \(K_I=(\kappa_I,H_I,\Delta_I,\mathsf{out}_I,\mathcal A_I)\), let
\[
\begin{aligned}
\alpha_I(I)&=(K_I,c_I,r_I,H_I.\eta,H_I.\beta),\\
\alpha_S(S;c,r)&=(\mathsf{Core}(S),\mathsf{icut}(c),r,
 H.\eta,\mathsf{entry}|_{\mathsf{Current}(H)}).
\end{aligned}
\]
Define \(I\,\mathcal B\,S\) iff some \(c,r\) satisfy \(\mathsf{AgentSec}(S;c,r)\) and \(\alpha_I(I)=\alpha_S(S;c,r)\).
Write \(\xRightarrow{\ell}=\tau^*\ell\tau^*\) and \(\xRightarrow{\tau}=\tau^*\).
Fix \(I\,\mathcal B\,S\) with witness \((c,r)\).
Equality of \(\alpha_I(I)\) and \(\alpha_S(S;c,r)\) gives both machines the same \(\kappa\), current policy, execution record, authorization log, request queue, stored return records, checked safe execution set, current partial execution, rule version, and current rule entries.

\paragraph{Edit transitions}
The equality makes both machines form the same current \(\Theta_I(u)\) and record digest \(\gamma(\Theta_I(u))\).
If derivation is undefined, both sides return the same canonical \(\mathsf{Invalid}\) answer and leave their states unchanged.
If derivation exists but no safe execution set exists, \cref{lem:app-spec-bridge} and deterministic fixed-point checking make both sides return the same canonical \(\mathsf{Reject}\) answer and leave their states unchanged.
Otherwise the ideal machine enables its edit transition and, by \cref{lem:app-spec-bridge}, the runtime checker accepts the same edit and computes the same largest safe execution set.
The runtime takes an explicit \(\mathsf{Preload}\) transition that records the prepared answer in an inactive rule version, exposes no authorization data or Agent-visible answer, and preserves \(\mathsf{AgentSec}\).
The ideal side takes its single atomic edge, and both successors have the same \((\kappa_u,H_u,\mathcal A_u)\), checked safe execution set, empty processed-call sequence, active rule version, and target rule entries.
Deterministic allocation gives consistently renamed authorization data, so both expose \(\mathsf{Edge}(u,[\mathcal H_u^+]_{\cong})\).
Conversely, every successful runtime rule update has passed the edit-rule, fixed-point, and record-digest checks, so \cref{lem:app-spec-bridge} enables the matching ideal edge.
This argument applies uniformly to all six forms of Fork, Restore, and Merge and to the registered extension.

\paragraph{Tool calls and submitted requests}
For a submitted tool request \(m\), both automata compute the same \(m^\circ=\mathsf{canon}_\kappa(m)\).
The runtime authorization check verifies the issuing runtime, scope, workflow call, rule entry, tool action, source region, signed digest, current rule version, and equality \(m^\circ=\mathsf{inv}_\rho(d)\).
The ideal automaton requires the same rule entry, tool action, and request equality.
Both sides then inspect the same set \(D_\Delta\) of tool actions with authorization records, so they choose the same \(a\in\{\Fresh(x,d),\Alias(x,d)\}\).
By \cref{cor:app-exact-boundary,lem:runtime-language}, the runtime automaton enables \(a\) after \(r\) exactly when \(ra\in L^*\) in the ideal machine.

In the \Fresh case, both successors apply the same workflow update, append the same symbol to the global trace and branch progress, advance \(\mathsf{ver}\), create the same authorization record, and queue the same canonical request.
The runtime request queue stores \((d,\mathsf{inv}_\rho(d))\), while the ideal queue stores \((d,m^\circ)\).
The verified request equality makes these entries identical.
In the \Alias case, both successors apply the same workflow update, append the same symbol to the global trace and branch progress, and link the workflow call to the same stored return record without changing \(\Delta\) or the request queue.
Both successors remove \(x\)'s current rule entry.
Because the remaining workflow no longer contains \(x\), their entries for current calls remain equal.
In either case, \cref{lem:app-event-derivative} re-establishes \(\mathcal B\) with witness \((c,ra)\).

If any workflow call, rule entry, authorization datum, canonical request, rule version, branch, or automaton transition check fails, both automata fail.
Both machines emit \(\mathsf{deny}(x)\), preserve state, and remain related.

\paragraph{Races between a tool call and a rule update}
Tool calls and rule updates serialize on the same execution record and active rule version.
If a \Fresh use commits first, it changes \((G,T,\Delta,\chi,\mathsf{ver},\mathsf{out})\), all of which are covered by the digest of the checked record, so the prepared rule update becomes stale.
If an \Alias use commits first, it changes \((G,T,\chi,\mathsf{ver})\) even though \(\Delta\) and the request queue are unchanged, so the prepared rule update again becomes stale.
If the rule update commits first, it atomically closes \(\eta^-\), activates \(\eta^+\), refreshes every current rule entry and its authorization data, and exposes the new authorization data only after activation.
The old \Fresh or \Alias use then fails the current-rule-version automaton and is denied without progress.
The ideal atomic order has exactly the same two cases, so neither side allows a third race outcome.

\paragraph{Replies, sent requests, and silent steps}
A retry with valid authorization data, retrieval of that data, or delivery of a returned value leaves the runtime state unchanged.
Both machines follow the same stored link and expose \(\mathsf{Return}(t,[v]_{\cong})\).
\(\Dispatch(d)\) is enabled for the same oldest queued request and makes the same observable request-sent update.
\(\mathsf{Settle}(d,[v]_{\cong})\) is enabled for the same authorization record and produces the same stable returned value and normalized update.
\(\mathsf{Checkpoint}(k,b)\) takes matching visible steps and yields the same decision.
Runtime preload and internal \(\mathsf{NoCommit}\) answers are \(\tau\)-steps matched by zero ideal steps because \(\alpha_S\) omits inactive prepared data and \(\mathsf{obs}_{\rm api}\) hides internal runtime answers.
A stop and restart is matched by a \(\tau\) self-loop on each committed-state machine by \cref{lem:app-restart}.

\begin{theorem}[Agent-API weak bisimulation after restart, detailed]
\label{thm:app-bisimulation}
\(\mathcal B\) is a divergence-insensitive weak bisimulation under \(\mathsf{obs}_{\rm api}\).
\end{theorem}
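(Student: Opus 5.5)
The proof is a standard coinductive check of the relation \(\mathcal B\). We show that whenever \(I\,\mathcal B\,S\) with witness \((c,r)\), every transition of either machine is matched by a weak transition of the other with the same \(\mathsf{obs}_{\rm api}\) label, and that the successors are again related. The starting point is the equality \(\alpha_I(I)=\alpha_S(S;c,r)\). It gives both machines the same core record, checked safe execution set, current partial execution \(r\), active rule version, and rule entries for current calls. Every decision either machine makes is a deterministic function of these shared fields, or of fields fixed by them through \(\mathsf{AgentSec}\): the automaton state by clause \(A_4\), and the phase and entries by \(A_5\). Each matching step therefore reduces to showing that the same premise holds on both sides. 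We then re-establish \(\mathcal B\) with the invariant supplied by \cref{lem:app-event-derivative}.

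We proceed by case analysis over the exhaustive event classes of \cref{tab:app-event-matrix}, in the order of the preceding paragraphs.

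\emph{Edits.} Both machines form the same \(\Theta_I(u)\). \(\mathsf{Invalid}\) and \(\mathsf{Reject}\) are state-preserving and have identical canonical answers by \cref{lem:app-spec-bridge}. For acceptance, the runtime's \(\mathsf{Preload}\) is matched by zero ideal steps, because \(\alpha_S\) ignores inactive versions. The runtime's committing update is matched by the ideal \(\mathsf{Edge}\). In the converse direction, a committed runtime update passed \(\mathsf{Ready}\), so \cref{cor:app-exact-boundary} enables the ideal edge. The successors are related with witness \((c_\delta^+,\epsilon)\).

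\emph{Tool calls.} \cref{cor:app-exact-boundary} and \cref{lem:runtime-language} show that the automaton enables \(a\) after \(r\) iff \(ra\in L^*\). The shared \(D_\Delta\) fixes the \(\Fresh\)/\(\Alias\) choice, and request equality makes the two queue entries identical. The new witness is \((c,ra)\).

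\emph{Other events.} Denials, retries, \(\Dispatch\), \(\mathsf{Settle}\), and Checkpoint are matched step for step using \cref{lem:app-uncontrollable} and \cref{thm:exact-checkpoint}. Stale \(\mathsf{NoCommit}\) answers are \(\tau\)-steps matched by no move. Stop and restart is matched by a \(\tau\) self-loop via \cref{lem:app-restart}.

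Races between a call and a rule update are resolved by \cref{lem:cut-linearization}. Exactly two interleavings exist on the runtime side, and each coincides with one of the ideal machine's two atomic orders. No third observable outcome exists to match.

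For divergence insensitivity, we only need weak matching in which \(\tau^*\) absorbs internal steps. The only runtime \(\tau\)-steps are preload, internal \(\mathsf{NoCommit}\), and restart. Each preserves \(\alpha_S\) and so stays related to the same ideal state, and no divergence condition has to be discharged.

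The main obstacle is the edit case. We must show that the runtime's two-phase path (preload, then a final recheck, then commit) never exposes a visible state that the ideal machine's single atomic edge cannot match. In particular, a preload followed by an intervening call must leave the ideal machine at a state where a later runtime commit either fails or corresponds to a fresh ideal edge. We handle this by having \(\mathcal B\) ignore inactive prepared data. Because the digest in \cref{eq:checked-record-digest} covers every field that a \(\Fresh\) or \(\Alias\) use changes, any interleaved call makes \(\mathsf{ReCut}\) fail, and the runtime returns \(\mathsf{NoCommit}\) as a \(\tau\)-step. A commit can therefore happen only from a state whose core record equals the one checked, which is exactly the condition for the ideal edge to be enabled.
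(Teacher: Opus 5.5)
Your proposal follows the paper's proof essentially step for step. It uses the same relation \(\mathcal B\) via \(\alpha_I=\alpha_S\) and the same case split: \(\mathsf{Preload}\) as a \(\tau\)-step with no ideal move, commit matched by the ideal \(\mathsf{Edge}\), tool calls via \cref{cor:app-exact-boundary} and \cref{lem:runtime-language}, races via the record digest, and preload, \(\mathsf{NoCommit}\) and restart absorbed by \(\tau^*\), with \cref{lem:app-event-derivative} restoring the relation. One small precision: the ideal edge is enabled by \(\widehat B^\dagger\ne\varnothing\) at the current record, which \(\mathsf{Verify}_Z\) recomputes, so the digest check guarantees that condition rather than being ``exactly'' it.
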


\begin{proof}
The transition rules in \cref{sec:ideal,sec:enforcement} and the matrix in \cref{tab:app-event-matrix} cover every possible case.
For every step from either side, the corresponding case constructs a path with observation \(\mathsf{obs}_{\rm api}(\ell)\) and a successor related by the witness in that matrix.
The construction is symmetric because the ideal and runtime automata accept the same tool calls, while runtime-only steps preserve their common visible state.
Arbitrary repetition of silent preload, internal \(\mathsf{NoCommit}\), or restart steps is ignored, which is precisely divergence-insensitive weak matching.
\end{proof}

\subsection{Executable-Artifact Evaluation}
\label{app:artifact-bounds}

The performance script runs every sample in a fresh CPython 3.12 process and reports the median of five repetitions on a 16-vCPU AMD EPYC 7R12.
The 19 paper-specific tests exercise indexed pomset resolution, the greatest fixed point, and finite fixtures for all six constructors.
The executable model covers the checker core and all six edit fixtures, while the formal proofs establish immutable edit rules, authority-approved result removal, and the complete rule-update protocol for each policy domain.

\clearpage
\section*{AI Use Acknowledgment}
OpenAI Codex was used throughout the manuscript for initial prose drafting and revision, and in the executable artifact for code scaffolding and test generation.
It also served as an interactive aid for literature discovery, counterexample search, and the discussion and refinement of the formal theory, including definitions, assumptions, theorem statements, and proof structure.
The authors chose and directed the research, made all final theoretical and methodological decisions, independently verified the cited sources, theorem statements, proofs, code, and experimental results, and remain responsible for every claim, proof, result, and citation.

\bibliographystyle{IEEEtran}
\IEEEtriggeratref{14}
\bibliography{references}

\end{document}